\documentclass[11pt,a4paper]{article}

\usepackage{amsmath,amssymb,amsthm}
\usepackage{amsfonts}
\usepackage{mathtools}
\usepackage{multirow}
\usepackage{placeins}
\usepackage{array}
\usepackage{tikz}
\usepackage[margin=2.5cm]{geometry}
\usepackage[colorlinks=true,linkcolor=blue,citecolor=red,urlcolor=blue]{hyperref}
\allowdisplaybreaks

\newtheorem{theorem}{Theorem}[section]
\newtheorem{corollary}[theorem]{Corollary}
\newtheorem{lemma}[theorem]{Lemma}
\newtheorem{proposition}[theorem]{Proposition}
\theoremstyle{definition}
\newtheorem{definition}[theorem]{Definition}
\newtheorem{example}[theorem]{Example}

\newcommand{\Z}{\mathbb{Z}}
\newcommand{\C}{{\mathcal C}}
\newcommand{\cH}{{\mathcal H}}
\newcommand{\cB}{{\mathcal B}}
\newcommand{\cP}{{\mathcal P}}
\newcommand{\cS}{{\mathcal S}}
\newcommand{\cF}{{\mathcal F}}
\newcommand{\zero}{{\mathbf{0}}}
\newcommand{\one}{{\mathbf{1}}}
\newcommand{\bp}{{\mathbf{p}}}
\newcommand{\uu}{\mathbf{u}}
\newcommand{\vv}{\mathbf{v}}
\newcommand{\ww}{\mathbf{w}}
\newcommand{\xx}{\mathbf{x}}
\newcommand{\yy}{\mathbf{y}}
\newcommand{\zz}{\mathbf{z}}
\newcommand{\ba}{\mathbf{a}}
\newcommand{\rank}{\operatorname{rank}}
\newcommand{\kernel}{\operatorname{ker}}
\newcommand{\K}{\operatorname{K}}
\newcommand{\bH}{\bar H}
\newcommand{\bcH}{\bar{\mathcal H}}
\newcommand{\bA}{\bar A}
\newcommand{\hvv}{\hat{\mathbf v}}
\newcommand{\hww}{\hat{\mathbf w}}
\newcommand{\huu}{\hat{\mathbf u}}

\begin{document}

\title{\textbf{Nonequivalence of $\mathbb{Z}_p\mathbb{Z}_{p^2}$-linear and $\mathbb{Z}_{p^s}$-linear
generalized Hadamard codes for odd primes}\thanks{This work has been granted by the Juan de la Cierva 2024 grant (JDC2024-053082-I), funded by Ministerio de Ciencia, Innovación y Universidades (MICIU/AEI/10.13039/501100011033) and co-funded by the European Social Fund Plus (FSE+).}}

\author{
Dipak K. Bhunia\\
\small Departament de Matem\`atiques,\\ \small Universitat Polit\`ecnica de Catalunya,\\
\small Barcelona, Spain
}
\date{}

\maketitle

\begin{abstract}
The $\mathbb{Z}_p\mathbb{Z}_{p^2}$-additive codes are subgroups of
$\mathbb{Z}_p^{\alpha_1}\times\mathbb{Z}_{p^2}^{\alpha_2}$, and they can be seen as linear
codes over $\mathbb{Z}_p$ when $\alpha_2=0$, as $\mathbb{Z}_{p^2}$-additive codes when
$\alpha_1=0$, and as $\mathbb{Z}_2\mathbb{Z}_4$-additive codes when $p=2$.  A
$\mathbb{Z}_p\mathbb{Z}_{p^2}$-linear generalized Hadamard (GH) code is a GH code over $\mathbb{Z}_p$
which is the Gray map image of a $\mathbb{Z}_p\mathbb{Z}_{p^2}$-additive code.  For every prime
$p$, there are recursive constructions of $\mathbb{Z}_p\mathbb{Z}_{p^2}$-linear GH codes
$H_p^{t_1,t_2}$ with $\alpha_1\neq0$ and $\alpha_2\neq0$, and of
$\mathbb{Z}_{p^s}$-linear GH codes $\bar H_p^{t_1,\dots,t_s}$.  We show that the kernel
blocks of a nonlinear code $H_p^{t_1,t_2}$ give equivalent punctured codes, and
that these punctured codes are linear.  Any of them is called the derived code
of $H_p^{t_1,t_2}$.  We also determine the ranks of the codes $H_p^{t_1,t_2}$
and $\bar H_p^{t_1,t_2}$.  For odd $p$, the rank of $H_p^{t_1,t_2}$ depends on the
Bernoulli numbers modulo $p$.  By comparing the derived codes and the ranks, we
show that, for odd $p$, no code $H_p^{t_1,t_2}$ is equivalent to any of these 
$\mathbb{Z}_{p^s}$-linear GH codes.  For $p=2$, a nonlinear code $H_2^{t_1,t_2}$ is
equivalent to a $\mathbb{Z}_{2^s}$-linear Hadamard code if and only if $t_2\geq2$.
Finally, we count the nonequivalent codes in the union of both families.
\end{abstract}

\medskip
\noindent\textbf{Keywords:} Generalized Hadamard code,
$\mathbb{Z}_p\mathbb{Z}_{p^2}$-linear code, $\mathbb{Z}_{p^s}$-linear code, Gray map, kernel, rank,
kernel block, derived code, equivalence, classification, Bernoulli number.

\smallskip
\noindent\textbf{2020 Mathematics Subject Classification:} 94B25, 94B60.

\section{Introduction}\label{sec:intro}

Let $\Z_{p^s}$ be the ring of integers modulo $p^s$ with $p$ prime and
$s\geq1$.  The set of $n$-tuples over $\Z_{p^s}$ is denoted by $\Z_{p^s}^n$,
and its elements are called vectors.  We denote the entries of a vector by the
same letter with subscripts, in normal font when the vector is written in bold.
For instance, $\xx=(x_1,\dots,x_n)$ and $y=(y_1,\dots,y_n)$.  A code over
$\Z_p$ of length $N$ is a nonempty subset of $\Z_p^N$, and it is linear if it
is a subspace of $\Z_p^N$.  The Hamming distance between two vectors of
$\Z_p^N$ is the number of coordinates in which they differ.  The minimum
distance of a code over $\Z_p$ is the smallest Hamming distance between two
different codewords.  A $\Z_{p^s}$-additive code is a subgroup of $\Z_{p^s}^n$,
and a $\Z_p\Z_{p^2}$-additive code is a subgroup of
$\Z_p^{\alpha_1}\times\Z_{p^2}^{\alpha_2}$.  A $\Z_p\Z_{p^2}$-additive code is
a linear code over $\Z_p$ when $\alpha_2=0$, a $\Z_{p^2}$-additive code when
$\alpha_1=0$, and a $\Z_2\Z_4$-additive code when $p=2$.  A $\Z_{p^s}$-additive
code $\C\subseteq\Z_{p^s}^n$ is isomorphic, as an abelian group, to
$$
\Z_{p^s}^{t_1}\times\Z_{p^{s-1}}^{t_2}\times\cdots\times\Z_p^{t_s}
$$
for unique nonnegative integers $t_1,\dots,t_s$, and we say that $\C$ is of
type $(n;t_1,\dots,t_s)$.  Similarly, a $\Z_p\Z_{p^2}$-additive code
$\C\subseteq\Z_p^{\alpha_1}\times\Z_{p^2}^{\alpha_2}$ is isomorphic to
$\Z_{p^2}^{t_1}\times\Z_p^{t_2}$ for unique nonnegative integers $t_1$ and
$t_2$, and we say that $\C$ is of type $(\alpha_1,\alpha_2;t_1,t_2)$.  Then,
$|\C|=p^{2t_1+t_2}$.  The Gray map from $\Z_{p^s}$ to
$\Z_p^{p^{s-1}}$ considered in
\cite{ZpsEquivalence,ZpZp2Construction,HadamardZps,ZpZp2Classification} is
recalled in Section \ref{sec:prelim}.  It is given in \cite{GrayIsometry}, and,
for $p=2$, it is Carlet's Gray map \cite{Carlet}.  It is also a particular case
of the Gray maps given in \cite{ShiKrotov2019}.  The Gray map transforms
$\Z_{p^s}$-additive codes and $\Z_p\Z_{p^2}$-additive codes into codes over
$\Z_p$, which are called $\Z_{p^s}$-linear codes and $\Z_p\Z_{p^2}$-linear
codes, respectively.  We say that they have the same type as the
corresponding additive codes.

We denote by $\zero$ and $\one$ the all-zero and the all-one vectors, and by
$e_k$ the vector whose $k$th entry is $1$ and whose other entries are $0$.  The
length of these vectors is always clear from the context.  Let $\cS_N$ be the
symmetric group of permutations on $\{1,\dots,N\}$, and let
$\Z_p^*=\Z_p\setminus\{0\}$.  For $\pi\in\cS_N$ and
$\ba=(a_1,\dots,a_N)\in(\Z_p^*)^N$, we define the monomial map $\mu_{\pi,\ba}$
of $\Z_p^N$ by $\mu_{\pi,\ba}(\xx)_{\pi(i)}=a_ix_i$ for $1\leq i\leq N$.  Thus,
the entry of $\xx$ at the position $i$ is multiplied by $a_i$ and moved to the
position $\pi(i)$.  When $\ba=\one$, we simply write $\pi$ for
$\mu_{\pi,\one}$.  Two codes $C_1$ and $C_2$ over $\Z_p$ of length $N$ are said
to be equivalent if there are a vector $\yy\in\Z_p^N$ and a monomial map $\mu$
such that
$$
C_2=\{\yy+\mu(\xx):\ \xx\in C_1\}.
$$
They are said to be permutation equivalent if moreover $\yy=\zero$ and $\mu$ is
a permutation.  In \cite{ZpZp2Construction,HadamardZps,ZpZp2Classification},
two codes are called equivalent when they differ by a monomial map, and in
\cite{ZpsEquivalence,EquivZ2s} when they differ by a translation and a
permutation.  Both notions are particular cases of the one above.  Therefore,
the nonequivalence results of this paper hold for each of them.

The rank of a code $C$ over $\Z_p$, denoted by $\rank(C)$, is the dimension
of its linear span $\langle C\rangle$.  Following \cite{BGH83,pKernel}, the
kernel of $C$ is
$$
\K(C)=\{\xx\in\Z_p^N:\ \xx+C=C\},
$$
and its dimension is denoted by $\kernel(C)$.  If $\zero\in C$, then $\K(C)$
is a linear subcode of $C$.  If $C$ is linear, then
$\K(C)=C=\langle C\rangle$.  Two equivalent codes containing $\zero$ have the
same rank and the same dimension of the kernel \cite{FullClass}.  Hence,
these parameters can be used to distinguish between nonequivalent codes.

A generalized Hadamard (GH) matrix $H(p,\lambda)=(h_{ij})$ of order $p\lambda$
over $\Z_p$ is a $p\lambda\times p\lambda$ matrix with entries from $\Z_p$ with
the following property \cite{jungnickel1979}.  For every $i$ and $j$ with
$1\leq i<j\leq p\lambda$, the multiset $\{h_{ik}-h_{jk}:1\leq k\leq p\lambda\}$
contains every element of $\Z_p$ exactly $\lambda$ times.  Two GH matrices are
equivalent if one of them can be obtained from the other by permuting rows and
columns, and by adding the same element of $\Z_p$ to all the entries of a row
or of a column.  Hence, every GH matrix is equivalent to a normalized one, that
is, to one whose first row and first column contain only zeros.  If $H$ is a
normalized GH matrix and $F_H$ is the set of its rows, then the code
$$
C_H=\bigcup_{\nu\in\Z_p}(F_H+\nu\one)
$$
is called a generalized Hadamard (GH) code over $\Z_p$
\cite{dougherty2015ranks}.  A GH code over $\Z_p$ of length $N$ has $pN$
codewords and minimum distance $(p-1)N/p$ \cite{dougherty2015ranks}.  Thus, any
two different codewords of a GH code over $\Z_p$ differ in at least a
proportion $(p-1)/p$ of the coordinates.  A $\Z_{p^s}$-linear code which is a
GH code is called a $\Z_{p^s}$-linear GH code, and similarly for
$\Z_p\Z_{p^2}$-linear codes.  For $p=2$, GH codes are the usual binary Hadamard
codes, and the $\Z_{2^s}$-linear and $\Z_2\Z_4$-linear GH codes are called
$\Z_{2^s}$-linear and $\Z_2\Z_4$-linear Hadamard codes.

For $p=2$ and $t\geq3$, the $\Z_2\Z_4$-linear Hadamard codes of length $2^t$
with $\alpha_1=0$ and with $\alpha_1\neq0$ are classified in
\cite{Kro:2001:Z4_Had_Perf,PRV06}.  There are $\lfloor(t-1)/2\rfloor$ and
$\lfloor t/2\rfloor$ nonequivalent such codes, respectively.  Moreover, every
$\Z_4$-linear Hadamard code is equivalent to a $\Z_2\Z_4$-linear Hadamard code
with $\alpha_1\neq0$ \cite{KV2015}.  Hence, for every even $t\geq4$, the codes
of exactly one of the $\lfloor t/2\rfloor$ classes of the latter codes are not
equivalent to any $\Z_4$-linear Hadamard code.  In \cite{EquivZ2s}, it is shown
that, for even $t$ with $4\leq t\leq10$, these codes are not equivalent to any
$\Z_{2^s}$-linear Hadamard code either.  The same statement for every even
$t\geq12$ is proposed there as further research.  The $\Z_{2^s}$-linear
Hadamard codes are also studied in \cite{KernelZ2s,fernandez2019mathbb}.  The
$\Z_2\Z_4\Z_8$-linear Hadamard codes are studied in
\cite{Z2Z4Z8Profile,OthersProfile,Z2Z4Z8Construction,Z2Z4Z8Linearity}.  In
\cite{OthersProfile}, it is shown that no nonlinear $\Z_2\Z_4\Z_8$-linear Hadamard
code is equivalent to a $\Z_4$-linear, a $\Z_2\Z_4$-linear or a
$\Z_{2^s}$-linear Hadamard code of the same length.

For any prime $p$, a recursive construction of $\Z_{p^s}$-linear GH codes
$\bH_p^{t_1,\dots,t_s}$ of length $p^t$, where $t+1=\sum_{i=1}^{s}(s-i+1)t_i$,
is given in \cite{HadamardZps,KernelZ2s}.  For $p=2$, every $\Z_{2^s}$-linear
Hadamard code is equivalent to one of these codes \cite{KernelZ2s,Krotov:2007}.
For odd $p$, it is not known whether the same holds.  As in
\cite{FullClass,ZpsEquivalence,HadamardZps}, by the $\Z_{p^s}$-linear GH codes
of length $p^t$ we mean the codes $\bH_p^{t_1,\dots,t_s}$ of length $p^t$, with
$s\geq1$.  Their linearity and their kernel are determined in
\cite{KernelZ2s,Kro:2001:Z4_Had_Perf,PRV06} for $p=2$, and in
\cite{HadamardZps} for odd $p$.  Some equivalences among them are given in
\cite{ZpsEquivalence,EquivZ2s}, and their full classification, for every prime
$p$, is obtained in \cite{FullClass}.

As in \cite{ZpZp2Construction,ZpZp2Classification}, we focus on the
$\Z_p\Z_{p^2}$-linear GH codes with $\alpha_1\neq0$ and $\alpha_2\neq0$, since
the codes with $\alpha_1=0$ are $\Z_{p^2}$-linear codes, and those with
$\alpha_2=0$ are linear codes over $\Z_p$.  A recursive construction of such
codes $H_p^{t_1,t_2}$ of type $(\alpha_1,\alpha_2;t_1,t_2)$, with
$t_1,t_2\geq1$, is given in \cite{ZpZp2Construction}.  Their length is $p^t$
with $t=2t_1+t_2-1$.  For $p=2$, they are the $\Z_2\Z_4$-linear Hadamard codes
with $\alpha_1\neq0$ given in \cite{PRV06,RSV08}.  It is also shown in
\cite{ZpZp2Construction} that several other recursive constructions give
permutation equivalent codes.  For odd $p$, it is proved in
\cite{ZpZp2Classification} that the codes $H_p^{t_1,t_2}$ are nonlinear, and
that the dimension of the kernel of $H_p^{t_1,t_2}$ is $t_1+t_2$.  Hence, there
are exactly $\lfloor t/2\rfloor$ nonequivalent such codes of length $p^t$.  It
is also shown in \cite{ZpZp2Classification} that, for odd $p$, the codes
$H_p^{1,t-1}$ and $H_p^{2,t-3}$ are not equivalent to any $\Z_{p^s}$-linear GH
code of the same length.  For $p=3$ and $2\leq t\leq8$, the values of the rank
and the dimension of the kernel show that no code $H_3^{t_1,t_2}$ of length
$3^t$ is equivalent to a $\Z_{3^s}$-linear GH code
\cite{ZpZp2Construction,ZpZp2Classification}.  Proving this for every odd prime
$p$ and every length was proposed in
\cite{ZpZp2Construction,ZpZp2Classification} as further research.

In this paper, we solve this problem, and we also study the case $p=2$.  Our
main tool is the derived code introduced in \cite{FullClass}.  Let $C$ be a
code over $\Z_p$ containing the all-zero vector.  Two coordinates of $C$ lie
in the same kernel block when the vectors of the kernel of $C$ which vanish
at one of them are exactly those which vanish at the other one.  When the
codes obtained by puncturing $C$ outside its kernel blocks are pairwise
equivalent, any of them is called the derived code of $C$.  It is an
equivalence invariant.  For $p=2$, the ranks of these punctured codes form
the kernel-block rank profile used in \cite{Z2Z4Z8Profile,OthersProfile}.  Every
nonlinear $\Z_{p^s}$-linear GH code is equivalent to a code
$\bH_p^{u_1,\dots,u_r}$ with $r\geq2$ and $u_1\geq2$ \cite{FullClass}.  For
odd $p$, the derived code of this code is
$\bH_p^{u_1,\dots,u_{r-2},u_{r-1}+1}$, and it is linear only when $r=2$.

We show that every kernel block of a nonlinear code $H_p^{t_1,t_2}$ has
$p^{t_1}$ coordinates, and that the derived code of $H_p^{t_1,t_2}$ is the
linear GH code $\bH_p^{t_1+1}$.  For odd $p$, it follows that, if
$H_p^{t_1,t_2}$ were equivalent to a $\Z_{p^s}$-linear GH code, then we would
have $t_1\geq2$, and $H_p^{t_1,t_2}$ would be equivalent to the
$\Z_{p^2}$-linear GH code $\bH_p^{t_1,t_2}$.  These two codes have the same
length and the same dimension of the kernel.  They also have the same number of
kernel blocks, of the same size, and the same derived code.  Figure
\ref{fig:blocks} shows their kernel blocks for $p=3$ and $(t_1,t_2)=(2,1)$.  In
order to separate them, we regard each coordinate of both codes as a function
of the coefficients which define the codewords, and we describe these functions
by polynomials over $\Z_p$.  In this way, we determine the ranks of
$H_p^{t_1,t_2}$ and $\bH_p^{t_1,t_2}$ for every prime $p$.  We show that the
rank of $\bH_p^{t_1,t_2}$ is smaller than the rank of $H_p^{t_1,t_2}$ for every
odd prime $p$.  Therefore, for odd $p$, the codes $H_p^{t_1,t_2}$ are not
equivalent to any $\Z_{p^s}$-linear GH code.  In particular, unlike the binary
case \cite{KV2015}, no $\Z_{p^2}$-linear GH code is equivalent to a code
$H_p^{t_1,t_2}$.

The rank of $H_p^{t_1,t_2}$ depends on the coefficients of the polynomial over
$\Z_p$, of degree at most $p-1$, whose value at the residue of
$x\in\{0,\dots,p-1\}$ is the residue of $(x^p-x)/p$ modulo $p$.  For odd $p$,
we show that the coefficients of the monomials of degree at least $2$ are given
by Bernoulli numbers modulo $p$.  Hence, the rank of $H_p^{t_1,t_2}$ depends on
the even integers $m$ with $2\leq m\leq p-3$ such that $p$ divides the
numerator of the Bernoulli number $B_m$ in lowest terms.  An odd prime $p$ is
regular if there is no such $m$.  For regular primes, we obtain a closed
formula for the rank of $H_p^{t_1,t_2}$.

For $p=2$, the same argument shows that, if a nonlinear code $H_2^{t_1,t_2}$
is equivalent to a $\Z_{2^s}$-linear Hadamard code, then it is equivalent to
one of two codes.  The first one is the $\Z_4$-linear Hadamard code
$\bH_2^{t_1+1,t_2-2}$, when $t_2\geq2$, and the second one is the
$\Z_8$-linear Hadamard code $\bH_2^{2,t_1-3,t_2}$, when $t_1\geq3$.  The second
code is excluded by the rank, and the first one is equivalent to
$H_2^{t_1,t_2}$ \cite{KV2015}.  Hence, a nonlinear code
$H_2^{t_1,t_2}$ is equivalent to a $\Z_{2^s}$-linear Hadamard code if and only
if $t_2\geq2$.  This answers the question of \cite{EquivZ2s} for every even
$t\geq4$.

Finally, we count the equivalence classes of the union of both families.  For
odd $p$ and $t\geq2$, there are exactly $\cP(t+1)-\cP(t)+\lfloor t/2\rfloor$
classes, where $\cP(n)$ is the number of partitions of $n$.  For $p=2$ and
$t\geq3$, there are exactly $\cP(t+1)-\cP(t)$ classes if $t$ is even, and
$\cP(t+1)-\cP(t)-1$ classes if $t$ is odd.  Figure \ref{fig:classes} shows
these classes for $t=8$.

The paper is organized as follows.  In Section \ref{sec:prelim}, we recall
the Gray map and the known results on the codes $\bH_p^{t_1,\dots,t_s}$ and
$H_p^{t_1,t_2}$.  In Section \ref{sec:blocks}, we recall the kernel blocks
and the derived code from \cite{FullClass}, and we determine the kernel
blocks and the derived code of the nonlinear codes $H_p^{t_1,t_2}$.  In
Section \ref{sec:rank}, we determine the ranks of $H_p^{t_1,t_2}$ and
$\bH_p^{t_1,t_2}$ for every prime $p$.  In Section \ref{sec:main}, we prove
the main result for odd $p$, and we count the nonequivalent codes.  In
Section \ref{sec:binary}, we study the case $p=2$.  Finally, in Section
\ref{sec:conclusions}, we give some conclusions and further research on this
topic.

\section{Preliminaries}\label{sec:prelim}

In this section, we fix the notation, and we recall the results on the codes
$\bH_p^{t_1,\dots,t_s}$ and $H_p^{t_1,t_2}$ that we will use.  Throughout the
paper, $p$ is a prime.  For two vectors $x=(x_1,\dots,x_n)$ and
$y=(y_1,\dots,y_n)$ over a ring, we write
$$
\langle x,y\rangle=\sum_{k=1}^{n}x_ky_k .
$$
The order of a vector $\uu$ in $\Z_{p^s}^n$, or in
$\Z_p^{\alpha_1}\times\Z_{p^2}^{\alpha_2}$, is the smallest positive integer
$k$ such that $k\uu=\zero$.  The invariance of the rank and the kernel
mentioned in Section \ref{sec:intro} is given by the following lemma.

\begin{lemma}[\cite{FullClass}]\label{lem:invariance}
Let $C,D\subseteq\Z_p^N$ be codes with $\zero\in C$ and $\zero\in D$, and
assume that $D=\yy+\mu(C)$ for some $\yy\in\Z_p^N$ and some monomial map
$\mu$.  Then, $\langle D\rangle=\mu(\langle C\rangle)$ and
$\K(D)=\mu(\K(C))$.  In particular, $C$ and $D$ have the same rank and the
same dimension of the kernel, and $C$ is linear if and only if $D$ is linear.
\end{lemma}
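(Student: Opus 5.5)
The key step is to show that the translation vector $\yy$ must lie in $\mu(C)$. From that, both claims reduce to linearity of the monomial map $\mu$. Since $\zero\in D=\yy+\mu(C)$, there is some $\xx_0\in C$ with $\yy+\mu(\xx_0)=\zero$. Because $\mu=\mu_{\pi,\ba}$ is a linear bijection of $\Z_p^N$, we get $\yy=-\mu(\xx_0)=\mu(-\xx_0)$. Hence
$$
D=\mu(C-\xx_0),
$$
where $C-\xx_0=\{\xx-\xx_0:\ \xx\in C\}$ is again a code containing $\zero$.

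\emph{Linear span.} For $\langle D\rangle$, linearity of $\mu$ gives $\langle D\rangle=\mu(\langle C-\xx_0\rangle)$. It then suffices to show that $\langle C-\xx_0\rangle=\langle C\rangle$.
\begin{itemize}
\item Since $\xx_0\in C$, every vector $\xx-\xx_0$ lies in $\langle C\rangle$, so $\langle C-\xx_0\rangle\subseteq\langle C\rangle$.
\item Conversely, $\zero\in C$ gives $-\xx_0\in C-\xx_0$, and hence $\xx_0\in\langle C-\xx_0\rangle$. Then each $\xx=(\xx-\xx_0)+\xx_0$ lies in $\langle C-\xx_0\rangle$.
\end{itemize}
This two-sided inclusion is the only point where both hypotheses $\zero\in C$ and $\zero\in D$ are used, and I expect it to be the main (mild) obstacle.

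\emph{Kernel.} Here I will use that translating a code does not change its kernel: $\xx+(C-\xx_0)=C-\xx_0$ holds if and only if $\xx+C=C$. So $\K(C-\xx_0)=\K(C)$. Next, for a linear bijection $\mu$, the condition $\zz+\mu(E)=\mu(E)$ is equivalent to $\mu^{-1}(\zz)+E=E$, which gives $\K(\mu(E))=\mu(\K(E))$. Applying this with $E=C-\xx_0$ yields $\K(D)=\mu(\K(C))$.

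\emph{Consequences.} Since $\mu$ is a linear isomorphism, it preserves dimensions, so the ranks and the dimensions of the kernels agree. For linearity, a code containing $\zero$ is linear exactly when it equals its span. If $C=\langle C\rangle$, then $\langle D\rangle=\mu(C)$. Moreover, $C-\xx_0=C$ because $C$ is a subspace containing $\xx_0$, so $D=\mu(C)=\langle D\rangle$ and $D$ is linear. The converse follows by symmetry, using $C=\mu^{-1}(-\yy)+\mu^{-1}(D)$, where $\mu^{-1}$ is again a monomial map.
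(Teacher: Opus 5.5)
Your proof is correct and complete. The paper gives no proof of this lemma, since it quotes it from \cite{FullClass}, so there is no argument to compare against. Your route is the standard one: write $\yy=\mu(-\xx_0)$ with $\xx_0\in C$, so that $D=\mu(C-\xx_0)$, and then use that $\mu$ is a linear bijection.

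Two minor remarks, neither a gap:
\begin{itemize}
\item The kernel identity needs neither $\zero\in C$ nor $\zero\in D$. It holds because $\K(\yy+E)=\K(E)$ and $\K(\mu(E))=\mu(\K(E))$ for every code $E$ and every $\yy$. Both hypotheses are genuinely needed only for the span identity: with $\mu$ the identity, $C=\{\xx_0\}$ for some $\xx_0\neq\zero$ and $D=-\xx_0+C=\{\zero\}$, we get $\langle D\rangle\neq\langle C\rangle$.
\item The linearity equivalence also follows directly from $|C|=|D|$ and $\dim\langle C\rangle=\dim\langle D\rangle$, because a code is linear exactly when its size is $p^{\dim\langle C\rangle}$. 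Your symmetry argument with $\mu^{-1}$ works equally well.
\end{itemize}
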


In several computations, the elements of $\Z_p$ are regarded as integers, or as
elements of $\Z_{p^2}$.  In order to avoid any ambiguity, we use the following
notation.  For an integer $w$, and also for $w\in\Z_{p^s}$, we denote by
$\overline{w}\in\Z_p$ the residue of $w$ modulo $p$.  For $a\in\Z_p$, we denote
by $\tilde a$ the unique integer in $\{0,\dots,p-1\}$ such that
$\overline{\tilde a}=a$.  We also regard $\tilde a$ as an element of
$\Z_{p^s}$.  In this way, $\{0,\dots,p-1\}$ is also regarded as a subset of
$\Z_{p^s}$.  Let $z$ be an element of an abelian group with $pz=0$, for
instance $z=p\in\Z_{p^2}$.  For $a\in\Z_p$, we write $az=za=\tilde az$.  Since
$pz=0$, we also have $az=bz$ for every integer $b$ with $\overline{b}=a$.  In
particular, for $a\in\Z_p$, we always regard $pa=p\tilde a$ as an element of
$\Z_{p^2}$, and $\overline{pa}=0$.  These notations are applied entrywise to
vectors and matrices.  For instance, if $\xi\in\Z_p^n$, then the entries of
$\tilde\xi\in\Z_{p^2}^n$ belong to $\{0,\dots,p-1\}$, and we have
$\overline{\tilde\xi}=\xi$ and $p\xi=p\tilde\xi$.

Every $u\in\Z_{p^s}$ can be written in a unique way as
$$
u=\sum_{i=0}^{s-1}\tilde u_i\,p^i
$$
with $u_0,\dots,u_{s-1}\in\Z_p$.  We write $u=[u_0,u_1,\dots,u_{s-1}]_p$, and
we call $u_i$ the digit of index $i$ of $u$.  Thus, $u_0=\overline{u}$.  The
Gray map $\phi_s:\Z_{p^s}\rightarrow\Z_p^{p^{s-1}}$ considered in
\cite{ZpsEquivalence,ZpZp2Construction,HadamardZps,ZpZp2Classification} is
$$
\phi_s(u)=(u_{s-1},u_{s-1},\dots,u_{s-1})+(u_0,\dots,u_{s-2})Y_{s-1},
$$
where $Y_{s-1}$ is a matrix of size $(s-1)\times p^{s-1}$ whose columns are all
the vectors of $\Z_p^{s-1}$, in a fixed order.  We index the $p^{s-1}$
coordinates of $\phi_s(u)$ by the columns of $Y_{s-1}$, that is, by the vectors
$h=(h_0,\dots,h_{s-2})\in\Z_p^{s-1}$.  With this convention, we have
\begin{equation}\label{eq:gray}
\phi_s(u)_h=u_{s-1}+\sum_{i=0}^{s-2}h_iu_i .
\end{equation}
For $s=1$, the vector $h$ is empty, and $\phi_1$ is the identity map of
$\Z_p$.  For $s=2$, we write $\phi=\phi_2$.  Then, $h\in\Z_p$, and
\eqref{eq:gray} becomes $\phi(u)_h=u_1+hu_0$.  We will use the following
property of $\phi$.

\begin{lemma}[\cite{HadamardZps}]\label{lem:shift}
Let $u\in\Z_{p^2}$ and $a\in\Z_p$.  Then, $\phi(u+pa)=\phi(u)+a\one$.
\end{lemma}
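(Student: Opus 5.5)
We prove the claim coordinatewise, using the formula \eqref{eq:gray} with $s=2$: for every $h\in\Z_p$ we have $\phi(v)_h=v_1+hv_0$ for all $v=[v_0,v_1]_p\in\Z_{p^2}$. So it suffices to compute the two digits of $v=u+pa$ in terms of those of $u=[u_0,u_1]_p$.

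First, the digit of index $0$. Write $u=\tilde u_0+p\tilde u_1$. Then
$$
u+pa=\tilde u_0+p(\tilde u_1+\tilde a),
$$
and its residue modulo $p$ is still $u_0$. Hence $(u+pa)_0=u_0$.

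Second, the digit of index $1$. Reducing $\tilde u_1+\tilde a$ modulo $p$ gives the element $u_1+a$ of $\Z_p$. By the convention $pz=p\tilde z$, this yields $u+pa=\tilde u_0+p\,\widetilde{(u_1+a)}$ in $\Z_{p^2}$, since $p(\tilde u_1+\tilde a)$ and $p\widetilde{(u_1+a)}$ differ by a multiple of $p^2$. By uniqueness of the representation $u=\sum\tilde u_ip^i$, we get $(u+pa)_1=u_1+a$.

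Therefore, for each $h\in\Z_p$,
$$
\phi(u+pa)_h=(u_1+a)+hu_0=\phi(u)_h+a,
$$
which is exactly $\phi(u+pa)=\phi(u)+a\one$. There is no genuine obstacle here. The only point needing care is the carry: adding $pa$ never produces a carry into a nonexistent digit of index $2$, because that digit vanishes in $\Z_{p^2}$. This is precisely why the digit $u_1+a$ is computed in $\Z_p$.
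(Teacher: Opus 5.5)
Your proof is correct: in $\Z_{p^2}$ you have $u+pa=\tilde u_0+p\,\widetilde{(u_1+a)}$, because any carry is a multiple of $p^2$, and reading off the digits in \eqref{eq:gray} gives $\phi(u+pa)_h=(u_1+a)+hu_0=\phi(u)_h+a$ for every $h\in\Z_p$. The paper does not prove this lemma but quotes it from \cite{HadamardZps}, and your direct digit computation is the standard argument, the same reasoning the paper uses implicitly when it notes that the digits of $px_k$ are $0$ and $\overline{x_k}$.
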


We denote by $\Phi_s:\Z_{p^s}^n\rightarrow\Z_p^{np^{s-1}}$ the
component-wise extension of $\phi_s$.  The $\Z_{p^s}$-linear code associated
with a $\Z_{p^s}$-additive code $\C\subseteq\Z_{p^s}^n$ is $\Phi_s(\C)$, and
it has length $np^{s-1}$.  We write the vectors of
$\Z_p^{\alpha_1}\times\Z_{p^2}^{\alpha_2}$ as $(u\mid u')$, with
$u\in\Z_p^{\alpha_1}$ and $u'\in\Z_{p^2}^{\alpha_2}$.  The set
$\Z_p^{\alpha_1}\times\Z_{p^2}^{\alpha_2}$ is a module over $\Z_{p^2}$, where
an element $\lambda\in\Z_{p^2}$ acts on the coordinates over $\Z_p$ through
its residue modulo $p$.  We denote by
$\Phi:\Z_p^{\alpha_1}\times\Z_{p^2}^{\alpha_2}\rightarrow
\Z_p^{\alpha_1+p\alpha_2}$ the map given by
$$
\Phi(u\mid u')=\bigl(u,\phi(u'_1),\dots,\phi(u'_{\alpha_2})\bigr),
$$
where $u'=(u'_1,\dots,u'_{\alpha_2})$.  The $\Z_p\Z_{p^2}$-linear code
associated with a $\Z_p\Z_{p^2}$-additive code
$\C\subseteq\Z_p^{\alpha_1}\times\Z_{p^2}^{\alpha_2}$ is $\Phi(\C)$, and it
has length $\alpha_1+p\alpha_2$.

We now recall the $\Z_{p^s}$-linear GH codes constructed in
\cite{HadamardZps,KernelZ2s}.  Following \cite{ZpZp2Classification}, we denote
them by $\bH_p^{t_1,\dots,t_s}$, since the notation $H_p^{t_1,t_2}$ will be
used for the $\Z_p\Z_{p^2}$-linear GH codes with $\alpha_1\neq0$ and
$\alpha_2\neq0$.  In \cite{FullClass}, they are denoted by
$H_p^{t_1,\dots,t_s}$.  The bar in $\bH_p^{t_1,\dots,t_s}$, and in the notation
$\bcH_p^{t_1,\dots,t_s}$ and $\bA_p^{t_1,\dots,t_s}$ below, is part of the
name, and it does not denote a residue modulo $p$.  Let $s\geq1$.  We put
$$
T_i=\bigl\{k\cdot p^{i-1}:\ 0\leq k\leq p^{s-i+1}-1\bigr\}=p^{i-1}\Z_{p^s}
$$
for $1\leq i\leq s$.  Let $t_1,\dots,t_s$ be nonnegative integers with
$t_1\geq1$.  Let $\bA_p^{t_1,\dots,t_s}$ be the matrix whose columns are
exactly, each of them once, all the vectors $\zz^T$ with
\begin{equation}\label{eq:columns}
\zz\in\{1\}\times T_1^{t_1-1}\times T_2^{t_2}\times\cdots\times T_s^{t_s}.
\end{equation}
The order of the columns only affects the order of the coordinates of the
resulting codes, and we fix one order.  The first $t_1$ rows of
$\bA_p^{t_1,\dots,t_s}$ have order $p^s$, the next $t_2$ rows have order
$p^{s-1}$, and so on.  The first row is $\one$.  Let $\bcH_p^{t_1,\dots,t_s}$
be the $\Z_{p^s}$-additive code generated by the rows of
$\bA_p^{t_1,\dots,t_s}$, and let
$\bH_p^{t_1,\dots,t_s}=\Phi_s(\bcH_p^{t_1,\dots,t_s})$.  Then,
$\bcH_p^{t_1,\dots,t_s}$ is of type $(n;t_1,\dots,t_s)$, where $n$ is the
number of columns of $\bA_p^{t_1,\dots,t_s}$ \cite{HadamardZps,KernelZ2s}.  We
call $(t_1,\dots,t_s)$ the parameters of $\bH_p^{t_1,\dots,t_s}$.

\begin{theorem}[\cite{HadamardZps,KernelZ2s}]\label{thm:GHZps}
Let $s\geq1$, and let $t_1,\dots,t_s$ be nonnegative integers with
$t_1\geq1$.  Then, $\bH_p^{t_1,\dots,t_s}$ is a GH code over $\Z_p$ of length
$p^t$, where
\begin{equation}\label{eq:length}
t+1=\sum_{i=1}^{s}(s-i+1)t_i .
\end{equation}
\end{theorem}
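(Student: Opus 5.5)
We first count the length and size. The columns of $\bA_p^{t_1,\dots,t_s}$ are indexed by \eqref{eq:columns}. Since $|T_i|=p^{s-i+1}$, the number of columns is
$$
n=p^{s(t_1-1)}\prod_{i=2}^{s}p^{(s-i+1)t_i}.
$$
The length of $\bH_p^{t_1,\dots,t_s}$ is therefore
$$
np^{s-1}=p^{\sum_{i}(s-i+1)t_i-1}=p^t .
$$
The code $\bcH_p^{t_1,\dots,t_s}$ is of type $(n;t_1,\dots,t_s)$, so it has $p^{\sum_i(s-i+1)t_i}=p^{t+1}=p\cdot p^t$ codewords. Hence $\bH_p^{t_1,\dots,t_s}$ has $p$ times its length many codewords. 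We also need $\Phi_s$ to be injective, which holds because the coordinate $h=0$ gives the top digit $u_{s-1}$, and the differences over varying $h$ recover the lower digits.

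Next we show that every codeword other than a multiple of $p^{s-1}\one$ has Gray image taking each value of $\Z_p$ equally often. For a codeword $\uu=\sum_i\lambda_i r_i$, where the $r_i$ are the rows of $\bA$, the entry at column $\zz$ is $f(\zz)=\lambda_1+\sum_{j\ge2}\lambda_j z_j$. It is an affine function of $\zz$ ranging over the product group in \eqref{eq:columns}. The Gray coordinate $\phi_s(f(\zz))_h$ is then a function on the finite group $G=T_1^{t_1-1}\times\cdots\times T_s^{t_s}\times\Z_p^{s-1}$. We use the standard character criterion: a function $g:G\to\Z_p$ is balanced if and only if $\sum_{x\in G}\omega^{c\,g(x)}=0$ for all $c\in\Z_p^*$, where $\omega$ is a primitive $p$th root of unity. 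We also use that $\sum_{h\in\Z_p^{s-1}}\omega^{c\,\phi_s(v)_h}$ equals $p^{s-1}\omega^{c v_{s-1}}$ when $v\in p^{s-1}\Z_{p^s}$, and equals $0$ otherwise, because some lower digit $v_i\neq0$ makes the sum over $h_i$ vanish. This reduces balancedness to showing that, unless all $\lambda_j z_j$ for $j\ge2$ lie in $p^{s-1}\Z_{p^s}$ identically, the set of $\zz$ with $f(\zz)\in p^{s-1}\Z_{p^s}$ has top digits equidistributed. Equivalently, when $f(\zz)\in p^{s-1}\Z_{p^s}$ we need $\sum_{\zz}\omega^{c\,f(\zz)/p^{s-1}}=0$.

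That exponential sum over the affine map $\zz\mapsto f(\zz)$ is handled by the homomorphism $\zz\mapsto\sum_j\lambda_jz_j$ from the group $\prod T_i$ to $\Z_{p^s}$. Its image is a subgroup $p^k\Z_{p^s}$, and each value in the image is hit equally often. If $k\le s-2$, the image contains elements with nonzero lower digits. In that case we apply a direct count instead. The fibres are uniform, so $f(\zz)$ is uniformly distributed over the coset $\lambda_1+p^k\Z_{p^s}$, and the Gray image of a uniform distribution on such a coset, combined over $h$, is balanced. This follows from $\phi_s$ being a GH-type map: the Gray images of the elements of $p^k\Z_{p^s}$ with $k<s-1$ shifted by $\lambda_1$ are equidistributed over all $h$. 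If $k=s-1$, the image of $f$ is the coset $\lambda_1+p^{s-1}\Z_{p^s}$. Its top digits are uniform when the coset is nontrivial, and the sum over $h$ is nonzero only if $\lambda_1\in p^{s-1}\Z_{p^s}$. In either case we get balance unless $\uu\in p^{s-1}\Z_{p^s}\one$.

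Finally, the normalized GH matrix property follows by linearity. For two codewords, $\Phi_s(\uu)-\Phi_s(\vv)$ is not simply $\Phi_s(\uu-\vv)$ because $\phi_s$ is nonlinear. The difference $\phi_s(a)_h-\phi_s(b)_h$, however, is again an affine-type expression in the digits, so we handle it by noting that the code is a union of $p$ translates $F+\nu\one$, with $\nu\one=\Phi_s(\nu p^{s-1}\one)$. The main obstacle is exactly this nonlinearity of $\phi_s$. We would address it with the known fact from \cite{HadamardZps} that $\phi_s(a)-\phi_s(b)$ is balanced over $h$ together with $\zz$ whenever $a-b$ ranges over a coset as above. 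Alternatively, we would prove the statement by induction on $t_1,\dots,t_s$ using the recursive construction of \cite{KernelZ2s}, where adding one row doubles the structure in a way that is clearly GH-preserving.
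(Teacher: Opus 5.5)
The paper gives no proof of this theorem; it quotes it from \cite{HadamardZps,KernelZ2s}, so your proposal has to stand on its own. Your counting is correct: the length is $np^{s-1}=p^t$, there are $p^{t+1}$ codewords, and $\Phi_s$ is injective. The gap is that the step which actually makes the code a GH code is not proved. By the definition of GH codes, you need two things:
\begin{itemize}
\item the code is closed under adding $\Z_p\one$;
\item more importantly, $\Phi_s(\uu)-\Phi_s(\vv)$ is balanced whenever $\uu-\vv\notin p^{s-1}\Z_{p^s}\one$.
\end{itemize}
Balancedness of the single images $\Phi_s(\uu)$ does not give the second point, because $\Phi_s$ is not additive. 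Your last paragraph does not supply it either. It opens with ``follows by linearity''. It then defers to a ``known fact'' from \cite{HadamardZps}, which is essentially the theorem being proved, or to an induction that is only asserted to be ``clearly GH-preserving''. Your single-codeword argument is also circular in the case $k\le s-2$, where you invoke ``$\phi_s$ being a GH-type map''.

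Both holes close with the exponential sum you already set up, applied to differences. For $a,b\in\Z_{p^s}$ with digits $a_i,b_i$ and $c\in\Z_p^*$,
\[
\sum_{h\in\Z_p^{s-1}}\omega^{c(\phi_s(a)_h-\phi_s(b)_h)}
=\omega^{c(a_{s-1}-b_{s-1})}\prod_{i=0}^{s-2}\ \sum_{h_i\in\Z_p}\omega^{c h_i(a_i-b_i)} .
\]
This equals $p^{s-1}\omega^{c\delta}$ if $a-b=p^{s-1}\delta$ with $\delta\in\Z_p$, and $0$ otherwise. In the first case the lower digits agree, there is no borrow, and $\delta=a_{s-1}-b_{s-1}$.

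The value depends only on $a-b$, and it is also the $h$-sum of $\phi_s(a-b)$. Summing over the coordinates, $\Phi_s(\uu)-\Phi_s(\vv)$ and $\Phi_s(\uu-\vv)$ have the same character sums for every $c$, hence the same multiset of entries. So the pairwise condition reduces to balancedness of $\Phi_s(\ww)$ for $\ww\in\bcH_p^{t_1,\dots,t_s}\setminus p^{s-1}\Z_{p^s}\one$.

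For that step no case split is needed. Suppose $g$ has image $p^k\Z_{p^s}$ with $k\le s-1$. Then $f$ is uniform on $\lambda_1+p^k\Z_{p^s}$, and the intersection of this coset with $p^{s-1}\Z_{p^s}$ is either empty or all of $p^{s-1}\Z_{p^s}$. Hence $\sum_{\zz:\,f(\zz)\in p^{s-1}\Z_{p^s}}\omega^{cf(\zz)/p^{s-1}}=0$. The only exception is $k=s$ (not $k=s-1$), that is, $\ww=\lambda_1\one$, and this vector is balanced unless $\lambda_1\in p^{s-1}\Z_{p^s}$.

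Finally, assemble the matrix. Since $p^{s-1}\one$ lies in $\bcH_p^{t_1,\dots,t_s}$ and $\Phi_s(\uu+p^{s-1}\nu\one)=\Phi_s(\uu)+\nu\one$, the code is a union of $p^t$ cosets of $\Z_p\one$. Exactly one codeword in each coset has first coordinate $0$, and these $p^t$ codewords include $\zero$. Any two distinct ones are $\Phi_s(\uu),\Phi_s(\vv)$ with $\uu-\vv\notin p^{s-1}\Z_{p^s}\one$, so they differ by a balanced vector. They are therefore the rows of a normalized GH matrix of order $p^t$, and their translates by $\Z_p\one$ give the whole code.
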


For $s=1$, we have $T_1=\Z_p$.  Hence, $\bH_p^{t+1}$ is the linear code over
$\Z_p$ of length $p^t$ generated by the matrix whose columns are all the
vectors of $\{1\}\times\Z_p^{t}$.  If we index its coordinates by the vectors
$v\in\Z_p^t$, then
\begin{equation}\label{eq:affine}
\bH_p^{t+1}=\bigl\{\bigl(\nu+\langle\gamma,v\rangle\bigr)_{v\in\Z_p^{t}}:\
\nu\in\Z_p,\ \gamma\in\Z_p^{t}\bigr\}.
\end{equation}
Thus, $\bH_p^{t+1}$ is formed by the evaluations of all the affine functions
on $\Z_p^t$.  It is the first order generalized Reed--Muller code of length
$p^t$, and it has dimension $t+1$ \cite{Key}.

\begin{theorem}[\cite{HadamardZps,KernelZ2s,Kro:2001:Z4_Had_Perf,PRV06}]
\label{thm:linZps}
Let $s\geq2$.  If $p$ is odd, then $\bH_p^{t_1,\dots,t_s}$ is linear if and
only if its parameters are $(1,0,\dots,0,t_s)$.  If $p=2$ and $s=2$, then
$\bH_2^{t_1,t_2}$ is linear if and only if $t_1\leq2$.  If $p=2$ and
$s\geq3$, then $\bH_2^{t_1,\dots,t_s}$ is linear if and only if its
parameters are $(1,0,\dots,0,t_s)$ or $(1,0,\dots,0,1,t_s)$.
\end{theorem}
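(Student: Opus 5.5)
The plan is to translate linearity of $\bH_p^{t_1,\dots,t_s}=\Phi_s(\bcH_p^{t_1,\dots,t_s})$ into a closure condition on $\bcH=\bcH_p^{t_1,\dots,t_s}$ itself. By \eqref{eq:gray}, every coordinate $\phi_s(u)_h=u_{s-1}+\sum_{i}h_iu_i$ is a $\Z_p$-linear function of the digit vector $(u_0,\dots,u_{s-1})\in\Z_p^s$. Hence, for $u,v\in\Z_{p^s}$, we have $\phi_s(u)+\phi_s(v)=\phi_s(u\oplus v)$, where $u\oplus v$ is the digitwise sum $[u_0+v_0,\dots,u_{s-1}+v_{s-1}]_p$ without carries. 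The map $\Phi_s$ is injective, and $\bH$ contains $\zero$ and is closed under the scalar multiples $a\Phi_s(\uu)=\Phi_s(\uu\oplus\cdots\oplus\uu)$. Therefore $\bH$ is linear if and only if $\uu\oplus\vv\in\bcH$ for all $\uu,\vv\in\bcH$. Since $\uu+\vv\in\bcH$, this is equivalent to the condition that the carry vector $\uu+\vv-(\uu\oplus\vv)$ lies in $\bcH$ for all $\uu,\vv\in\bcH$.

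Next I describe codewords functionally. Index the coordinates of $\bcH$ by the columns $\zz$ in \eqref{eq:columns}. Then $\bcH=\{(\langle\lambda,\zz\rangle)_{\zz}:\lambda\in\Z_{p^s}^{t_1+\cdots+t_s}\}$, so every codeword is the restriction of a $\Z_{p^s}$-linear form in $\zz$.

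For the ``if'' direction with parameters $(1,0,\dots,0,t_s)$, every codeword is $\lambda\one+p^{s-1}\xx$. Its lower digits are constant, equal to those of $\lambda$, and only the top digit varies. The top digit produces no carry, so the carry vector is a constant multiple of $\one$ supported on the lower digits, which lies in $\bcH$. For $p=2$ and parameters $(1,0,\dots,0,1,t_s)$, or $s=2$ with $t_1\le 2$, there is one additional possible carry: from digit $s-2$ into digit $s-1$. It equals $2^{s-1}$ times the coordinatewise product of the binary digit-$(s-2)$ vectors. That product is a bilinear expression in the coordinates of $\zz$ coming from at most one nonconstant ``order $4$'' row. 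Because $y\cdot y=y$ over $\Z_2$, it is again a $\Z_2$-combination of the rows and of $\one$, hence lies in $2^{s-1}\bcH$.

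For the ``only if'' direction I exhibit two codewords whose carry vector is not the restriction of any linear form in $\zz$.
\begin{itemize}
\item For odd $p$, assume some $t_i\ge1$ with $i<s$, or $t_1\ge 2$. Take a row $\mathbf r$ of order at least $p^2$ whose entries run over all of $T_i$, and consider $\uu=\vv=\mathbf r$. The carry from doubling the lowest nonzero digit equals $p^{i}\cdot[\text{digit}_{i-1}(z)\ge (p+1)/2]$, which is a step function of that digit.
\item For $p=2$, take two distinct rows of order at least $4$ beyond the allowed ones. Alternatively, when $s=2$ and $t_1\ge3$, take two distinct nonconstant rows of order $4$. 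The carry is $2^{j+1}$ times a product of two independent binary coordinates of $\zz$.
\end{itemize}

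The main obstacle is proving rigorously that these carry vectors are not in $\bcH$. I would restrict to the subfamily of columns in which only the one or two relevant coordinates of $\zz$ vary and all others are fixed at $0$. On this subfamily a codeword is an affine function, namely a constant plus a multiple of the varying coordinate(s). I would then check directly that a nonconstant step function on $\{0,\dots,p-1\}$ with $p$ odd, or a product $xy$ over $\{0,1\}^2$, cannot agree with such an affine function modulo $p^s$. The finite-difference argument would do this: second differences of affine functions vanish, while those of the carry function do not.
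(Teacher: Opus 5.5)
The paper quotes this theorem from the cited works and does not prove it, so your argument has to stand on its own. Your framework is sound. Each $\phi_s(u)_h$ is $\Z_p$-linear in the digits of $u$, and $\Phi_s$ is injective. Hence $\bH$ is linear if and only if every carry vector $\uu+\vv-(\uu\oplus\vv)=\sum_{i=0}^{s-2}p^{i+1}\lfloor (u_i+v_i)/p\rfloor$ lies in $\bcH$.

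Two statements need correcting. First, do not claim that $\bH$ is closed under scalar multiples. Your own odd-$p$ witness shows $2\Phi_s(\uu)\notin\bH$; all you need is that a subset of $\Z_p^N$ closed under addition is a subspace. Second, for $p=2$, $s\geq3$ and parameters $(1,0,\dots,0,1,t_s)$, the carry $2^{s-1}k_0$ lies in $\bcH$ because it equals $2\mathbf r'$, where $\mathbf r'$ is the row with entries $2^{s-2}k$. It does not lie in $2^{s-1}\bcH=\{\zero,2^{s-1}\one\}$.

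With these fixes, the ``if'' direction is correct. So is the odd-$p$ ``only if'' argument: double a nonconstant row with entries $p^{i-1}k$, $i\leq s-1$, and restrict to $0\leq k\leq p-1$.

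The genuine gap is the binary ``only if'' direction. A pair of distinct rows whose carry is a product of two independent bits exists only if there are two nonconstant rows of order at least $4$. This covers $t_{s-1}\geq2$, and $s=2$ with $t_1\geq3$. For $s\geq3$, however, the code is also nonlinear when the only nonconstant row of order at least $4$ has order at least $8$. This happens in two families.
\begin{itemize}
\item Parameters $(1,0,\dots,0,1,0,\dots,0,t_s)$ with the $1$ in a position $i$, $2\leq i\leq s-2$. Here any two distinct rows have zero carry. Indeed, $\one$ lives on digit $0$, the row of order at least $8$ has even entries, and the rows of order $2$ only use digit $s-1$, whose carry vanishes. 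So no witness of your form exists.
\item Parameters $(2,0,\dots,0,t_s)$, for instance the reduced code $\bH_2^{2,0,0,1}$ of Figure~\ref{fig:classes}. The only nonzero carry of two distinct rows is $2z_0$, for $\one$ and the second row with entries $z$. This is a single bit, not a product of two independent ones. It is affine on $z\in\{0,1\}$, so your test detects nothing; its non-affineness only appears on $z\in\{0,1,2\}$, because $4\neq0$ in $\Z_{2^s}$.
\end{itemize}

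The missing witness uses a codeword which is not a row. Let $\mathbf r$ be a row with entries $2^{i-1}k$, where $i\leq s-2$; for $i=1$, take the second row, which exists since $t_1\geq2$. Put $\uu=\mathbf r$ and $\vv=2^{i-1}\one$. The carry is $2^ik_0$. Restrict to the columns $\zz$ whose entry in $\mathbf r$ is $2^{i-1}k$ with $k\in\{0,1,2\}$, and whose other entries, apart from the leading $1$, are $0$. There the carry takes the values $0,2^i,0$, with second difference $-2^{i+1}\neq0$ in $\Z_{2^s}$, since $i\leq s-2$. Every codeword restricts there to $\lambda_1+\mu2^{i-1}k$, whose second difference is $0$. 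Together with your two-row witnesses, this covers every nonlinear binary parameter set.
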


\begin{definition}[\cite{FullClass}]\label{def:reduced}
A code $\bH_p^{t_1,\dots,t_s}$ with $s\geq2$ is called \emph{reduced} if
$t_1\geq2$.
\end{definition}

By Theorem \ref{thm:linZps}, every reduced code is nonlinear if $p$ is odd.
If $p=2$, a reduced code is linear if and only if $s=2$ and $t_1=2$.  The
dimension of the kernel of the nonlinear reduced codes is given by the
following result.

\begin{theorem}[\cite{HadamardZps,KernelZ2s,Kro:2001:Z4_Had_Perf,PRV06}]
\label{thm:kerZps}
Let $\bH=\bH_p^{t_1,\dots,t_s}$ be a nonlinear reduced code.  Then,
$\kernel(\bH)=t_1+\cdots+t_s$ if $p$ is odd, and
$\kernel(\bH)=t_1+\cdots+t_s+1$ if $p=2$.
\end{theorem}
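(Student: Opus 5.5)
The statement just given, Theorem \ref{thm:kerZps}, is quoted from \cite{HadamardZps,KernelZ2s,Kro:2001:Z4_Had_Perf,PRV06}, so it can simply be cited. If we wanted a self-contained argument, I would organize it as follows.

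First, I would identify an explicit linear subcode of the kernel. Let $\bcH=\bcH_p^{t_1,\dots,t_s}$, and write its codewords as $\uu=\lambda\,\bA$ with coefficient vectors $\lambda$. The natural candidate for the kernel is the span of three kinds of vectors. The first kind consists of the Gray images $\Phi_s(\uu)$ for $\uu$ in the subgroup generated by $p\cdot$(the first $t_1$ rows) and by the remaining rows. The second kind is $\Phi_s$ of the first row $\one$ and its multiples. The third kind consists of the images of the top-order rows.

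The key tool is the Gray map identity $\phi_s(u+p^{s-1}a)=\phi_s(u)+a\one$ together with its generalization: for $v$ whose digits of index $0$ vanish except in a controlled way, $\phi_s(u+v)=\phi_s(u)+\phi_s(v)$ holds whenever the carries only affect the last digit. Using \eqref{eq:gray}, I would show that $\Phi_s(\uu)+\Phi_s(\vv)=\Phi_s(\uu+\vv)$ whenever $\vv$ lies in a suitable subgroup. For odd $p$, I expect this subgroup to be the set of codewords of order dividing $p^{s-1}$, together with $\one$. This gives a subgroup of $\bH$ contained in $\K(\bH)$ whose Gray image is linear of size $p^{t_1+\cdots+t_s}$, since each generator contributes one dimension. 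For $p=2$, the Carlet Gray map has the extra feature that $2\,\uu$ for a top row also behaves additively with one more carry pattern. This is the source of the $+1$.

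Second, I would prove that nothing else lies in the kernel. Take a codeword whose coefficient on some first-block row, other than the $\one$ row, is a unit. Using $t_1\geq2$ (reducedness), I would pair it with another order-$p^s$ row and exhibit $\uu,\vv$ such that $\Phi_s(\uu)+\Phi_s(\vv)\notin\bH$. Concretely, I would compute the carry term $\phi_s(u+v)-\phi_s(u)-\phi_s(v)$, which is a polynomial in the digits; for $s=2$ it is the term $\bigl(\widetilde{u_0}+\widetilde{v_0}-\widetilde{(u_0+v_0)}\bigr)/p$. I would then show this is not an affine function of the coordinate indices of the form that the code allows, using the GH minimum distance $(p-1)N/p$ to rule out equality with any codeword.

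I expect this second step to be the main obstacle. It requires a uniform nonaffineness argument for the carry function over all $s$, handled by induction on $s$ via the recursive structure of $\bA_p^{t_1,\dots,t_s}$. It also requires treating $p=2$ separately, where the carry function is quadratic and one extra vector survives.
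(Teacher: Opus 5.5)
\textbf{Verdict: the citation is fine; the self-contained outline has genuine gaps.}

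Citing is exactly what the paper does: it quotes Theorem~\ref{thm:kerZps} from the references and gives no proof. The outline you add, however, misidentifies the kernel.

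\textbf{The first step.} Your ``generalized'' Gray identity is false. A carry $c$, even one landing only in the last digit, changes $\phi_s$ at that coordinate by $c\one$. Hence $\Phi_s(\uu)+\Phi_s(\vv)=\Phi_s(\uu+\vv-p^{s-1}\mathbf{c})$, where $\mathbf{c}$ is the vector of carries, and $p^{s-1}\mathbf{c}$ is in general not in $\bcH$. So for $s\geq3$ the codewords of order dividing $p^{s-1}$ do not give kernel vectors. They form a group of order $p^{(s-1)(t_1+t_2)+(s-2)t_3+\cdots+t_s}$, which exceeds $p^{t_1+\cdots+t_s+1}$ whenever $s\geq3$ and $t_1\geq2$; this contradicts the statement you want to prove.

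Write $\hvv_1=\one,\hvv_2,\dots,\hvv_{t_1}$ for the rows of order $p^s$. Their images are not kernel vectors either; the only exception is $\Phi_2(\hvv_1)$ when $p=2$ and $s=2$. For odd $p$, your ``second kind'' must be restricted to $p^{s-1}\hvv_1$, whose image is the all-one vector over $\Z_p$.

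The correct easy half is the socle $\{\uu\in\bcH:\ p\uu=\zero\}$, generated by $p^{s-i}$ times the rows of order $p^{s-i+1}$. The analogue $\phi_s(u+p^{s-1}a)=\phi_s(u)+a\one$ of Lemma~\ref{lem:shift} puts its Gray image, a linear code of dimension $t_1+\cdots+t_s$, inside $\K(\bH)$. This matches the basis found for $\bH_3^{2,1}$ in Example~\ref{ex:blocks}. For $s=2$ your subgroup is exactly this socle, so there your first step is fine.

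\textbf{The right tool, and the source of the $+1$.} The map \eqref{eq:gray} is $\Z_p$-linear in the digit vector. Hence $\Phi_s(\uu)+\Phi_s(\vv)=\Phi_s(\uu\oplus\vv)$, where $\oplus$ adds digits modulo $p$ without carries. Since $\Phi_s$ is injective, $\Phi_s(\vv)\in\K(\bH)$ if and only if $\uu\oplus\vv\in\bcH$ for all $\uu\in\bcH$; no minimum-distance argument is needed.

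The $+1$ for $p=2$ does not come from multiples $2\uu$ of top rows. For $s=2$ these lie in the socle, and for $s\geq3$ they are not kernel vectors. It comes from $\Phi_s(-\hvv_1)$: every digit of $-1$ equals $1$, so $\uu\oplus(-\hvv_1)=-\hvv_1-\uu\in\bcH$.

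For odd $p$ the digits of $-1-u$ are $p-1-u_i$, which only gives $\Phi_s(-\hvv_1)-\Phi_s(\uu)=\Phi_s(-\hvv_1-\uu)$. For instance, for $s=2$ one has $\uu\oplus\hvv_1=\uu+\hvv_1-p\chi$, where $\chi$ is the indicator of the coordinates at which $\overline{\uu}=p-1$. For $\uu=\hvv_2$ and odd $p$, $\chi$ is not affine in the residues of the columns, so $p\chi\notin\bcH$.

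\textbf{The second step.} It cannot see this distinction, because it only treats codewords with a unit coefficient on a top row other than $\hvv_1$.
\begin{itemize}
\item It never examines $\Phi_s(a\hvv_1)$ with $a$ a unit, which is precisely where odd $p$ and $p=2$ differ.
\item For $s\geq3$ it never excludes the non-socle codewords of order $p^2,\dots,p^{s-1}$, and all of these must be ruled out to get the upper bound.
\item Pairing with ``another order-$p^s$ row'' cannot work uniformly. For $p=2$, $s=2$ and $t_1=2$ the reduced code is linear, so you need two top rows besides $\hvv_1$; this is where the nonlinearity hypothesis enters.
\end{itemize}
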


The next result shows that, up to permutation equivalence, it is enough to
consider the reduced codes and the linear code $\bH_p^{t+1}$.

\begin{proposition}[\cite{FullClass}]\label{prop:reduction}
Let $\bH=\bH_p^{t_1,\dots,t_s}$ be a code of length $p^t$ with $s\geq2$.  If
$\bH$ is nonlinear, then it is permutation equivalent to a nonlinear reduced
code of length $p^t$.  If $\bH$ is linear, then it is permutation equivalent
to $\bH_p^{t+1}$.
\end{proposition}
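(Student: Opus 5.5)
The plan is to induct on $s$. The key step is a reduction lemma: if $t_1=1$, then $\bH_p^{1,t_2,\dots,t_s}$ is permutation equivalent to
$$
\bH_p^{t_2+1,t_3,\dots,t_{s-1},t_s+1},
$$
a code with $s-1$ levels. By \eqref{eq:length} both codes have the same length, since
$$
(s-1)(t_2+1)+\sum_{i=3}^{s-1}(s-i+1)t_i+(t_s+1)=s+\sum_{i=2}^{s}(s-i+1)t_i .
$$
Iterating this lemma, we either reach a code with $s\geq2$ and $t_1\geq2$, which is reduced, or a code with one level, which is $\bH_p^{t+1}$. Permutation equivalence preserves linearity by Lemma \ref{lem:invariance}. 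So in the first case a nonlinear code is sent to a nonlinear reduced code, and in the second case the original code was linear and is permutation equivalent to $\bH_p^{t+1}$.

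To prove the reduction lemma, I would write a codeword of $\bcH_p^{1,t_2,\dots,t_s}$ as $\lambda\one+p\,\yy$. Here $\lambda\in\Z_{p^s}$, and $p\,\yy$ is a $\Z_{p^s}$-combination of the remaining rows, all of whose entries lie in $p\Z_{p^s}$. Split $\lambda=\tilde\lambda_0+p\lambda'$. Adding a multiple of $p$ to $\tilde\lambda_0$ produces no carry into digit $0$. Hence, at a column $(1,\zz)$, the entry $u=\tilde\lambda_0+p(\lambda'+w)$ has $u_0=\lambda_0$, and $u_1,\dots,u_{s-1}$ are the digits of $\lambda'+w\in\Z_{p^{s-1}}$, where $pw$ is the corresponding entry of $p\,\yy$. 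By \eqref{eq:gray}, for $h=(h_0,h')$ with $h'=(h_1,\dots,h_{s-2})$,
$$
\phi_s(u)_h=h_0\lambda_0+\phi_{s-1}(\lambda'+w)_{h'} .
$$
Thus each coordinate of the code is indexed by a triple $(\zz,h_0,h')$. The codeword value there is $h_0\lambda_0$ plus the $\Z_{p^{s-1}}$-Gray image, at $h'$, of the codeword $\lambda'\one+\yy$. Here $\yy$ is computed from the columns $\zz/p\in T_1^{t_2}\times\cdots\times T_{s-1}^{t_s}$, now taken over $\Z_{p^{s-1}}$.

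The final step is to recognize this description as $\bH_p^{t_2+1,\dots,t_s+1}$ up to a permutation of coordinates. The extra coordinate $h_0\in\Z_p$, carrying the free coefficient $\lambda_0\in\Z_p$, plays the role of one additional row of order $p$, which raises $t_s$ by one. Indeed, $h_0\lambda_0=\phi_{s-1}(p^{s-2}h_0\lambda_0)_{h'}$ by \eqref{eq:gray}. So I would add the entry $p^{s-2}\tilde h_0\in T_{s-1}$ to each column, giving columns
$$
\bigl(1,\ \zz/p,\ p^{s-2}\tilde h_0\bigr)^T\quad\text{in}\quad \{1\}\times T_1^{t_2}\times T_2^{t_3}\times\cdots\times T_{s-1}^{t_s+1}.
$$
The main obstacle is that this index set is not literally the set of columns of $\bA_p^{t_2+1,\dots,t_s+1}$. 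That set is $\{1\}\times T_1^{t_2}\times\cdots$, whereas here the row $\one$ is adjoined to all $\zz/p$ with the first coordinate of $\zz/p$ free in $T_1$. I would check that the two column sets coincide after reordering rows, using that $\lambda'$ ranges over all of $\Z_{p^{s-1}}$, and that each column appears exactly once. If the column sets still do not match, I would instead compare the two codes coordinatewise as the full sets of functions $(\lambda,\text{coefficients})\mapsto$ value. Both codes contain $p^{t+1}$ codewords, so exhibiting a bijection of coordinate indices that respects these value functions is enough.

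For $p=2$ there is one remaining case: a reduced code that is linear, namely $\bH_2^{2,t_2}$ by Theorem \ref{thm:linZps}. Here I would show directly that $\bH_2^{2,t_2}$ is permutation equivalent to $\bH_2^{t+1}$. The argument is that the Gray image of $\bcH_2^{2,t_2}$ is a linear binary code of dimension $t+1$ and length $2^t$ whose coordinates are evaluations of distinct affine functionals on $\Z_2^t$, so by \eqref{eq:affine} it equals $\bH_2^{t+1}$ after a permutation.
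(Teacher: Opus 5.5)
The paper gives no proof of this proposition; it quotes it from \cite{FullClass}, so there is no in-paper argument to compare yours with. Your strategy is sound: remove a leading parameter $t_1=1$ by a permutation equivalence, iterate, and treat the binary linear reduced codes separately. Your digit computation $\phi_s(u)_h=h_0\lambda_0+\phi_{s-1}(\lambda'+w)_{h'}$ is also correct.

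The ``main obstacle'' you describe does not exist. With $u_1=t_2+1$, the columns of $\bA_p^{u_1,t_3,\dots,t_{s-1},t_s+1}$ over $\Z_{p^{s-1}}$ are $\{1\}\times T_1^{u_1-1}\times T_2^{t_3}\times\cdots\times T_{s-1}^{t_s+1}$, and this is literally your set. The row $\one$ carries $\lambda'$, the old rows (all with entries in $p\Z_{p^s}$) divided by $p$ give the next rows, and the appended last row carries $\lambda_0$. The maps $\zz\mapsto\zz/p$, $h_0\mapsto p^{s-2}\tilde h_0$ and $\lambda\mapsto(\lambda',\lambda_0)$ are bijections onto the corresponding sets, and the remaining coefficients reduce modulo $p^{s-1}$ onto the new ranges. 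Hence $(\zz,h_0,h')\mapsto\bigl((1,\zz/p,p^{s-2}\tilde h_0),h'\bigr)$ carries one code onto the other, and no fallback is needed.

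Two small precisions. First, the identity you actually use is $\phi_{s-1}(v+p^{s-2}\tilde a)=\phi_{s-1}(v)+a\one$, which holds because adding to the top digit creates no carry, as in Lemma \ref{lem:shift}. The value $\phi_{s-1}(p^{s-2}a)_{h'}=a$ alone does not suffice, because $\phi_{s-1}$ is not additive. Second, for $s=2$ the appended entry $h_0$ lies in $T_1=\Z_p$, so the target is $\bH_p^{t_2+2}$. Your displayed length identity, read literally at $s=2$, counts $t_2$ twice.

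The one step that is asserted rather than proved is the binary case. That the coordinates of $\bH_2^{2,t_2}$ give \emph{distinct} affine functionals is exactly what has to be shown, and it follows from the Hadamard property. Since $\bH_2^{2,t_2}$ is linear with $2^{t+1}$ codewords, choose a basis $\one,\mathbf c_1,\dots,\mathbf c_t$. Let $v_j\in\Z_2^t$ be the $j$th column of the matrix with rows $\mathbf c_1,\dots,\mathbf c_t$, so that the code is $\{(\nu+\langle\gamma,v_j\rangle)_j\}$. For $\gamma\neq\zero$, the codeword $(\langle\gamma,v_j\rangle)_j$ is neither $\zero$ nor $\one$, by linear independence. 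Comparing it with both $\zero$ and $\one$, the minimum distance $2^{t-1}$ of a Hadamard code (Theorem \ref{thm:GHZps}) forces its weight to be exactly $2^{t-1}$, that is, $\sum_j(-1)^{\langle\gamma,v_j\rangle}=0$. Hence, for every $v\in\Z_2^t$, the number of indices $j$ with $v_j=v$ is $2^{-t}\sum_{\gamma}(-1)^{\langle\gamma,v\rangle}\sum_j(-1)^{\langle\gamma,v_j\rangle}=1$. Then \eqref{eq:affine} gives the permutation equivalence with $\bH_2^{t+1}$.

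With this added, your proof is complete. You should also say explicitly, for odd $p$, that no reduced code is linear by Theorem \ref{thm:linZps}, so an iteration starting from a linear code must end at one level.
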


A partition of a positive integer $n$ is a way of writing $n$ as a sum of
positive integers without regard to order, and $\cP(n)$ denotes the number
of partitions of $n$ \cite{Andrews}.  The following classification is the
main result of \cite{FullClass}.

\begin{theorem}[\cite{FullClass}]\label{thm:classZps}
Let $t\geq3$.  Then, the linear code $\bH_p^{t+1}$ and the nonlinear reduced
codes of length $p^t$ form a complete system of representatives for the
equivalence classes of the $\Z_{p^s}$-linear GH codes of length $p^t$, with
$s\geq1$.  The number of these classes is $\cP(t+1)-\cP(t)$ if $p$ is odd,
and $\cP(t+1)-\cP(t)-1$ if $p=2$.
\end{theorem}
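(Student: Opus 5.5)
By Proposition~\ref{prop:reduction}, every $\Z_{p^s}$-linear GH code of length $p^t$ is permutation equivalent either to $\bH_p^{t+1}$ or to a nonlinear reduced code of the same length. So the set in the statement meets every equivalence class. What has to be shown is that no two distinct members of this set are equivalent, and then the count.

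\emph{Separating the linear code.} By Lemma~\ref{lem:invariance}, $\bH_p^{t+1}$ cannot be equivalent to a nonlinear code.

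\emph{Separating the nonlinear reduced codes.} Here I would argue by induction on $t$, using three invariants: the length, the kernel dimension, and the derived code of \cite{FullClass}. The derived code is an equivalence invariant. For a nonlinear reduced code $\bH_p^{u_1,\dots,u_r}$ with odd $p$, the punctured codes on the kernel blocks are pairwise equivalent to $\bH_p^{u_1,\dots,u_{r-2},u_{r-1}+1}$. I expect the main work to be establishing this description, and its variant for $p=2$, where the kernel has one extra dimension and the blocks are correspondingly smaller. This needs an explicit description of $\K(\bH)$ via the coordinates $h$ of the Gray map~\eqref{eq:gray}, and a check that restricting to a block yields a code of the stated type up to permutation.

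Granting this, suppose two nonlinear reduced codes $\bH_p^{u_1,\dots,u_r}$ and $\bH_p^{v_1,\dots,v_q}$ of length $p^t$ are equivalent. Their derived codes are then equivalent, and these have smaller length.
\begin{itemize}
\item If the derived code is nonlinear, the induction hypothesis (after Proposition~\ref{prop:reduction}, which may shift a leading block) identifies $(u_1,\dots,u_{r-2},u_{r-1}+1)$ with $(v_1,\dots,v_{q-2},v_{q-1}+1)$. In particular $r=q$ and $u_i=v_i$ for $i\le r-1$. Then \eqref{eq:length} forces $u_r=v_r$.
\item If the derived code is linear, then $r=q=2$ (for odd $p$). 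In that case $u_1+u_2=v_1+v_2$ by Theorem~\ref{thm:kerZps}, and $2u_1+u_2=2v_1+v_2$ by \eqref{eq:length}, so the parameters coincide.
\end{itemize}
For $p=2$ the same scheme applies. One uses Theorem~\ref{thm:linZps} to identify exactly when the derived parameters give a linear code, and handles those few small cases directly by kernel dimension and length.

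\emph{Counting.} A parameter vector $(t_1,\dots,t_s)$ with $t_1\ge1$ satisfying \eqref{eq:length} corresponds to the partition of $t+1$ that has $t_i$ parts equal to $s-i+1$. Its largest part is $s$, with multiplicity $t_1$.
\begin{itemize}
\item Reduced codes are exactly the partitions with $s\ge2$ and largest part repeated.
\item The all-ones partition, corresponding to $s=1$, also has its largest part repeated, since $t+1\ge2$, and it corresponds to $\bH_p^{t+1}$.
\end{itemize}
So the representatives correspond to partitions of $t+1$ whose largest part occurs at least twice. Under conjugation these become the partitions of $t+1$ with smallest part at least $2$. There are $\cP(t+1)-\cP(t)$ of these, since removing a part $1$ is a bijection from partitions of $t+1$ containing a $1$ onto partitions of $t$.

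For odd $p$, every reduced code is nonlinear, which gives $\cP(t+1)-\cP(t)$. For $p=2$, Theorem~\ref{thm:linZps} shows that exactly one reduced code is linear, namely $\bH_2^{2,t-3}$ (this requires $t\ge3$). It is equivalent to $\bH_2^{t+1}$, so it must not be counted separately, and the number becomes $\cP(t+1)-\cP(t)-1$.
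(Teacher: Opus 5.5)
Your overall plan works for odd $p$ once Theorem~\ref{thm:derivedZps} is granted. A linear derived code forces $r=q=2$, and a nonlinear derived code is already a reduced code, so no appeal to Proposition~\ref{prop:reduction} is needed there and the induction closes. The partition count via conjugation is correct, and so is the removal of the single linear reduced code $\bH_2^{2,t-3}$ for $p=2$.

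The gap is in the binary case, in your claim that the codes with linear derived code can be separated by kernel dimension and length. For $p=2$ the derived code is linear when $r=2$ (with $u_1\ge3$), or when $r=3$ and $u_1=2$. The mixed case between these two types is not separated by any invariant you use.

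Take $u_1\ge4$ and $u_2\ge0$, and compare $\bH_2^{u_1,u_2}$ with $\bH_2^{2,u_1-4,u_2+2}$. By \eqref{eq:length} and Theorem~\ref{thm:kerZps}, both have length $2^{2u_1+u_2-1}$ and kernel dimension $u_1+u_2+1$. Hence they have the same number and size of kernel blocks. Their derived codes $\bH_2^{u_1}$ and $\bH_2^{2,u_1-4}$ are both linear GH codes of length $2^{u_1-1}$, so they are equivalent. The smallest instance is $\bH_2^{4,0}$ versus $\bH_2^{2,0,2}$ for $t=7$.

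This is an infinite family, not a few small cases. Moreover, your induction in the nonlinear-derived case relies on nonequivalence at smaller lengths, so pairs such as $\bH_2^{4,0,x}$ and $\bH_2^{2,0,2,x}$ inherit the same problem.

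You need an additional invariant, and the rank suffices:
\begin{itemize}
\item By Theorem~\ref{thm:rank} (the formula of \cite{PRV06} for $p=2$), $\rank(\bH_2^{u_1,u_2})=2u_1+u_2+\binom{u_1-1}{2}$.
\item By Theorem~\ref{thm:rank-z8}, $\rank(\bH_2^{2,u_1-4,u_2+2})=10+u_2+(u_1-4)(u_1+3)/2$.
\end{itemize}
The second rank is smaller by $u_1-3\ge1$; for instance, $11$ versus $10$.

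This is the same phenomenon the paper meets in the proof of Theorem~\ref{thm:binary} and in Table~\ref{tab:invariants-bin}. There, $H_2^{t_1,t_2}$ (equivalent to $\bH_2^{t_1+1,t_2-2}$) and $\bH_2^{2,t_1-3,t_2}$ share all kernel-block invariants and are separated only by the rank. Once this rank separation is added at the base of your induction, the argument and the count $\cP(t+1)-\cP(t)-1$ go through.
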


We now recall the $\Z_p\Z_{p^2}$-linear GH codes with $\alpha_1\neq0$ and
$\alpha_2\neq0$ constructed in \cite{ZpZp2Construction}.  For an integer $i$,
we denote by $\mathbf{i}$ the vector whose entries are all equal to $i$.  Its
entries are regarded as elements of $\Z_p$ in the coordinates over $\Z_p$, and
as elements of $\Z_{p^2}$ in the coordinates over $\Z_{p^2}$.  Let
$$
A_p^{1,1}=\left(\begin{array}{cccc|cccc}
1&1&\cdots&1&p&p&\cdots&p\\
0&1&\cdots&p-1&1&2&\cdots&p-1
\end{array}\right).
$$
If $A$ is a generator matrix of a $\Z_p\Z_{p^2}$-additive code, we write
$A=(A_1\mid A_2)$, where $A_1$ is formed by the first $\alpha_1$ columns and
$A_2$ by the last $\alpha_2$ columns.  From a matrix
$A=A_p^{t_1,t_2}=(A_1\mid A_2)$ with $t_1,t_2\geq1$, we construct the matrix
\begin{equation}\label{eq:constr-p}
A_p^{t_1,t_2+1}=\left(\begin{array}{cccc|cccc}
A_1&A_1&\cdots&A_1&A_2&A_2&\cdots&A_2\\
\zero&\one&\cdots&\mathbf{p-1}&p\cdot\zero&p\cdot\one&\cdots&
p\cdot(\mathbf{p-1})
\end{array}\right).
\end{equation}
From a matrix $A=A_p^{t_1,1}=(A_1\mid A_2)$ with $t_1\geq1$, we construct the
matrix
\begin{equation}\label{eq:constr-p2}
A_p^{t_1+1,1}=\left(\begin{array}{cccc|ccccccc}
A_1&A_1&\cdots&A_1&pA_1&\cdots&pA_1&A_2&A_2&\cdots&A_2\\
\zero&\one&\cdots&\mathbf{p-1}&\one&\cdots&\mathbf{p-1}&\zero&\one&\cdots&
\mathbf{p^2-1}
\end{array}\right).
\end{equation}
In \eqref{eq:constr-p2}, $pA_1$ denotes the matrix $p\tilde A_1$ over
$\Z_{p^2}$.  As in \cite{ZpZp2Construction,ZpZp2Classification}, we obtain
$A_p^{t_1,t_2}$ from $A_p^{1,1}$ by adding first $t_1-1$ rows of order $p^2$
with \eqref{eq:constr-p2}, and then $t_2-1$ rows of order $p$ with
\eqref{eq:constr-p}.  The first row of $A_p^{t_1,t_2}$ is always
$(\one\mid\bp)$, and we denote it by $\vv_0$.  We denote by
$\vv_1,\dots,\vv_{t_1}$ the rows of order $p^2$, and by
$\ww_1,\dots,\ww_{t_2-1}$ the remaining rows of order $p$, in the order in
which they are added.  In particular, $\vv_1$ is the row which comes from the
second row of $A_p^{1,1}$.  Let $\cH_p^{t_1,t_2}$ be the
$\Z_p\Z_{p^2}$-additive code generated by $A_p^{t_1,t_2}$, and let
$H_p^{t_1,t_2}=\Phi(\cH_p^{t_1,t_2})$.  We call $(t_1,t_2)$ the parameters of
$H_p^{t_1,t_2}$.

\begin{example}\label{ex:matrices}
For $p=3$, the matrices $A_3^{1,1}$ and $A_3^{2,1}$ are
$$
A_3^{1,1}=\left(\begin{array}{ccc|cc}
1&1&1&3&3\\
0&1&2&1&2
\end{array}\right)
$$
and
$$
A_3^{2,1}=\left(\begin{array}{ccccccccc|cccccccccccc}
1&1&1&1&1&1&1&1&1&3&3&3&3&3&3&3&3&3&\cdots&3&3\\
0&1&2&0&1&2&0&1&2&0&3&6&0&3&6&1&2&1&\cdots&1&2\\
0&0&0&1&1&1&2&2&2&1&1&1&2&2&2&0&0&1&\cdots&8&8
\end{array}\right).
$$
The rows of $A_3^{2,1}$ are $\vv_0$, $\vv_1$ and $\vv_2$.  The code
$H_3^{2,1}$ is of type $(9,24;2,1)$, and it has length $9+3\cdot24=81$.
\end{example}

\begin{theorem}[\cite{ZpZp2Construction}]\label{thm:GHZpZp2}
Let $t_1,t_2\geq1$.  Then, $H_p^{t_1,t_2}$ is a $\Z_p\Z_{p^2}$-linear GH code
of type $(p^{t-t_1},p^{t-1}-p^{t-t_1-1};t_1,t_2)$ and length $p^t$, where
\begin{equation}\label{eq:t}
t=2t_1+t_2-1 .
\end{equation}
\end{theorem}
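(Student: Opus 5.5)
The plan is to prove Theorem~\ref{thm:GHZpZp2} by induction along the two recursive constructions \eqref{eq:constr-p} and \eqref{eq:constr-p2}, starting from $A_p^{1,1}$. The induction has three parts: the parameters $\alpha_1,\alpha_2$ and the length, the GH property, and the type $(t_1,t_2)$, which will follow from the GH property.

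\emph{Parameters and length.} For $A_p^{1,1}$ we have $\alpha_1=p$ and $\alpha_2=p-1$, and $t=2$. This agrees with $p^{t-t_1}=p$ and $p^{t-1}-p^{t-t_1-1}=p-1$.
\begin{itemize}
\item Construction \eqref{eq:constr-p} sends $(\alpha_1,\alpha_2)\mapsto(p\alpha_1,p\alpha_2)$ and $t\mapsto t+1$ with $t_1$ fixed, so both formulas are preserved.
\item Construction \eqref{eq:constr-p2} sends $(\alpha_1,\alpha_2)\mapsto(p\alpha_1,(p-1)\alpha_1+p^2\alpha_2)$, with $t\mapsto t+2$ and $t_1\mapsto t_1+1$. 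Substituting the inductive formulas gives $(p-1)p^{t-t_1}+p^{t+1}-p^{t-t_1+1}=p^{t+1}-p^{t-t_1}$, which is the required value.
\end{itemize}
The length is then $\alpha_1+p\alpha_2=p^{t-t_1}+p^t-p^{t-t_1}=p^t$.

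\emph{GH property.} I would prove the following stronger statement by induction. For any two codewords $\uu,\uu'\in\cH_p^{t_1,t_2}$, the vector $\Phi(\uu)-\Phi(\uu')$ is either a constant vector $c\one$, which happens exactly when $\uu-\uu'\in\langle\vv_0\rangle$, or it contains every element of $\Z_p$ exactly $p^{t-1}$ times. Together with $|\cH_p^{t_1,t_2}|=p^{t+1}$ and $\Phi(\lambda\vv_0)=\overline{\lambda}\one$, this says that $\Phi(\cH_p^{t_1,t_2})$ is a union of $p$ translates $F+\nu\one$ of the rows of a normalized GH matrix, which is the definition of a GH code.

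For the base case $A_p^{1,1}$, a direct check suffices. For \eqref{eq:constr-p}, a codeword has the form $(\uu+k a\vv'')_{k\in\Z_p}$, where the $k$th block adds $ka$ to the $\Z_p$ part and $pka$ to the $\Z_{p^2}$ part. By Lemma~\ref{lem:shift}, the Gray image of block $k$ is $\Phi(\uu)+ka\one$. For two codewords with coefficients $a\neq a'$, each coordinate position runs over $c+k(a-a')$ for $k\in\Z_p$, which covers all of $\Z_p$, so the difference is balanced. If $a=a'$, every block is the old difference, and the induction hypothesis applies.

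\emph{The step for \eqref{eq:constr-p2}, which I expect to be the main obstacle.} A codeword is determined by $\uu=(u_1\mid u_2)$ and a coefficient $\lambda\in\Z_{p^2}$ on the new row. It consists of three parts:
\begin{itemize}
\item the $\Z_p$ blocks $u_1+k\overline{\lambda}$ for $k\in\Z_p$;
\item the $\Z_{p^2}$ blocks $p\tilde u_1+k\lambda$ for $1\le k\le p-1$;
\item the $\Z_{p^2}$ blocks $u_2+k\lambda$ for $k\in\Z_{p^2}$.
\end{itemize}
I would compare two codewords with coefficients $\lambda,\lambda'$ and split into three cases.
\begin{itemize}
\item If $\lambda=\lambda'$, the difference reduces to the induction hypothesis, using Lemma~\ref{lem:shift} on the middle blocks.
\item If $\lambda-\lambda'\in p\Z_{p^2}\setminus\{0\}$, the translation argument above applies to every block via Lemma~\ref{lem:shift}.
\item If $\lambda-\lambda'$ is a unit, I would use \eqref{eq:gray}: $\phi(u)_h=u_1+hu_0$. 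As $k$ runs over $\Z_{p^2}$, the pair $(u_0,u_1)$ of $u_2+k\lambda$ minus that of $u_2'+k\lambda'$ must be shown equidistributed after applying $h$. The carries make the digit $u_1$ a nonlinear function of $k$, so I would handle this by counting solutions digit by digit: fix the residue $\overline{k}$ first, then vary the high digit of $k$, which shifts the difference by a unit multiple in the top digit. The first two families of blocks must then be shown to exactly compensate the $h$-dependence when $\overline{\lambda}=\overline{\lambda'}$.
\end{itemize}
Getting this bookkeeping right is the crux of the proof.

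\emph{Type.} The GH property gives that the $p^{2t_1+t_2}$ formal combinations $\sum_i\lambda_i\vv_i+\sum_j a_j\ww_j+\lambda_0\vv_0$ yield distinct Gray images, because a nonzero difference is either a nonzero constant or balanced. Hence $\cH_p^{t_1,t_2}\cong\Z_{p^2}^{t_1}\times\Z_p^{t_2}$, and the code has $p^{t+1}=p\cdot p^t$ codewords, as required.
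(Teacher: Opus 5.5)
The paper does not prove this statement; it quotes it from \cite{ZpZp2Construction}, so your plan has to stand on its own. The parts that are fine are your computation of $\alpha_1$, $\alpha_2$ and the length, the reduction of the GH property to the dichotomy ``constant or balanced'', and the step for \eqref{eq:constr-p}. The gap is in the step for \eqref{eq:constr-p2}, which carries all the content. There, the case you call the crux is left undone, and the two cases you treat as routine are argued incorrectly.

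Take first $\lambda-\lambda'=p\delta$ with $\delta\neq0$, and fix an old coordinate $j$. The translation argument does not apply block by block, for three reasons:
\begin{itemize}
\item The $\Z_p$ blocks are not translated at all, since $k(\overline{\lambda}-\overline{\lambda'})=0$.
\item The middle blocks give only the $p-1$ translates $u_{1,j}-u'_{1,j}+k\delta$ with $k\neq0$.
\item In the last blocks, the shift $\overline{k}\delta$ is added to $\phi(u_{2,j}+k\lambda)-\phi(u'_{2,j}+k\lambda)$, and this part itself depends on $k$ when $\lambda$ is a unit. For $p=3$, $\lambda=1$, $u_{2,j}=1$ and $u'_{2,j}=0$, it is $(0,1,2)$ for $k=0$ and $(1,2,0)$ for $k=2$.
\end{itemize}
For the same reason, in the case $\lambda=\lambda'$, Lemma \ref{lem:shift} handles the middle blocks but not the last ones. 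So the new difference is not made of copies of the old one coordinatewise.

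The missing ingredient is a coordinatewise fact. For $a,b\in\Z_{p^2}$, the multiset $\{\phi(a)_h-\phi(b)_h:\ h\in\Z_p\}$ depends only on $a-b$. By \eqref{eq:gray} and Lemma \ref{lem:shift}, it consists of $p$ copies of $c$ if $a-b=pc$, and it is $\Z_p$ if $a-b$ is a unit. With this fact, each case goes through:
\begin{itemize}
\item Case $\lambda=\lambda'$: it reduces to the induction hypothesis, since every old coordinate contributes $p^2$ copies of its old multiset. For the coordinates over $\Z_p$, this comes from the $\Z_p$ blocks and the middle blocks together.
\item Case $\lambda-\lambda'=p\delta$: combine the $p$ untranslated $\Z_p$ blocks with the $p-1$ middle blocks. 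Together they give every value $u_{1,j}-u'_{1,j}+k\delta$, $k\in\Z_p$, exactly $p$ times. In the last blocks, each $k$ contributes the old multiset shifted by $\overline{k}\delta$.
\item Unit case: your own idea of varying the high digit settles it at once, with no carry bookkeeping. Write $k=k_0+pk_1$. By Lemma \ref{lem:shift}, varying $k_1$ translates the last-block difference by $k_1(\overline{\lambda}-\overline{\lambda'})\one$. Every middle Gray block is balanced, because its two arguments differ by $p(\tilde u_{1,j}-\tilde u'_{1,j})+k(\lambda-\lambda')$, which is a unit. The $\Z_p$ blocks are balanced by translation.
\end{itemize}
No compensation between the families occurs in the unit case; indeed $\overline{\lambda}\neq\overline{\lambda'}$ there. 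The compensation is needed in the case $\lambda-\lambda'=p\delta$.

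Finally, the type does not follow from your dichotomy as you stated it for codewords $\uu,\uu'$. That statement says nothing when two coefficient tuples give the same codeword. You should either run the induction on coefficient tuples, or check directly from Lemma \ref{lem:columns} that $(\varepsilon,\lambda,\eta)\mapsto\uu$ is injective.

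The same fact also shows that $\Phi(\uu)-\Phi(\uu')$ has the same multiset of entries as $\Phi(\uu-\uu')$. So you could avoid pairs altogether and read off the balance of each single $\Phi(\uu)$, with $\uu\notin\langle\vv_0\rangle$, from Lemmas \ref{lem:columns} and \ref{lem:values}.
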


In particular, $\alpha_1=p^{t-t_1}$ and $\alpha_2=p^{t-1}-p^{t-t_1-1}$ are
nonzero.  The linearity and the kernel of the codes $H_p^{t_1,t_2}$ are given
by the following two results.

\begin{theorem}[\cite{ZpZp2Classification,PRV06}]\label{thm:linZpZp2}
Let $t_1,t_2\geq1$.  Then, $H_p^{t_1,t_2}$ is linear if and only if $p=2$ and
$t_1=1$.
\end{theorem}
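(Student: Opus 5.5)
The statement asserts that $H_p^{t_1,t_2}$ is linear exactly when $p=2$ and $t_1=1$. This is cited from \cite{ZpZp2Classification,PRV06}, so the plan is to reconstruct the argument directly from the recursive description of $A_p^{t_1,t_2}$.

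The tool I would use is the standard criterion for Gray images. $\Phi(\cH)$ is linear if and only if $\Phi(\uu)+\Phi(\uu')\in\Phi(\cH)$ for all generators $\uu,\uu'$. For the map $\phi$, the defect
$$
\phi(u)+\phi(u')-\phi(u+u')
$$
depends only on the carry of the residues $u_0+u'_0$. Concretely, if $u+u'=[u_0+u_0',\,u_1+u_1'+c]_p$ with carry $c\in\{0,1\}$, then $\phi(u)+\phi(u')=\phi(u+u')-c\one$ coordinatewise on each $\Z_{p^2}$ block. By Lemma~\ref{lem:shift}, this equals $\phi(u+u'-pc)$. It follows that $\Phi(\uu)+\Phi(\uu')=\Phi(\uu+\uu'-p\,\mathbf{c}(\uu,\uu'))$, where $\mathbf{c}(\uu,\uu')$ is the carry vector on the $\Z_{p^2}$ part. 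Hence $H_p^{t_1,t_2}$ is linear if and only if $p\,\mathbf{c}(\uu,\uu')\in\cH_p^{t_1,t_2}$ for all $\uu,\uu'\in\cH_p^{t_1,t_2}$.

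For the direction ``$p=2$ and $t_1=1$ implies linear'', note that when $t_1=1$ the only row of order $p^2$ is $\vv_1$. Every codeword is then, up to order-$p$ rows, a multiple $\lambda\vv_1$. For $p=2$, the carry set of $a\vv_1,b\vv_1$ is nonempty only when $a,b$ are odd, and then $p\,\mathbf{c}$ is $2$ times the support of $\vv_1$ modulo $2$ on the $\Z_4$ part. I would check from $A_2^{1,t_2}$ that this equals $2\vv_1$, which lies in the code. The order-$2$ rows add no carries. This reproduces the classical $\Z_2\Z_4$ result.

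For the converse with $p$ odd, I would exhibit a carry vector not in the code. I take $\uu=\uu'$ equal to a suitable multiple of $\vv_1$, or of $\vv_1+\vv_0$. Then $\mathbf{c}(\uu,\uu)$ is, on the $\Z_{p^2}$ coordinates, the indicator of the residues $\ge (p+1)/2$ of $\vv_1$. I then need $p\,\mathbf{c}\notin\cH$. The subcode of $\cH$ of elements of order $p$ consists of $p\cdot(\text{combinations of }\vv_i)$ together with combinations of $\vv_0,\ww_j$. On the $\Z_{p^2}$ part, such a vector is $p$ times an affine-linear function of the column entries, taken modulo $p$. An indicator of a half-interval of $\Z_p$ is not affine for $p\ge3$, so membership fails. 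One checks this in $A_p^{1,1}$, where the $\Z_{p^2}$ entries of $\vv_1$ run over $1,\dots,p-1$, and it persists under the recursions \eqref{eq:constr-p} and \eqref{eq:constr-p2}, which just repeat $A_2$ and append columns. For $p=2$ and $t_1\ge2$, I would take $\uu=\vv_1$ and $\uu'=\vv_2$ and use the same test: the carry is the product of the residues $\overline{v_1}\,\overline{v_2}$, which is quadratic and hence not affine on the $\Z_4$ columns coming from \eqref{eq:constr-p2}.

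The main obstacle is making the ``not affine'' claim rigorous on the $\Z_{p^2}$ coordinates. The $\Z_p$ coordinates must also be consistent, since on those coordinates the candidate codeword has to vanish. I would restrict attention to the columns of $A_2$ inherited from $A_p^{1,1}$. There the order-$p$ codewords are $p(a+b x)$ for $x\in\{1,\dots,p-1\}$, and the carry indicator is not of that form once $p\ge3$, since it takes both values on at least three points in a non-affine pattern.
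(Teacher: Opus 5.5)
Your reduction is correct. Lemma~\ref{lem:shift} gives $\Phi(\uu)+\Phi(\uu')=\Phi\bigl(\uu+\uu'-p\,\mathbf c(\uu,\uu')\bigr)$, so $H_p^{t_1,t_2}$ is linear if and only if $p\,\mathbf c(\uu,\uu')\in\cH_p^{t_1,t_2}$ for all pairs of codewords. Your argument for $p=2$, $t_1=1$ is fine. (The paper gives no proof of this statement; it only quotes it. Your first formulation, with generators only, is not valid for odd $p$, since the carry is not bilinear. In $\Z_9^3$, the vectors $(1,1,0)$ and $(0,1,1)$ produce no carries among themselves, but their doubles give $3\,\mathbf c=(0,3,0)$, which is not in the code they generate. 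You never use that formulation, though.)

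The gap is in the decisive non-membership step. You compare $p\,\mathbf c$ with the order-$p$ codewords only on the $\Z_{p^2}$ coordinates, in the form $p(a+bx)$ with a free constant $a$, and assert that the carry indicator is never of that form. This fails exactly in the extreme cases.
\begin{itemize}
\item For $p=3$ the inherited columns have $x_1\in\{1,2\}$ only, so there are two points, not three. At every such column $2\vv_0+3\vv_1$ agrees with $3\,\mathbf c(\vv_1,\vv_1)$; for $t_1=1$ these are all the $\Z_9$ coordinates. For $H_3^{1,1}$ the two vectors are $(2,2,2\mid 0,3)$ and $(0,0,0\mid 0,3)$.
\item For $p=2$ and $t_1=2$, the product $\xi_1\xi_2$ coincides with $1+\xi_1+\xi_2$ on the three nonzero vectors of $\Z_2^2$. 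Hence $\vv_0+2\vv_1+2\vv_2$ agrees with $2\,\mathbf c(\vv_1,\vv_2)$ on every $\Z_4$ coordinate.
\end{itemize}
So, for the pairs and coordinates you use, the obstruction is invisible, and no sharpening of the ``not affine'' claim can work.

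The missing step is to use the $\Z_p$ coordinates, which you mention but never exploit. An order-$p$ codeword $\varepsilon\vv_0+p\sum_k\mu_k\vv_k+\sum_j\eta_j\ww_j$ takes the value $\varepsilon+\langle\eta,c\rangle$ at $(a,c)$. If it equals $p\,\mathbf c$, which vanishes there, then $\varepsilon=0$ and $\eta=\zero$. Its entry at the column $(p,x,pc)^T$ is then $p\langle\mu,\overline{x}\rangle$, a linear form in $\overline{x}$ without constant term.
\begin{itemize}
\item By Lemmas~\ref{lem:X} and~\ref{lem:columns}, there are columns with $x=\tilde e_1$ and with $x=(p-1)\tilde e_1$. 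For odd $p$ and $\uu=\uu'=\vv_1$, the carries there are $0$ and $1$. The linear form takes the values $\mu_1$ and $-\mu_1$, which is a contradiction.
\item For $p=2$, $t_1\geq2$ and $(\uu,\uu')=(\vv_1,\vv_2)$, the columns with $x=\tilde e_1$, $\tilde e_2$ and $\tilde e_1+\tilde e_2$ give $\mu_1=0$, $\mu_2=0$ and $\mu_1+\mu_2=1$, which is impossible.
\end{itemize}
Since Lemma~\ref{lem:columns} lists all the columns at once, the persistence argument through the recursions is then unnecessary.

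Alternatively, within this paper the statement follows from Theorem~\ref{thm:rank}, whose proof does not use it. The code $H_p^{t_1,t_2}$ has $p^{2t_1+t_2}$ codewords. Its rank is $2t_1+t_2+\binom{t_1}{2}$ for $p=2$, and at least $2t_1+t_2+1$ for odd $p$, because $0,1,p-1,p\in D_p$.
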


\begin{theorem}[\cite{ZpZp2Classification,PRV06}]\label{thm:kerZpZp2}
Let $t_1,t_2\geq1$, and assume that $H_p^{t_1,t_2}$ is nonlinear.  Then,
$\K(H_p^{t_1,t_2})$ is the image under $\Phi$ of the subcode of
$\cH_p^{t_1,t_2}$ formed by the vectors of order at most $p$.  Moreover, the
vectors $\Phi(\vv_0)$, $\Phi(p\vv_k)$ with $1\leq k\leq t_1$, and
$\Phi(\ww_j)$ with $1\leq j\leq t_2-1$, form a basis of
$\K(H_p^{t_1,t_2})$.  In particular, $\kernel(H_p^{t_1,t_2})=t_1+t_2$.
\end{theorem}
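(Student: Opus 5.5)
The plan is to prove the two inclusions between $\K(H_p^{t_1,t_2})$ and $\Phi(\cH_0)$ separately, where $\cH_0\subseteq\cH_p^{t_1,t_2}$ is the subcode of vectors of order at most $p$. The second inclusion is where nonlinearity enters.

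\emph{First inclusion and the basis.} Let $\uu=(u\mid u')\in\cH_0$. Every entry of $u'$ lies in $p\Z_{p^2}$, so $u'=pa$ for some $a\in\Z_p^{\alpha_2}$. By Lemma \ref{lem:shift}, for every $\xx=(x\mid x')$ we get $\Phi(\xx+\uu)=\Phi(\xx)+\Phi(\uu)$. Hence $\Phi(\uu)+\Phi(\cH_p^{t_1,t_2})=\Phi(\cH_p^{t_1,t_2})$, and therefore $\Phi(\cH_0)\subseteq\K(H_p^{t_1,t_2})$. The same identity shows that $\Phi$ is additive on $\cH_0$, so $\Phi(\cH_0)$ is a $\Z_p$-subspace. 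Since $\Phi$ is injective, $\dim\Phi(\cH_0)=\log_p|\cH_0|$.

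Since $\cH_p^{t_1,t_2}\cong\Z_{p^2}^{t_1}\times\Z_p^{t_2}$, we have $|\cH_0|=p^{t_1+t_2}$. The listed vectors lie in $\cH_0$: $\vv_0=(\one\mid\bp)$ has order $p$, as do the $p\vv_k$ and the $\ww_j$. They are independent because the rows of $A_p^{t_1,t_2}$ generate a group of type $(t_1,t_2)$. So their images form a basis of $\Phi(\cH_0)$, which has dimension $t_1+t_2$.

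\emph{Reverse inclusion.} First compute how $\Phi$ interacts with addition. For $u,v\in\Z_{p^2}$, \eqref{eq:gray} gives $\phi(u)_h+\phi(v)_h=u_1+v_1+h(u_0+v_0)$. The digit of index $1$ of $u+v$ is $u_1+v_1+c(u_0,v_0)$, where $c(u_0,v_0)\in\{0,1\}$ is the carry of $\tilde u_0+\tilde v_0$. Lemma \ref{lem:shift} then gives $\phi(u)+\phi(v)=\phi(u+v-p\,c(u_0,v_0))$. Componentwise, with zero carry on the $\Z_p$ part, this reads
$$\Phi(\xx)+\Phi(\yy)=\Phi\bigl(\xx+\yy-p\,c(\xx,\yy)\bigr).$$
By injectivity of $\Phi$, $\Phi(\xx)\in\K(H_p^{t_1,t_2})$ if and only if $(\zero\mid p\,c(\xx,\yy))\in\cH_p^{t_1,t_2}$ for all $\yy\in\cH_p^{t_1,t_2}$. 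Subtracting an element of $\cH_0$, which is already in the kernel, we may assume $\xx=\sum_{k=1}^{t_1}\lambda_k\vv_k$ with $\lambda\in\Z_p^{t_1}$ nonzero. It then remains to exhibit, for each such $\xx$, some $\yy$ with $(\zero\mid p\,c(\xx,\yy))\notin\cH_p^{t_1,t_2}$.

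\emph{Main obstacle and how to handle it.} The hard step is producing this $\yy$ and certifying that the carry vector lies outside the code. I would argue by induction along the recursive constructions \eqref{eq:constr-p} and \eqref{eq:constr-p2}. Puncturing $A_p^{t_1,t_2+1}$ to its first block of columns (the $\zero$ multiple of the new row) recovers $A_p^{t_1,t_2}$. Similarly, the first block of $A_p^{t_1+1,1}$, together with the copy of $A_2$, recovers $A_p^{t_1,1}$. A kernel condition on the larger code therefore restricts to the smaller one. This handles the case where $\lambda$ has a nonzero entry other than the newest one; the case $\xx=\lambda_{t_1}\vv_{t_1}$ is treated directly using the new columns of \eqref{eq:constr-p2}.

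The base case is $A_p^{1,1}$ with $p$ odd. Take $\xx=\lambda\vv_1$ and $\yy=\vv_1$. On the $\Z_{p^2}$-coordinates, where $\vv_1$ has entries $1,\dots,p-1$, the carry vector is $\bigl(c(\lambda i,i)\bigr)_{i=1}^{p-1}$, which is nonconstant when $p\geq3$. On the other hand, the elements of $\cH_p^{1,1}$ with zero $\Z_p$-part are multiples of $p\vv_1-(\text{something})$. A direct check should show that the only such vectors of order $p$ have $\Z_{p^2}$-part $pa(1,2,\dots,p-1)$ or are zero. A nonconstant $0/1$ carry pattern cannot be of this form unless it is zero, which completes the base case. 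For $p=2$, the same computation fails when $t_1=1$, consistent with Theorem \ref{thm:linZpZp2}. For $t_1\geq2$, the new columns of \eqref{eq:constr-p2} supply the required carry.
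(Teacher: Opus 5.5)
The paper does not prove this theorem: it is quoted from \cite{ZpZp2Classification,PRV06}. So your proposal has to stand on its own.

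\textbf{The first half is correct.} The identity $\Phi(\xx+\uu)=\Phi(\xx)+\Phi(\uu)$ for $\uu\in\cH_0$ gives $\Phi(\cH_0)\subseteq\K(H_p^{t_1,t_2})$, and the type count gives the basis. Your carry identity $\phi(u)+\phi(v)=\phi(u+v-p\,c(u_0,v_0))$ correctly reduces the reverse inclusion to the following task. For each $\xx=\sum_k\tilde\lambda_k\vv_k$ with $\lambda\in\Z_p^{t_1}\setminus\{\zero\}$, find some $\yy$ with $(\zero\mid p\,c(\xx,\yy))\notin\cH_p^{t_1,t_2}$.

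\textbf{The execution of that step has real defects.}
\begin{itemize}
\item \emph{Base case.} The carry vector $\bigl(c(\lambda i,i)\bigr)_i$ is not always nonconstant. For $\lambda=p-1$ it is constant, equal to $\one$, since $\widetilde{-i}+i=p$ for every $i$. What you actually need is that the carry vector is nonzero; its entry at $i=p-1$ is always $1$. You also need that, for $p\geq3$, no nonzero $0/1$ vector $c$ satisfies $c_i\equiv mi\pmod p$ for all $i$. So the base case survives, but not for the reason you give.
\item \emph{The case $p=2$.} Your restriction step is unavailable at $t_1=2$. Restricting $H_2^{2,1}$ to its copy of $H_2^{1,1}$ yields nothing, because $H_2^{1,1}$ is linear and its kernel is the whole code. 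So all three nonzero $\lambda\in\Z_2^2$ need a direct argument.
\item \emph{Choice of $\yy$ when $p=2$.} Your base-case choice $\yy=\xx$ can never work for $p=2$. Indeed $c(\xx,\xx)=\overline{\xx}$, so $(\zero\mid 2c(\xx,\xx))=2\xx\in\cH_2^{t_1,t_2}$.
\item \emph{Unargued steps.} The phrases ``treated directly using the new columns'' and ``the new columns supply the required carry'' stand in for exactly the arguments that are missing.
\end{itemize}

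\textbf{A clean repair avoids the induction and uses Lemma \ref{lem:columns} globally.}
\begin{itemize}
\item A codeword with zero $\Z_p$-part has $\varepsilon=0$, $\overline{\lambda}=\zero$ and $\eta=\zero$ in \eqref{eq:u}. Hence its entry at the column $(p,x,pc)^T$ is $p\langle m,\overline{x}\rangle$ for some $m\in\Z_p^{t_1}$.
\item For $\yy=\sum_k\tilde\mu_k\vv_k$, the carry at that column depends only on $\xi=\overline{x}$, which runs over $\Z_p^{t_1}\setminus\{\zero\}$. So it suffices to show that the carry is not the restriction of a linear form in $\xi$.
\item \emph{Odd $p$.} Take $\mu=\lambda$. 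The carry is $1$ exactly when $2\,\widetilde{\langle\lambda,\xi\rangle}\geq p$. Along $\xi=s\xi_0$ with $\langle\lambda,\xi_0\rangle=1$, it is $0$ at $s=1$ and $1$ at $s=p-1$. No map $s\mapsto s\langle m,\xi_0\rangle$ behaves this way.
\item \emph{$p=2$ and $t_1\geq2$.} Take $\mu$ linearly independent of $\lambda$. The carry is $\langle\lambda,\xi\rangle\langle\mu,\xi\rangle$. Choose $\xi_1,\xi_2$ with $(\langle\lambda,\xi_1\rangle,\langle\mu,\xi_1\rangle)=(1,0)$ and $(\langle\lambda,\xi_2\rangle,\langle\mu,\xi_2\rangle)=(0,1)$. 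The carry vanishes at $\xi_1$ and $\xi_2$ but equals $1$ at $\xi_1+\xi_2$, so it is not linear.
\end{itemize}
This settles the reverse inclusion for every nonlinear $H_p^{t_1,t_2}$ at once.

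\textbf{Alternative via your puncturing idea.} It also works in one step, using $\K(C)|_B\subseteq\K(C|_B)$. For odd $p$ and $\lambda_i\neq0$, the columns $(1,be_i,\zero)^T$ with $b\in\Z_p$ and $(p,\tilde b e_i,\zero)^T$ with $b\in\Z_p^*$ form a copy of $A_p^{1,1}$. In this copy $\xx$ becomes $\tilde\lambda_i\vv_1$, so everything reduces to your corrected base case. For $p=2$ the analogous reduction is to $H_2^{2,1}$, which still needs the two-index computation above.
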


By \eqref{eq:t}, the codes $H_p^{t_1,t_2}$ of length $p^t$ are the codes
$H_p^{t_1,t+1-2t_1}$ with $1\leq t_1\leq\lfloor t/2\rfloor$.  Their kernels
have dimension $t+1-t_1$, except for the linear code $H_2^{1,t-1}$, whose
kernel has dimension $t+1$.  Since these values are pairwise different,
Lemma \ref{lem:invariance} gives the following classification.

\begin{corollary}[\cite{ZpZp2Classification,PRV06}]\label{cor:classZpZp2}
Let $t\geq2$.  Then, the codes $H_p^{t_1,t_2}$ of length $p^t$ are the
$\lfloor t/2\rfloor$ codes $H_p^{t_1,t+1-2t_1}$, with
$1\leq t_1\leq\lfloor t/2\rfloor$, and they are pairwise nonequivalent.
\end{corollary}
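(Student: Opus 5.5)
The plan is a direct computation from Theorem \ref{thm:GHZpZp2}, Theorem \ref{thm:linZpZp2}, Theorem \ref{thm:kerZpZp2} and Lemma \ref{lem:invariance}.

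\emph{Listing the codes.} A code $H_p^{t_1,t_2}$ with $t_1,t_2\geq1$ has length $p^t$, where $t=2t_1+t_2-1$ by \eqref{eq:t}. So, for a fixed $t$, we get $t_2=t+1-2t_1$, and the condition $t_2\geq1$ is equivalent to $t_1\leq t/2$. Hence the codes of length $p^t$ are exactly the codes $H_p^{t_1,t+1-2t_1}$ with $1\leq t_1\leq\lfloor t/2\rfloor$. For $t\geq2$ this range is nonempty, and it contains exactly $\lfloor t/2\rfloor$ distinct parameter pairs.

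\emph{Computing the kernel dimensions.}
\begin{itemize}
\item If $H_p^{t_1,t_2}$ is nonlinear, Theorem \ref{thm:kerZpZp2} gives $\kernel=t_1+t_2=t+1-t_1$.
\item By Theorem \ref{thm:linZpZp2}, the only linear case is $p=2$ and $t_1=1$. Then the code equals its kernel, and since $|H|=p^{2t_1+t_2}=p^{t+1}$, its kernel dimension is $t+1$.
\end{itemize}
So, for fixed $p$ and $t$, the kernel dimension is a function of $t_1$ alone. It equals $t+1-t_1$, except that it is $t+1$ at $t_1=1$ when $p=2$.

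\emph{Checking the values are distinct.} The values $t+1-t_1$ are strictly decreasing in $t_1$. In the binary case the value at $t_1=1$ is $t+1$, which is still larger than every value $t+1-t_1$ with $t_1\geq2$. Hence the map from $t_1$ to the kernel dimension is injective on $1\leq t_1\leq\lfloor t/2\rfloor$.

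\emph{Concluding.} Every code $H_p^{t_1,t_2}$ contains $\zero$, because it is the Gray image of an additive code. So Lemma \ref{lem:invariance} applies: equivalent codes have equal kernel dimension. Since the kernel dimensions are pairwise different, the codes are pairwise nonequivalent.

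The argument has no real obstacle. The only point needing care is the linear binary case $H_2^{1,t-1}$, whose kernel dimension must be computed separately as shown above.
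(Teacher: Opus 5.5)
Your proposal is correct and follows essentially the same route as the paper: you read off the parameters of length $p^t$ from \eqref{eq:t}, show that the kernel dimensions $t+1-t_1$ (and $t+1$ for the linear code $H_2^{1,t-1}$) are pairwise different, and conclude with Lemma \ref{lem:invariance}. Your explicit derivation of the kernel dimension of the linear binary code from $|H|=p^{t+1}$ simply spells out a step the paper states without comment.
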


Other recursive constructions of $\Z_p\Z_{p^2}$-linear GH codes with
$\alpha_1\neq0$ and $\alpha_2\neq0$ are given in \cite{ZpZp2Construction}.  In
these constructions, the entries $1,\dots,p-1$ of the second row of $A_p^{1,1}$
over $\Z_{p^2}$ are replaced by elements of $\Z_{p^2}$ whose residues modulo
$p$ are $1,\dots,p-1$ in some order.  The same can be done with the entries
$1,\dots,p-1$ of the last row of \eqref{eq:constr-p2} over the copies of
$pA_1$.  All these constructions give codes which are permutation equivalent to
the codes $H_p^{t_1,t_2}$ of the same type \cite{ZpZp2Construction}.
Therefore, the results of this paper also hold for those codes.

\section{Kernel blocks and derived codes}\label{sec:blocks}

In this section, we recall the kernel blocks and the derived code of a code
over $\Z_p$ from \cite{FullClass}, together with the derived codes of the
nonlinear reduced codes.  Then, we determine the kernel blocks and the
derived code of the nonlinear codes $H_p^{t_1,t_2}$.  As a consequence, we
show that, if such a code is equivalent to a $\Z_{p^s}$-linear GH code, then
it is equivalent to a nonlinear reduced code whose derived code is linear.

For $\xx\in\Z_p^N$ and $B\subseteq\{1,\dots,N\}$, we write $\xx|_B$ for the
vector obtained from $\xx$ by deleting the coordinates whose index lies
outside $B$.  We put
$$
D|_B=\{\xx|_B:\ \xx\in D\}
$$
for $D\subseteq\Z_p^N$, and we call it the code $D$ punctured outside $B$.  The
entries of $\xx|_B$ are indexed by the elements of $B$.  Two codes whose
coordinates are indexed by different sets of the same size are compared through
a bijection between these sets.  Another bijection only changes the result by a
permutation of the coordinates.  Hence, equivalence and permutation equivalence
make sense for such codes.  Since deleting coordinates is a linear map, every
$D\subseteq\Z_p^N$ satisfies
\begin{equation}\label{eq:span-puncture}
\langle D\rangle\big|_B=\bigl\langle D|_B\bigr\rangle .
\end{equation}

\begin{definition}[\cite{FullClass}]\label{def:blocks}
Let $C\subseteq\Z_p^N$ be a code with $\zero\in C$, and put $K=\K(C)$.  Let
$$
K_i=\{\xx\in K:\ x_i=0\}
$$
for $1\leq i\leq N$.  For $1\leq i,i'\leq N$, we write $i\sim_K i'$ if and only
if $K_i=K_{i'}$.  The classes of the equivalence relation $\sim_K$ on the set
of coordinates are called the \emph{kernel blocks} of $C$, and the set formed
by them is denoted by $\cB(C)$.
\end{definition}

When the all-one vector belongs to the kernel, the kernel blocks are
described by the values of the vectors of the kernel, as the next lemma
shows.

\begin{lemma}[\cite{FullClass}]\label{lem:one}
Let $C\subseteq\Z_p^N$ be a code with $\zero\in C$ and $\one\in\K(C)$.  Then,
two coordinates $i$ and $i'$ lie in the same kernel block of $C$ if and only
if $x_i=x_{i'}$ for every $\xx\in\K(C)$.
\end{lemma}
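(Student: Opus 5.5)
The statement to prove is Lemma \ref{lem:one}: if $\zero\in C$ and $\one\in\K(C)$, then $i\sim_K i'$ if and only if $x_i=x_{i'}$ for every $\xx\in K=\K(C)$. I would prove the two implications separately. The first step records that $K$ is a linear subspace of $\Z_p^N$. This holds because $\K(C)$ is closed under addition, and for a code over $\Z_p$ closure under addition already gives closure under scalar multiplication, since $a\xx=\xx+\cdots+\xx$.

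For the easy direction, assume $x_i=x_{i'}$ for every $\xx\in K$. Then a vector $\xx\in K$ satisfies $x_i=0$ exactly when it satisfies $x_{i'}=0$. Hence $K_i=K_{i'}$, which is $i\sim_K i'$ by Definition \ref{def:blocks}. This direction does not use the hypothesis $\one\in K$.

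For the converse, assume $K_i=K_{i'}$ and take any $\xx\in K$. Put $\yy=\xx-x_i\one$. Since $\one\in K$ and $K$ is linear, $\yy\in K$. Its $i$th entry is $x_i-x_i=0$, so $\yy\in K_i=K_{i'}$. Therefore $0=y_{i'}=x_{i'}-x_i$, that is, $x_{i'}=x_i$.

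There is no real obstacle here. The only point that needs care is the linearity of $K$, which is what allows the translate $\xx-x_i\one$ to stay inside $K$. This is exactly where the hypothesis $\one\in\K(C)$ is used: it lets us normalize any kernel vector so that it vanishes at $i$.
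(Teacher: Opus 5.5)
Your proof is correct and complete. The easy direction follows directly from the definition of $K_i$. For the converse, the linearity of $\K(C)$ together with $\one\in\K(C)$ puts $\xx-x_i\one$ in $K_i=K_{i'}$, which forces $x_{i'}=x_i$. The paper gives no proof of this lemma and only cites it from \cite{FullClass}; your normalization by a multiple of $\one$ is the standard argument, so there is nothing to add.
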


The next lemma shows that an equivalence carries kernel blocks to kernel
blocks, and punctured codes to equivalent punctured codes.

\begin{lemma}[\cite{FullClass}]\label{lem:blocks-invariance}
Let $C$ and $D$ be as in Lemma \ref{lem:invariance}, with $D=\yy+\mu(C)$ and
$\mu=\mu_{\pi,\ba}$.  Then, the map $B\mapsto\pi(B)$ is a bijection from
$\cB(C)$ onto $\cB(D)$.  Moreover, for every $B\in\cB(C)$, the codes $C|_B$
and $D|_{\pi(B)}$ are equivalent.
\end{lemma}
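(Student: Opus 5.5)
Write $K=\K(C)$ and $L=\K(D)$, so that by Lemma \ref{lem:invariance} we have $L=\mu(K)$. The plan is to first show that $\mu$ transports the sets $K_i$ of Definition \ref{def:blocks} coordinatewise. Then we show that $\pi$ respects the relations $\sim_K$ and $\sim_L$. Finally, we restrict $\yy$ and $\mu$ to a block to get the equivalence of the punctured codes.

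\textbf{Step 1: transporting $K_i$.} By definition of $\mu=\mu_{\pi,\ba}$, for $\xx\in\Z_p^N$ we have $\mu(\xx)_{\pi(i)}=a_ix_i$. Since $a_i\in\Z_p^*$, this gives $\mu(\xx)_{\pi(i)}=0$ if and only if $x_i=0$. Hence, for every $i$,
$$
L_{\pi(i)}=\{\mu(\xx):\ \xx\in K,\ x_i=0\}=\mu(K_i).
$$

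\textbf{Step 2: blocks go to blocks.} The map $\mu$ is injective, so $K_i=K_{i'}$ holds if and only if $\mu(K_i)=\mu(K_{i'})$. By Step 1, this means $L_{\pi(i)}=L_{\pi(i')}$. Therefore $i\sim_K i'$ if and only if $\pi(i)\sim_L\pi(i')$. Since $\pi$ is a bijection of $\{1,\dots,N\}$, it maps each equivalence class of $\sim_K$ onto an equivalence class of $\sim_L$. So $B\mapsto\pi(B)$ is a bijection from $\cB(C)$ onto $\cB(D)$, and its inverse is induced by $\pi^{-1}$.

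\textbf{Step 3: punctured codes.} Fix $B\in\cB(C)$. Index the coordinates of $C|_B$ by $B$ and those of $D|_{\pi(B)}$ by $\pi(B)$, and use $\pi|_B$ as the identifying bijection. Let $\mu_B$ be the monomial map sending $\xx|_B$ to the vector whose entry at $\pi(i)$ is $a_ix_i$, for $i\in B$. Then
$$
\bigl(\yy+\mu(\xx)\bigr)|_{\pi(B)}=\yy|_{\pi(B)}+\mu_B(\xx|_B)
$$
for every $\xx\in C$. Consequently $D|_{\pi(B)}=\yy|_{\pi(B)}+\mu_B(C|_B)$, which is exactly an equivalence in the sense of the introduction, after transporting the index set via $\pi|_B$.

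The only point needing care is this last identification of index sets. It is handled by the convention stated just before Definition \ref{def:blocks}: codes indexed by different sets of the same size are compared through a bijection, and a different choice of bijection changes the code only by a permutation. The remaining steps are direct verifications.
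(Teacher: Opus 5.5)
The paper takes this lemma from \cite{FullClass} without proof. Your argument is correct and is the direct verification one would expect there: $L=\mu(K)$ and $a_i\neq0$ give $L_{\pi(i)}=\mu(K_i)$, hence $i\sim_K i'$ if and only if $\pi(i)\sim_L\pi(i')$. For the second part, the entries of $\yy+\mu(\xx)$ at $\pi(B)$ depend only on $\xx|_B$, so $D|_{\pi(B)}=\yy|_{\pi(B)}+\mu_B(C|_B)$, and transporting the index set via $\pi|_B$ is exactly the convention fixed before Definition \ref{def:blocks}.
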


In general, the kernel blocks of a code do not give equivalent punctured
codes.  When they do, we obtain a new invariant.

\begin{definition}[\cite{FullClass}]\label{def:derived}
Let $C\subseteq\Z_p^N$ be a code with $\zero\in C$.  We say that $C$ is
\emph{block-homogeneous} if the codes $C|_B$, with $B$ running over
$\cB(C)$, are pairwise equivalent.  In that case, we write $\partial C$ for
any one of them, and we call it the \emph{derived code} of $C$.
\end{definition}

The derived code is well defined up to equivalence.  Thus, when we say that the
derived code of $C$ is a code $D$, we mean that $\partial C$ is equivalent to
$D$.  Moreover, the derived code is an equivalence invariant, as the following
result shows.

\begin{corollary}[\cite{FullClass}]\label{cor:derived-invariant}
Let $C$ and $D$ be equivalent codes over $\Z_p$, both containing $\zero$.  If
$C$ is block-homogeneous, then so is $D$.  Moreover, the two codes have the
same number of kernel blocks, all their kernel blocks have the same size, and
$\partial C$ and $\partial D$ are equivalent.
\end{corollary}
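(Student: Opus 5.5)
The plan is to combine Lemma \ref{lem:blocks-invariance} with Lemma \ref{lem:invariance}. Let $C$ and $D$ be equivalent, both containing $\zero$, and write $D=\yy+\mu(C)$ with $\mu=\mu_{\pi,\ba}$. The hypotheses of Lemma \ref{lem:invariance} then hold, so Lemma \ref{lem:blocks-invariance} applies.

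First, I use Lemma \ref{lem:blocks-invariance} to get that $B\mapsto\pi(B)$ is a bijection from $\cB(C)$ onto $\cB(D)$. This gives directly that $C$ and $D$ have the same number of kernel blocks. Since $\pi$ is a permutation, $|\pi(B)|=|B|$, so the multisets of block sizes also agree. Hence, if all the kernel blocks of one code have a common size, the same holds for the other, which is how I read ``all their kernel blocks have the same size.''

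Second, for block-homogeneity, let $B'_1,B'_2\in\cB(D)$. By surjectivity, $B'_j=\pi(B_j)$ for some $B_j\in\cB(C)$. Lemma \ref{lem:blocks-invariance} gives that $D|_{B'_j}$ is equivalent to $C|_{B_j}$. Since $C$ is block-homogeneous, $C|_{B_1}$ and $C|_{B_2}$ are equivalent. Chaining these, $D|_{B'_1}$ is equivalent to $D|_{B'_2}$, so $D$ is block-homogeneous. Taking any $B$, we get that $\partial D$, a representative $D|_{\pi(B)}$, is equivalent to $C|_B$, a representative of $\partial C$.

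The only real obstacle is justifying that this chaining is legitimate, that is, that equivalence of punctured codes is transitive. The punctured codes have coordinates indexed by possibly different sets, compared through bijections. Equivalence is the group action of translations composed with monomial maps, and these are closed under composition. After fixing bijections between index sets, where changing a bijection only introduces a permutation as remarked before Definition \ref{def:blocks}, the relation is an equivalence relation. So the chain of equivalences closes, and the statement follows.
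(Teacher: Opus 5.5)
Your proof is correct and is essentially the intended argument: the paper does not reprove this result, but cites it from \cite{FullClass} as a direct consequence of Lemma \ref{lem:blocks-invariance}, via the block bijection $B\mapsto\pi(B)$, the equivalence of $C|_B$ with $D|_{\pi(B)}$, and transitivity of equivalence. One point you should state explicitly is that block-homogeneity of $C$ already forces all blocks of $C$ to have a common size, since equivalent punctured codes must have the same length; this turns your conditional reading of the size claim into the unconditional statement.
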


Following \cite{Z2Z4Z8Profile,OthersProfile}, we call the multiset of the dimensions
of the spaces $\langle C\rangle|_B$, with $B\in\cB(C)$, the
\emph{kernel-block rank profile} of $C$.  By \eqref{eq:span-puncture}, these
dimensions are the ranks of the punctured codes $C|_B$.  Hence, by Lemmas
\ref{lem:invariance} and \ref{lem:blocks-invariance}, equivalent codes
containing $\zero$ have the same kernel-block rank profile.  The kernel
blocks and the derived codes of the nonlinear reduced codes are given by the
following theorem.

\begin{theorem}[\cite{FullClass}]\label{thm:derivedZps}
Let $\bH=\bH_p^{t_1,\dots,t_s}$ be a nonlinear reduced code of length $p^t$,
and let $\kappa=\kernel(\bH)$.  Then, $\bH$ has $p^{\kappa-1}$ kernel blocks,
and each of them has $p^{\,t-\kappa+1}$ coordinates.  Moreover, $\bH$ is
block-homogeneous, and, for every $B\in\cB(\bH)$, the code $\bH|_B$ is
permutation equivalent to the $\Z_{p^{s-1}}$-linear GH code
\begin{equation}\label{eq:derivedZps}
\begin{cases}
\bH_p^{t_1,\dots,t_{s-2},\,t_{s-1}+1} & \text{if $p$ is odd},\\[1mm]
\bH_2^{t_1,\dots,t_{s-1}} & \text{if } p=2.
\end{cases}
\end{equation}
\end{theorem}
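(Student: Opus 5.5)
We prove Theorem \ref{thm:derivedZps} by writing the coordinates of $\bH$ explicitly as functions of the coefficients defining a codeword, reading off the kernel from Theorem \ref{thm:kerZps}, and then identifying each kernel block with a copy of a smaller $\Z_{p^{s-1}}$-linear code.

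\textbf{Coordinates and kernel.} A codeword of $\bcH_p^{t_1,\dots,t_s}$ is $\lambda\bA_p^{t_1,\dots,t_s}$ with $\lambda\in\Z_{p^s}^{t_1+\cdots+t_s}$. Its entry at the column $\zz^T$ is $\langle\lambda,\zz\rangle$. After the Gray map, the coordinates of $\bH$ are indexed by pairs $(\zz,h)$ with $h\in\Z_p^{s-1}$, and by \eqref{eq:gray} the value at $(\zz,h)$ is
$$
\langle\lambda,\zz\rangle_{s-1}+\sum_{i=0}^{s-2}h_i\langle\lambda,\zz\rangle_i .
$$
We recall from \cite{HadamardZps,KernelZ2s} that the kernel is the Gray image of a specific subcode: for odd $p$, the codewords of order at most $p$ together with $\Phi_s$ of the rows of order $p^s$. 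Its dimension is the one given in Theorem \ref{thm:kerZps}. Since $\one\in\K(\bH)$, Lemma \ref{lem:one} applies: two coordinates lie in the same kernel block if and only if every kernel vector takes equal values at them.

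\textbf{Kernel blocks.} For the order-$\le p$ part, the values are linear forms in the top digits $\zz_{s-1}$, so this part is determined by the digit-$(s-1)$ content of $\zz$. The order-$p^s$ rows contribute the first digits, together with $h$. Evaluating these kernel vectors at a coordinate gives a vector in $\Z_p^{\kappa}$. Its first entry is fixed by the normalization $z_1=1$, and the remaining $\kappa-1$ entries are affine functions of $(\zz,h)$ that are jointly surjective. Hence there are exactly $p^{\kappa-1}$ blocks. A counting argument, using that all fibres of a surjective affine map have equal size, gives $p^{t-\kappa+1}$ coordinates per block.

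\textbf{Restriction to a block.} Fix a block $B$, i.e.\ fix the values of these kernel coordinate functions. Restricted to $B$, a codeword depends only on the quotient of the code by the kernel, and the surviving coordinates are parametrized by the free digits of $\zz$ and $h$. The goal is to rewrite the restriction as $\Phi_{s-1}$ of a $\Z_{p^{s-1}}$-additive code whose generator columns run over $\{1\}\times T_1^{t_1-1}\times\cdots$ with shifted parameters. Concretely:
\begin{itemize}
\item a digit of $\lambda$ of order $p^{s}$ loses its top level;
\item the $h_{s-2}$ direction and the $T_s$-rows for odd $p$ merge into an extra generator of order $p$ in the $(s-1)$-th level.
\end{itemize}
This merging yields $\bH_p^{t_1,\dots,t_{s-2},t_{s-1}+1}$ for odd $p$. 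For $p=2$ the carry behaves differently and the extra row is absorbed into the kernel, giving $\bH_2^{t_1,\dots,t_{s-1}}$. Finally, translating by a kernel vector (Lemma \ref{lem:blocks-invariance}) moves any block to the one through the zero coordinate, which gives block-homogeneity. Alternatively, each block is identified directly with the same parametrization.

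The main obstacle is the third step: showing that the digit arithmetic with carries, restricted to a block, coincides up to a coordinate permutation with the Gray map of level $s-1$. In particular, one must show that the carry term is absorbed exactly as an added order-$p$ row for odd $p$, which is where $p=2$ diverges. I would first verify this by induction on $s$ using the recursive structure of $\bA_p^{t_1,\dots,t_s}$.
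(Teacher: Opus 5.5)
\textbf{Verdict: there is a genuine gap.} The paper does not prove this theorem; it quotes it from \cite{FullClass}, so there is no internal proof to compare with. Judged on its own, your proposal leaves the central step undone, and the parts you do spell out contain errors for odd $p$.

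\textbf{The kernel is misidentified for odd $p$.} For odd $p$, $\K(\bH)$ is exactly $\Phi_s$ of the subcode of vectors of order at most $p$. It contains this set because $\phi_s(u+p^{s-1}a)=\phi_s(u)+a\one$. This set already has dimension $t_1+\cdots+t_s=\kappa$, by Theorem \ref{thm:kerZps}. Nothing else can be added. For instance, $\Phi_s(\one)$ takes the value $h_0$ at $(\zz,h)$, so it is not constant on the $p^{s-1}$ Gray coordinates of a column, whereas every vector of that subcode is. Hence no kernel vector depends on $h$. The basis vector $\Phi_s(p^{s-i}\mathbf r_k)$, for a row $\mathbf r_k$ with entries in $T_i$, takes at $(\zz,h)$ the digit of index $i-1$ of $z_k$. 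This is the lowest admissible digit of $z_k$, not a top digit. So the kernel blocks are unions of the Gray images of whole columns, with prescribed digits $\xi_k$ for $k\geq2$; $h$-dependent labels would split them wrongly.

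\textbf{Block-homogeneity cannot come from a translation.} Translating by a kernel vector fixes every coordinate. In Lemma \ref{lem:blocks-invariance} it has $\pi=\mathrm{id}$, so it maps each block to itself and cannot move one block onto another.

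\textbf{The identification of $\bH|_B$ is missing, and the heuristic is wrong.} This identification is the whole content of the theorem, and you only announce it. The $T_s$-rows are constant on $B$ and simply disappear as rows. The extra order-$p$ generator comes from $h_0$, not from $h_{s-2}$. The missing argument is a direct computation, the $s$-level analogue of the proof of Theorem \ref{thm:derived}; no induction on $s$ is needed.
\begin{itemize}
\item On $B$, write $z_k=p^{i-1}(\tilde\xi_k+py_k)$ for each row $k\geq2$ of level $i$.
\item Then $\langle\lambda,\zz\rangle=\omega+p\mu$. Here $\omega=\lambda_1+\sum_k\lambda_kp^{i-1}\tilde\xi_k$ does not depend on the coordinate, and $\mu=\sum_k\lambda_kp^{i-1}y_k\in\Z_{p^{s-1}}$.
\item Write $\omega=\tilde\omega_0+p\omega'$ with $\omega'\in\Z_{p^{s-1}}$. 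Then
\[
\phi_s(\omega+p\mu)_h=\phi_{s-1}\bigl(\omega'+\mu+p^{s-2}h_0\omega_0\bigr)_{(h_1,\dots,h_{s-2})}.
\]
\item There is no carry to control. Digit $0$ is untouched, and the nonlinear part $\omega'$ is constant on $B$, so it becomes the coefficient of $\one$.
\item The new columns $(1,(p^{i-1}y_k)_k,p^{s-2}h_0)$, with $k$ over the rows of level at most $s-1$, run once over the columns of $\bA_p^{t_1,\dots,t_{s-2},t_{s-1}+1}$.
\item The coefficient map $\lambda\mapsto(\omega',(\lambda_k)_{k},\omega_0)$ is onto, because $\lambda_1$ is free.
\end{itemize}
The resulting code does not depend on $\xi$, so block-homogeneity follows at once.

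\textbf{The case $p=2$ is not a carry phenomenon.} For $p=2$ the same identity holds. What changes is the extra kernel vector $\Phi_s((2^{s-1}-1)\one)$, whose value is $h_0+\cdots+h_{s-2}$. It halves each of these sets, and this removes the extra row.
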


By Theorem \ref{thm:GHZps}, the code \eqref{eq:derivedZps} is a GH code over
$\Z_p$ of length $p^{\,t-\kappa+1}$.  If $p$ is odd, it is linear if and only
if $s=2$.  Indeed, for $s=2$, it is the linear code $\bH_p^{t_1+1}$.  For
$s\geq3$, it is a reduced code, since its first parameter is $t_1\geq2$, and
hence it is nonlinear by Theorem \ref{thm:linZps}.  If $p=2$, it is linear if
and only if $s=2$, or $s=3$ and $t_1=2$, by Theorem \ref{thm:linZps}.  A GH
code over $\Z_p$ of length $p^{\,t-\kappa+1}$ has $p^{\,t-\kappa+2}$ codewords.
Hence, the rank of \eqref{eq:derivedZps} is at least $t-\kappa+2$, with
equality if and only if this code is linear.  Since $\bH$ is block-homogeneous,
its kernel-block rank profile consists of the single value
$\rank(\partial\bH)$, repeated $p^{\kappa-1}$ times.  Moreover,
$\rank(\partial\bH)=t-\kappa+2$ if and only if $s=2$, or $p=2$, $s=3$ and
$t_1=2$.

We now study the codes $H_p^{t_1,t_2}$.  From now on in this section,
$t_1,t_2\geq1$.  By Theorem \ref{thm:linZpZp2}, the code $H_p^{t_1,t_2}$ is
nonlinear if and only if $p$ is odd or $t_1\geq2$.  We first describe the
columns of $A_p^{t_1,t_2}$ and the values of the codewords of
$H_p^{t_1,t_2}$.

Recall that, for a vector $x$ over $\Z_{p^2}$ and a vector $\xi$ over $\Z_p$,
the vectors $\overline{x}$, $\tilde\xi$ and $p\xi=p\tilde\xi$ are defined
entrywise as in Section \ref{sec:prelim}.  If $\xi\in\Z_p^n$ is nonzero, then
we denote by $i(\xi)$ the smallest index $i$ such that $\xi_i\neq0$.  We put
$$
X_n=\bigl\{x\in\Z_{p^2}^{n}:\ \overline{x}\neq\zero\ \text{and}\
x_{i(\overline{x})}\in\{1,\dots,p-1\}\bigr\}
$$
for $n\geq1$.  Thus, a vector $x$ belongs to $X_n$ if and only if some entry
of $x$ does not belong to $p\Z_{p^2}$, and the first such entry belongs to
$\{1,\dots,p-1\}\subseteq\Z_{p^2}$.  For instance, $X_1=\{1,\dots,p-1\}$.

\begin{lemma}\label{lem:X}
Let $\xi\in\Z_p^n$ be nonzero, and let $i=i(\xi)$.  Then, the vectors
$x\in X_n$ with $\overline{x}=\xi$ are exactly the vectors $\tilde\xi+py$ with
$y\in\Z_p^n$ and $y_i=0$.  In particular, there are $p^{n-1}$ such vectors,
and $|X_n|=(p^n-1)p^{n-1}$.
\end{lemma}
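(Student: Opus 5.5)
We prove the lemma directly from the definitions of $X_n$, of the residue map $x\mapsto\overline{x}$ and of the lift $\xi\mapsto\tilde\xi$ from Section~\ref{sec:prelim}.

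First, we parametrize the vectors with a given residue. Every $x\in\Z_{p^2}^n$ with $\overline{x}=\xi$ can be written uniquely as $x=\tilde\xi+py$ with $y\in\Z_p^n$. Indeed, each entry $x_k\in\Z_{p^2}$ has a unique expansion $x_k=[u_0,u_1]_p=\tilde u_0+p\tilde u_1$, and $\overline{x_k}=u_0=\xi_k$. Hence $x_k=\tilde\xi_k+p\tilde u_1$, so we may take $y_k=u_1$. Uniqueness of $y$ follows from the uniqueness of the digit expansion. Thus the map $y\mapsto\tilde\xi+py$ is a bijection from $\Z_p^n$ onto the set of $x\in\Z_{p^2}^n$ with $\overline{x}=\xi$.

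Next, we identify which of these vectors lie in $X_n$. Since $\overline{x}=\xi\neq\zero$, the first condition in the definition of $X_n$ holds automatically, and $i(\overline{x})=i(\xi)=i$. So $x\in X_n$ if and only if $x_i\in\{1,\dots,p-1\}$. Now $x_i=\tilde\xi_i+py_i=[\xi_i,y_i]_p$ with $\xi_i\neq0$. The elements of $\{1,\dots,p-1\}\subseteq\Z_{p^2}$ are exactly those whose digit of index $1$ is $0$ and whose digit of index $0$ is nonzero. Hence $x_i\in\{1,\dots,p-1\}$ if and only if $y_i=0$. This proves the characterization.

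For the counting statement, the admissible $y$ are the vectors of $\Z_p^n$ with $y_i=0$, and there are $p^{n-1}$ of them. Since every $x\in X_n$ has nonzero residue, $X_n$ is the disjoint union, over the $p^n-1$ nonzero $\xi\in\Z_p^n$, of the sets just described. Therefore $|X_n|=(p^n-1)p^{n-1}$.

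There is no real obstacle. The only point needing care is the identification of $\{1,\dots,p-1\}$ inside $\Z_{p^2}$ as the elements whose digit of index $1$ vanishes, which is immediate from the uniqueness of the digit expansion.
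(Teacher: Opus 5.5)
Your proof is correct and follows essentially the same route as the paper. You write each $x$ with $\overline{x}=\xi$ uniquely as $\tilde\xi+py$, reduce membership in $X_n$ to the condition $x_i=\tilde\xi_i+py_i\in\{1,\dots,p-1\}$ (which holds exactly when $y_i=0$), and count over the $p^n-1$ nonzero residues; you simply spell out the digit-expansion justification more explicitly than the paper does.
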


\begin{proof}
A vector $x\in\Z_{p^2}^n$ satisfies $\overline{x}=\xi$ if and only if
$x=\tilde\xi+py$ for some $y\in\Z_p^n$, and such a vector $y$ is unique.  In
that case, $i(\overline{x})=i$ and $x_i=\tilde\xi_i+py_i$.  Since
$\tilde\xi_i\in\{1,\dots,p-1\}$, the entry $x_i$ belongs to $\{1,\dots,p-1\}$
if and only if $y_i=0$.  This proves the first statement.  Hence, each of the
$p^n-1$ nonzero vectors $\xi$ is the residue modulo $p$ of exactly $p^{n-1}$
vectors of $X_n$, and the last statement follows.
\end{proof}

\begin{lemma}\label{lem:columns}
Let $A_p^{t_1,t_2}=(A_1\mid A_2)$.  Then, the columns of $A_1$ are exactly,
each of them once, the vectors $(1,a,c)^T$ with $a\in\Z_p^{t_1}$ and
$c\in\Z_p^{t_2-1}$.  Moreover, the columns of $A_2$ are exactly, each of them
once, the vectors $(p,x,pc)^T$ with $x\in X_{t_1}$ and $c\in\Z_p^{t_2-1}$.
\end{lemma}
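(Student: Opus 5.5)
We argue by induction following the recursive construction: first on $t_1$ with $t_2=1$ using \eqref{eq:constr-p2}, and then on $t_2$ with $t_1$ fixed using \eqref{eq:constr-p}.

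\emph{Base case $(t_1,t_2)=(1,1)$.} The columns of $A_1$ in $A_p^{1,1}$ are $(1,a)^T$ with $a\in\Z_p$. The columns of $A_2$ are $(p,x)^T$ with $x\in\{1,\dots,p-1\}=X_1$. The vector $c$ is empty. So the statement holds directly.

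\emph{Step $(t_1,1)\to(t_1+1,1)$.} Assume the description holds for $A=A_p^{t_1,1}$, so $A_1$ has columns $(1,a)^T$ with $a\in\Z_p^{t_1}$, and $A_2$ has columns $(p,x)^T$ with $x\in X_{t_1}$. In \eqref{eq:constr-p2}, the $\Z_p$-part of $A_p^{t_1+1,1}$ consists of the blocks $A_1$ with $b\in\{0,\dots,p-1\}$ appended. This gives exactly the columns $(1,a,b)^T$ with $(a,b)\in\Z_p^{t_1+1}$, each once.

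The $\Z_{p^2}$-part has two kinds of columns:
\begin{itemize}
\item From the blocks $pA_1$: columns $(p,p\tilde a,\tilde b)^T$ with $b\in\Z_p^*$.
\item From the blocks $A_2$: columns $(p,x,y)^T$ with $x\in X_{t_1}$ and $y\in\Z_{p^2}$ arbitrary.
\end{itemize}

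It remains to check that these are exactly the vectors $(p,z)^T$ with $z\in X_{t_1+1}$, each once. Split $X_{t_1+1}$ according to whether the first $t_1$ entries have nonzero residue.
\begin{itemize}
\item If they do, then $z=(x,y)$ with $x\in X_{t_1}$ and no condition on $y$. This is the second family.
\item If they do not, then the first $t_1$ entries lie in $p\Z_{p^2}$, so they equal $p\tilde a$ for a unique $a$. The last entry must then lie in $\{1,\dots,p-1\}$. This is the first family.
\end{itemize}
The two families are disjoint, and each is listed without repetition. As a check, the count is $p^{t_1}(p-1)+|X_{t_1}|p^2=(p^{t_1+1}-1)p^{t_1}$, which agrees with Lemma \ref{lem:X}.

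\emph{Step $(t_1,t_2)\to(t_1,t_2+1)$.} In \eqref{eq:constr-p}, each column $(1,a,c)^T$ of $A_1$ acquires a new last entry $j\in\{0,\dots,p-1\}$ in the $j$th copy, giving $(1,a,(c,j))^T$. Each column $(p,x,pc)^T$ of $A_2$ acquires $p\tilde j$, giving $(p,x,p(c,j))^T$. As $j$ runs over $\Z_p$, $(c,j)$ runs bijectively over $\Z_p^{t_2}$, which completes the induction.

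The only real obstacle is the bookkeeping in the second step: showing that the two families of $\Z_{p^2}$-columns partition $X_{t_1+1}$ without repetition. The normalization condition defining $X_n$, namely that the first non-$p\Z_{p^2}$ entry lies in $\{1,\dots,p-1\}$, is exactly what makes this split work.
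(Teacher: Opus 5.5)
Your proof is correct and follows essentially the same route as the paper: induction on $t_1$ via \eqref{eq:constr-p2}, then induction on $t_2$ via \eqref{eq:constr-p}. Your split of the $\Z_{p^2}$-columns by whether the first $t_1$ entries have nonzero residue is the paper's split into $i(\overline{x'})\leq t_1$ versus $i(\overline{x'})=t_1+1$, and the extra cardinality check against Lemma \ref{lem:X} is harmless.
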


\begin{proof}
We first consider the matrices $A_p^{t_1,1}$, and we use induction on $t_1$.
For $t_1=1$, the columns of $A_1$ are the vectors $(1,a)^T$ with $a\in\Z_p$,
and the columns of $A_2$ are the vectors $(p,x)^T$ with
$x\in\{1,\dots,p-1\}=X_1$.  Hence, the statement holds.  Assume that it holds
for $A_p^{t_1,1}=(A_1\mid A_2)$.  By \eqref{eq:constr-p2}, the first part of
$A_p^{t_1+1,1}$ is formed by $p$ copies of $A_1$, indexed by $\Z_p$, and the
new row takes the value $b$ on the copy with index $b\in\Z_p$.  Hence, the
columns of this part are the vectors $(1,a,b)^T$ with $a\in\Z_p^{t_1}$ and
$b\in\Z_p$, that is, the vectors $(1,a')^T$ with $a'\in\Z_p^{t_1+1}$.  The
second part is formed by $p-1$ copies of $pA_1$ and $p^2$ copies of $A_2$.  The
copy of $pA_1$ with index $d\in\{1,\dots,p-1\}$ gives the columns $(p,pa,d)^T$
with $a\in\Z_p^{t_1}$.  These are the vectors $(p,x')^T$ with $x'\in X_{t_1+1}$
and $i(\overline{x'})=t_1+1$.  The copy of $A_2$ with index $e\in\Z_{p^2}$
gives the columns $(p,x,e)^T$ with $x\in X_{t_1}$.  These are the vectors
$(p,x')^T$ with $x'\in X_{t_1+1}$ and $i(\overline{x'})\leq t_1$.  Hence, the
statement holds for $A_p^{t_1+1,1}$.

Finally, by \eqref{eq:constr-p}, the columns of the first part of
$A_p^{t_1,t_2+1}$ are the vectors $(1,a,c,c_{t_2})^T$, and the columns of the
second part are the vectors $(p,x,pc,pc_{t_2})^T$.  Here, $(1,a,c)^T$ and
$(p,x,pc)^T$ run over the columns of $A_p^{t_1,t_2}$, and $c_{t_2}$ runs over
$\Z_p$.  Hence, by induction on $t_2$, the statement holds for every $t_2\geq1$.
\end{proof}

In the vectors of Lemma \ref{lem:columns}, the first entry is the entry of
the row $\vv_0$, the next $t_1$ entries are those of the rows
$\vv_1,\dots,\vv_{t_1}$, and the last $t_2-1$ entries are those of the rows
$\ww_1,\dots,\ww_{t_2-1}$.  From now on, we write $X=X_{t_1}$.

By Lemma \ref{lem:columns}, we index the coordinates of $H_p^{t_1,t_2}$ as
follows.  The first $\alpha_1$ coordinates are indexed by the pairs $(a,c)$,
where $(1,a,c)^T$ is the corresponding column of $A_1$.  The last
$p\alpha_2$ coordinates are indexed by the triples $(x,c,h)$ with $x\in X$,
$c\in\Z_p^{t_2-1}$ and $h\in\Z_p$.  Here, $(p,x,pc)^T$ is the corresponding
column of $A_2$, and $h$ indexes the $p$ coordinates of $\phi$ as in Section
\ref{sec:prelim}.  When $t_2=1$, the vector $c$ is empty.  Every codeword of
$\cH_p^{t_1,t_2}$ can be written as
\begin{equation}\label{eq:u}
\uu=\varepsilon\vv_0+\sum_{k=1}^{t_1}\lambda_k\vv_k+
\sum_{j=1}^{t_2-1}\eta_j\ww_j ,
\end{equation}
where $\varepsilon\in\Z_p$, $\lambda=(\lambda_1,\dots,\lambda_{t_1})\in
\Z_{p^2}^{t_1}$ and $\eta=(\eta_1,\dots,\eta_{t_2-1})\in\Z_p^{t_2-1}$.  Since
the rows $\vv_0$ and $\ww_j$ have order $p$, the products
$\varepsilon\vv_0$ and $\eta_j\ww_j$ are defined as in Section
\ref{sec:prelim}.

\begin{lemma}\label{lem:values}
Let $\uu$ be as in \eqref{eq:u}.  Then,
$$
\Phi(\uu)_{(a,c)}=\varepsilon+\langle\overline{\lambda},a\rangle
+\langle\eta,c\rangle
$$
and
$$
\Phi(\uu)_{(x,c,h)}=\varepsilon+\langle\eta,c\rangle
+\phi\bigl(\langle\lambda,x\rangle\bigr)_h .
$$
\end{lemma}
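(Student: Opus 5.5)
The plan is a direct computation from Lemma \ref{lem:columns} and the definitions of $\Phi$ and $\phi$. The coordinates of $\Phi(\uu)$ come in two kinds, and each value is obtained from the entry of $\uu$ at the corresponding column of $A_p^{t_1,t_2}$.

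For a coordinate $(a,c)$ over $\Z_p$, the column is $(1,a,c)^T$. The entry of $\uu$ there is
$$
\varepsilon\cdot1+\sum_{k=1}^{t_1}\lambda_k a_k+\sum_{j=1}^{t_2-1}\eta_jc_j .
$$
Since $\Z_{p^2}$ acts on the $\Z_p$-coordinates through residues, $\lambda_k$ acts as $\overline{\lambda_k}$. Moreover $\Phi$ is the identity on these coordinates. This gives the first formula immediately.

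For a coordinate $(x,c,h)$, the column is $(p,x,pc)^T$, and the entry of $\uu$ in $\Z_{p^2}$ is
$$
u=\varepsilon p+\langle\lambda,x\rangle+\sum_{j}\eta_j\,pc_j .
$$
Here $\varepsilon p=p\tilde\varepsilon$, and $\eta_j pc_j=p\,\widetilde{\eta_jc_j}$ modulo $p^2$. Following the conventions of Section \ref{sec:prelim}, I collect these into $u=\langle\lambda,x\rangle+pa$ with $a=\varepsilon+\langle\eta,c\rangle\in\Z_p$. This uses that $p\tilde b+p\tilde b'=p\,\widetilde{(b+b')}$ in $\Z_{p^2}$, because $pz$ depends only on the residue.

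Lemma \ref{lem:shift} then gives $\phi(u)=\phi(\langle\lambda,x\rangle)+a\one$. Taking the $h$ entry yields the second formula.

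The only step that needs care is the bookkeeping of the products $\varepsilon\vv_0$ and $\eta_j\ww_j$, where a scalar in $\Z_p$ multiplies a vector of order $p$. I need to check that the $\Z_{p^2}$ entries $p\tilde\varepsilon$ and $p\tilde\eta_j c_j$ combine into $p$ times the residue sum. This holds because $p(\tilde b\tilde b')\equiv p\,\widetilde{bb'}\pmod{p^2}$. Everything else is routine.
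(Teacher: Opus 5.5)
Your proposal is correct and follows essentially the same route as the paper: read off the entries of $\uu$ at the columns $(1,a,c)^T$ and $(p,x,pc)^T$ given by Lemma \ref{lem:columns}, use the residue action on the $\Z_p$-coordinates, and collect the $p$-multiples as $\langle\lambda,x\rangle+p(\varepsilon+\langle\eta,c\rangle)$ before applying Lemma \ref{lem:shift}. The only blemish is that you reuse the letter $a$ for the scalar $\varepsilon+\langle\eta,c\rangle$ while $a$ already denotes the coordinate index; this is harmless, but rename it.
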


\begin{proof}
By Lemma \ref{lem:columns}, the entry of $\uu$ at the coordinate $(a,c)$ is
$\varepsilon+\sum_{k}\overline{\lambda_k}a_k+\sum_{j}\eta_jc_j$, since
$\lambda_k$ acts on the coordinates over $\Z_p$ through its residue modulo $p$.
This proves the first equality, since $\Phi$ does not change the first
$\alpha_1$ coordinates.  The entry of $\uu$ at the coordinate of
$\cH_p^{t_1,t_2}$ with column $(p,x,pc)^T$ is
$$
p\varepsilon+\langle\lambda,x\rangle+\sum_{j=1}^{t_2-1}\eta_j\,pc_j
=\langle\lambda,x\rangle+p\bigl(\varepsilon+\langle\eta,c\rangle\bigr)
\in\Z_{p^2}.
$$
By Lemma \ref{lem:shift}, its image under $\phi$ is
$\phi(\langle\lambda,x\rangle)+(\varepsilon+\langle\eta,c\rangle)\one$.  This
proves the second equality.
\end{proof}

We can now determine the kernel blocks of a nonlinear code $H_p^{t_1,t_2}$.
We attach to each coordinate of $H_p^{t_1,t_2}$ the label
$$
\rho(a,c)=(\zero,c)\quad\text{and}\quad\rho(x,c,h)=(\overline{x},c),
$$
which belongs to $\Z_p^{t_1}\times\Z_p^{t_2-1}$.  Since $\overline{x}\neq\zero$
for every $x\in X$, the first $\alpha_1$ coordinates are exactly the
coordinates whose label has first component $\zero$.

\begin{theorem}\label{thm:blocks}
Let $H=H_p^{t_1,t_2}$ be nonlinear.  Then, two coordinates of $H$ lie in the
same kernel block if and only if they have the same label.  Moreover, every
element of $\Z_p^{t_1}\times\Z_p^{t_2-1}$ is the label of exactly $p^{t_1}$
coordinates.  Consequently, $H$ has $p^{t_1+t_2-1}$ kernel blocks, and each of
them has $p^{t_1}$ coordinates.
\end{theorem}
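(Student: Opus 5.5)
Since $H$ is nonlinear, Theorem \ref{thm:kerZpZp2} gives the kernel explicitly: it is spanned by $\Phi(\vv_0)$, $\Phi(p\vv_k)$ for $1\leq k\leq t_1$, and $\Phi(\ww_j)$ for $1\leq j\leq t_2-1$. In particular $\Phi(\vv_0)=\one$ belongs to $\K(H)$, because $\vv_0=(\one\mid\bp)$ and $\phi(p)=\one$ by Lemma \ref{lem:shift}. Lemma \ref{lem:one} therefore applies. Two coordinates lie in the same kernel block if and only if every kernel vector takes the same value at both. It suffices to check this on the basis vectors, since both properties are linear.

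I will compute the basis vectors using Lemma \ref{lem:values}. Take $\uu=\varepsilon\vv_0+\sum_k\lambda_k\vv_k+\sum_j\eta_j\ww_j$ with $\lambda=p\mu$ for some $\mu\in\Z_p^{t_1}$; these are exactly the order-$p$ codewords. Then $\overline\lambda=\zero$, so $\Phi(\uu)_{(a,c)}=\varepsilon+\langle\eta,c\rangle$. At the other coordinates, $\langle\lambda,x\rangle=p\langle\mu,\overline x\rangle$, so by Lemma \ref{lem:shift} we get $\phi(\langle\lambda,x\rangle)_h=\langle\mu,\overline x\rangle$. Hence
$$\Phi(\uu)_{(x,c,h)}=\varepsilon+\langle\eta,c\rangle+\langle\mu,\overline x\rangle .$$
Writing the label as $\rho=(r,c)$ with $r=\zero$ on the first $\alpha_1$ coordinates, every kernel vector has the form
$$\text{coordinate}\mapsto \varepsilon+\langle\mu,r\rangle+\langle\eta,c\rangle .$$
So the value of a kernel vector depends only on the label, and equal labels force equal values. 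Conversely, suppose two coordinates have different labels $(r,c)\neq(r',c')$. Choose $(\mu,\eta)$ so that the linear form $\langle\mu,r\rangle+\langle\eta,c\rangle$ separates them; this is possible because the difference of the labels is nonzero. This proves the first claim. The main care needed is the computation $\phi(p\,m)_h=\overline m$ being independent of $h$, which Lemma \ref{lem:shift} gives with $u=0$.

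Next I count the coordinates with a given label. Label $(\zero,c)$ belongs to the coordinates $(a,c)$ with $a\in\Z_p^{t_1}$ free, which gives $p^{t_1}$ of them. Label $(\xi,c)$ with $\xi\neq\zero$ belongs to the triples $(x,c,h)$ with $\overline x=\xi$ and $h\in\Z_p$. By Lemma \ref{lem:X} there are $p^{t_1-1}$ choices of $x$, which gives $p^{t_1-1}\cdot p=p^{t_1}$ coordinates.

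Finally, every element of $\Z_p^{t_1}\times\Z_p^{t_2-1}$ occurs as a label. The number of blocks is therefore $p^{t_1+t_2-1}$, each of size $p^{t_1}$. As a consistency check, the total is $p^{2t_1+t_2-1}=p^t$, matching \eqref{eq:t}.
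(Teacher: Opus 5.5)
Your proof is correct and follows the paper's own argument: use $\one\in\K(H)$ and Lemma \ref{lem:one}, evaluate the kernel from Theorem \ref{thm:kerZpZp2} at each coordinate via Lemma \ref{lem:values} to see that kernel vectors depend only on the label and separate distinct labels, then count coordinates per label with Lemma \ref{lem:X}. The only cosmetic difference is that you write a general kernel vector as $\varepsilon+\langle\mu,r\rangle+\langle\eta,c\rangle$ and separate labels by nondegeneracy of the pairing. The paper instead evaluates the basis vectors one at a time and reads off the vector $(1,\xi,c)$; the two are equivalent.
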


\begin{proof}
By using Lemma \ref{lem:values}, we evaluate the basis of $\K(H)$ given by
Theorem \ref{thm:kerZpZp2} at each coordinate.  The vector $\Phi(\vv_0)$
corresponds to $\varepsilon=1$, $\lambda=\zero$ and $\eta=\zero$, and hence
$\Phi(\vv_0)=\one$.  Let $1\leq k\leq t_1$.  The vector $\Phi(p\vv_k)$
corresponds to $\varepsilon=0$, $\lambda=pe_k$ and $\eta=\zero$.  Hence, it
takes the value $0$ at $(a,c)$, since $\overline{pe_k}=\zero$.  At $(x,c,h)$,
it takes the value $\phi(px_k)_h=\overline{x_k}$, because the digits of $px_k$
are $0$ and $\overline{x_k}$.  Let $1\leq j\leq t_2-1$.  The vector
$\Phi(\ww_j)$ takes the value $c_j$ at both $(a,c)$ and $(x,c,h)$.  Therefore,
at a coordinate with label $(\xi,c)$, the values of these $t_1+t_2$ basis
vectors form the vector $(1,\xi,c)$.

Since $\one\in\K(H)$, we can apply Lemma \ref{lem:one}.  For two coordinates
$i$ and $i'$, the condition $y_i=y_{i'}$ is linear in $\yy$.  Hence, it holds
for every $\yy\in\K(H)$ if and only if it holds for the vectors of a basis of
$\K(H)$.  Therefore, two coordinates lie in the same kernel block of $H$ if
and only if they have the same label.  For $c\in\Z_p^{t_2-1}$, the label
$(\zero,c)$ is the label of the $p^{t_1}$ coordinates $(a,c)$ with
$a\in\Z_p^{t_1}$.  Let $\xi\in\Z_p^{t_1}$ be nonzero.  By Lemma \ref{lem:X},
there are $p^{t_1-1}$ vectors $x\in X$ with $\overline{x}=\xi$.  Hence, the
label $(\xi,c)$ is the label of the $p^{t_1}$ coordinates $(x,c,h)$ with
$\overline{x}=\xi$ and $h\in\Z_p$.  Therefore, the kernel blocks of $H$ are in
bijection with $\Z_p^{t_1}\times\Z_p^{t_2-1}$, and each of them has $p^{t_1}$
coordinates.
\end{proof}

Let $H=H_p^{t_1,t_2}$ be nonlinear.  For
$(\xi,c)\in\Z_p^{t_1}\times\Z_p^{t_2-1}$, we denote by $B_{(\xi,c)}$ the set of
coordinates of $H$ whose label is $(\xi,c)$.  By the definition of $\rho$, we
have
$$
B_{(\xi,c)}=\begin{cases}
\bigl\{(a,c):\ a\in\Z_p^{t_1}\bigr\} & \text{if } \xi=\zero,\\[1mm]
\bigl\{(x,c,h):\ x\in X,\ \overline{x}=\xi,\ h\in\Z_p\bigr\}
& \text{if } \xi\neq\zero.
\end{cases}
$$
By Theorem \ref{thm:blocks}, the kernel blocks of $H$ are exactly the sets
$B_{(\xi,c)}$.  We call $B_{(\xi,c)}$ the kernel block of $H$ with
\emph{label} $(\xi,c)$.

By Theorem \ref{thm:kerZpZp2}, the kernel of a nonlinear code
$H_p^{t_1,t_2}$ has dimension $\kappa=t_1+t_2$, and \eqref{eq:t} gives
$t-\kappa+1=t_1$.  Hence, by Theorem \ref{thm:blocks}, the code
$H_p^{t_1,t_2}$ has $p^{\kappa-1}$ kernel blocks, each of them with
$p^{\,t-\kappa+1}$ coordinates.  By Theorem \ref{thm:derivedZps}, a nonlinear
reduced code of length $p^t$ with kernel dimension $\kappa$ has the same
number of kernel blocks, of the same size.  The kernel blocks of
$H_p^{t_1,t_2}$ also give equivalent punctured codes, as we show now.

\begin{theorem}\label{thm:derived}
Let $H=H_p^{t_1,t_2}$ be nonlinear.  Then, for every kernel block $B$ of $H$,
the code $H|_B$ is permutation equivalent to the linear GH code
$\bH_p^{t_1+1}$ of length $p^{t_1}$.  Consequently, $H$ is block-homogeneous,
and its derived code $\partial H$ is, up to equivalence, the linear code
$\bH_p^{t_1+1}$.
\end{theorem}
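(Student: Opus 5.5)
The plan is to compute $H|_B$ explicitly for each kernel block $B=B_{(\xi,c)}$ using Lemma \ref{lem:values}, and to show that it is, after a bijection of coordinates, exactly the set of evaluations of all affine functions on $\Z_p^{t_1}$, that is, the code \eqref{eq:affine} with $t=t_1$.

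\emph{Blocks with $\xi=\zero$.} Here $B=\{(a,c):a\in\Z_p^{t_1}\}$. By Lemma \ref{lem:values}, the restriction of $\Phi(\uu)$ is $\bigl(\varepsilon+\langle\eta,c\rangle+\langle\overline{\lambda},a\rangle\bigr)_{a\in\Z_p^{t_1}}$. Since $\varepsilon$ and $\lambda$ are arbitrary, $\nu=\varepsilon+\langle\eta,c\rangle$ runs over $\Z_p$ and $\gamma=\overline{\lambda}$ runs over $\Z_p^{t_1}$. Hence $H|_B$ equals $\bH_p^{t_1+1}$ with coordinates indexed by $a$.

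\emph{Blocks with $\xi\neq\zero$.} Let $i=i(\xi)$. By Lemma \ref{lem:X}, the coordinates of $B$ are the triples $(\tilde\xi+py,c,h)$ with $y\in\Z_p^{t_1}$, $y_i=0$, and $h\in\Z_p$. This gives $t_1-1$ free entries of $y$ plus $h$, so $t_1$ free parameters. Write $\langle\lambda,\tilde\xi+py\rangle=\langle\lambda,\tilde\xi\rangle+p\langle\overline{\lambda},y\rangle$ and let $w=\langle\lambda,\tilde\xi\rangle\in\Z_{p^2}$. Then $\overline{w}=\langle\overline{\lambda},\xi\rangle$. By Lemma \ref{lem:shift} and \eqref{eq:gray}, the value at $(\tilde\xi+py,c,h)$ is
$$
\varepsilon+\langle\eta,c\rangle+w_1+\langle\overline{\lambda},y\rangle+h\,\langle\overline{\lambda},\xi\rangle .
$$
Index the coordinates by $(y',h)\in\Z_p^{t_1-1}\times\Z_p$, where $y'$ is $y$ with its $i$th entry deleted. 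The restricted codeword is then the affine function with constant term $\nu=\varepsilon+\langle\eta,c\rangle+w_1$ and linear coefficients $\bigl((\overline{\lambda_k})_{k\neq i},\langle\overline{\lambda},\xi\rangle\bigr)$. So $H|_B\subseteq\bH_p^{t_1+1}$.

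For the reverse inclusion, the linear map $\overline{\lambda}\mapsto\bigl((\overline{\lambda_k})_{k\neq i},\langle\overline{\lambda},\xi\rangle\bigr)$ is a bijection of $\Z_p^{t_1}$, because $\xi_i\neq0$ lets us recover $\overline{\lambda_i}$. For fixed $\lambda$, varying $\varepsilon$ adjusts $\nu$ arbitrarily. Hence every affine function is attained and $H|_B=\bH_p^{t_1+1}$ up to a coordinate bijection, i.e.\ permutation equivalent to it.

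Combining the two cases with Theorem \ref{thm:blocks} gives block-homogeneity and identifies $\partial H$. The one delicate point is the identity $\phi(w+pz)_h=w_1+z+hw_0$ for $z\in\Z_p$, which follows from Lemma \ref{lem:shift} and \eqref{eq:gray}. Otherwise the argument is bookkeeping; the main thing to check is that the $\overline{\lambda}$-reparametrization is bijective, and this is exactly where $\xi\neq\zero$ is used.
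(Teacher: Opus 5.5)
Your proof is correct and follows the paper's argument. Both compute the restricted codewords with Lemmas \ref{lem:values}, \ref{lem:X} and \ref{lem:shift}, and identify $H|_B$ with the code \eqref{eq:affine} of all affine functions on $\Z_p^{t_1}$. The only cosmetic difference is in the nonzero-label case: the paper reindexes the block through $\psi(y,h)=y+h\xi$ so that the linear coefficient stays $\overline{\lambda}$, whereas you keep the coordinates $(y',h)$ and use that $\overline{\lambda}\mapsto\bigl((\overline{\lambda_k})_{k\neq i},\langle\overline{\lambda},\xi\rangle\bigr)$ is bijective, which rests on $\xi_i\neq0$ exactly as the bijectivity of $\psi$ does.
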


\begin{proof}
Let $R=\bH_p^{t_1+1}$, with its coordinates indexed by the vectors
$v\in\Z_p^{t_1}$ as in \eqref{eq:affine}.  For every kernel block $B$, we
give a bijection $\beta_B$ from $B$ onto $\Z_p^{t_1}$ which carries $H|_B$
onto $R$.  We write the codewords of $H$ as $\Phi(\uu)$, with $\uu$ as in
\eqref{eq:u}.

First, let $B=B_{(\zero,c)}$, and let $\beta_B(a,c)=a$.  By Lemma
\ref{lem:values}, the codeword $\Phi(\uu)$ takes the value
$\nu+\langle\overline{\lambda},a\rangle$ at $(a,c)$, where
$\nu=\varepsilon+\langle\eta,c\rangle$.  For fixed $\lambda$ and $\eta$, the
element $\nu$ runs over $\Z_p$ as $\varepsilon$ runs over $\Z_p$.  Moreover,
$\overline{\lambda}$ runs over $\Z_p^{t_1}$ as $\lambda$ runs over
$\Z_{p^2}^{t_1}$.  Hence, by \eqref{eq:affine}, the map $\beta_B$ carries
$H|_B$ onto $R$.

Now, let $B=B_{(\xi,c)}$ with $\xi\neq\zero$, and put $i=i(\xi)$.  By Lemma
\ref{lem:X}, the vectors $x\in X$ with $\overline{x}=\xi$ are exactly the
vectors $\tilde\xi+py$ with $y\in P_i$, where
$$
P_i=\{y\in\Z_p^{t_1}:\ y_i=0\}.
$$
Let $\psi:P_i\times\Z_p\rightarrow\Z_p^{t_1}$ be the map given by
$\psi(y,h)=y+h\xi$.  Since $\xi_i\neq0$, the map $\psi$ is a bijection.  Its
inverse maps $v$ to $(v-h\xi,h)$, where $h=v_i\xi_i^{-1}$.  Let
$\beta_B(\tilde\xi+py,c,h)=\psi(y,h)$.  Put
$\omega=\langle\lambda,\tilde\xi\rangle\in\Z_{p^2}$, and let
$[\omega_0,\omega_1]_p$ be its $p$-ary expansion.  Since
$\langle\lambda,\tilde\xi+py\rangle=\omega+p\langle\overline{\lambda},y\rangle$,
Lemma \ref{lem:shift} and \eqref{eq:gray} give
$$
\phi\bigl(\langle\lambda,\tilde\xi+py\rangle\bigr)_h
=\omega_1+\langle\overline{\lambda},y\rangle+h\,\omega_0 .
$$
Moreover, $\omega_0=\overline{\omega}=\langle\overline{\lambda},\xi\rangle$,
since $\overline{\tilde\xi}=\xi$.  Therefore, by Lemma \ref{lem:values}, the
codeword $\Phi(\uu)$ takes the value
$$
\nu+\langle\overline{\lambda},y\rangle+h\langle\overline{\lambda},\xi\rangle
=\nu+\bigl\langle\overline{\lambda},\psi(y,h)\bigr\rangle
$$
at $(\tilde\xi+py,c,h)$, where $\nu=\varepsilon+\langle\eta,c\rangle+\omega_1$.
As before, $\nu$ runs over $\Z_p$ as $\varepsilon$ runs over $\Z_p$, and
$\overline{\lambda}$ runs over $\Z_p^{t_1}$.  Hence, by \eqref{eq:affine},
$\beta_B$ also carries $H|_B$ onto $R$ in this case.

Therefore, all the codes $H|_B$ are permutation equivalent to $R$, and hence
to each other.  Thus, $H$ is block-homogeneous, and $\partial H$ is the
linear code $R$, up to equivalence.
\end{proof}

\begin{corollary}\label{cor:profile}
Let $H=H_p^{t_1,t_2}$ be nonlinear.  Then, the kernel-block rank profile of
$H$ consists of the single value $t_1+1$, repeated $p^{t_1+t_2-1}$ times.
\end{corollary}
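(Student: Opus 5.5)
This corollary should follow directly from Theorem \ref{thm:blocks} and Theorem \ref{thm:derived}, together with \eqref{eq:span-puncture}. The plan is to compute, for each kernel block $B$ of $H$, the dimension of $\langle H\rangle|_B$, and then to count the blocks.

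First, by \eqref{eq:span-puncture}, $\dim\langle H\rangle|_B=\dim\langle H|_B\rangle=\rank(H|_B)$. By Theorem \ref{thm:derived}, $H|_B$ is permutation equivalent to the linear code $\bH_p^{t_1+1}$. Both codes contain $\zero$: $H|_B$ does because $\zero\in H$, and $\bH_p^{t_1+1}$ does because it is linear. So Lemma \ref{lem:invariance} applies, with $\yy=\zero$ and $\mu$ a permutation, and gives $\rank(H|_B)=\rank(\bH_p^{t_1+1})$.

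Since $\bH_p^{t_1+1}$ is linear, its rank equals its dimension. By the description \eqref{eq:affine}, this dimension is $t_1+1$. One can check it quickly: the code has $p\cdot p^{t_1}=p^{t_1+1}$ codewords, because distinct affine functions on $\Z_p^{t_1}$ have distinct evaluations.

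Finally, Theorem \ref{thm:blocks} gives the number of kernel blocks as $p^{t_1+t_2-1}$. Hence the multiset of these dimensions is the single value $t_1+1$ repeated $p^{t_1+t_2-1}$ times.

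There is no real obstacle here. The only point to state explicitly is that the index bijection $\beta_B$ from the proof of Theorem \ref{thm:derived} is a permutation of coordinates, so the rank is preserved.
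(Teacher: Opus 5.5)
Your proposal is correct and follows essentially the same route as the paper's proof. The paper combines Theorem~\ref{thm:derived} (each $H|_B$ is permutation equivalent to the linear code $\bH_p^{t_1+1}$ of dimension $t_1+1$), \eqref{eq:span-puncture} to get $\dim\langle H\rangle|_B=t_1+1$, and Theorem~\ref{thm:blocks} for the count $p^{t_1+t_2-1}$; your explicit appeal to Lemma~\ref{lem:invariance} and your check of the dimension via \eqref{eq:affine} just spell out steps the paper leaves implicit.
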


\begin{proof}
By Theorem \ref{thm:derived}, every punctured code $H|_B$ is permutation
equivalent to $\bH_p^{t_1+1}$.  Hence, it is a linear code of dimension
$t_1+1$.  By \eqref{eq:span-puncture}, it follows that every space
$\langle H\rangle|_B$ has dimension $t_1+1$.  By Theorem \ref{thm:blocks},
there are $p^{t_1+t_2-1}$ kernel blocks.
\end{proof}

From now on, when a code $H_p^{t_1,t_2}$ is compared with a $\Z_{p^s}$-linear
GH code, we denote the latter by $\bH_p^{u_1,\dots,u_r}$, in order to
distinguish the parameters of the two codes.  Then, the results on the codes
$\bH_p^{t_1,\dots,t_s}$ recalled above apply with $r$ instead of $s$ and $u_i$
instead of $t_i$.  With $\kappa=t_1+t_2$ as above, the value $t_1+1$ is equal
to $t-\kappa+2$.  By the discussion after Theorem \ref{thm:derivedZps}, a
nonlinear reduced code $\bH_p^{u_1,\dots,u_r}$ of length $p^t$ with kernel
dimension $\kappa$ has this kernel-block rank profile if and only if $r=2$, or
$p=2$, $r=3$ and $u_1=2$.  For $p=2$, the number and the size of the kernel
blocks of $H_2^{t_1,t_2}$, and its kernel-block rank profile, are also given in
\cite{OthersProfile}.

\begin{figure}[!ht]
\centering
\begin{tikzpicture}[font=\footnotesize]
\node[anchor=east] at (-0.25,0.45) {\small $H_3^{2,1}$};
\foreach \i in {0,1,2}{
  \foreach \j in {0,1,2}{
    \filldraw[fill=red!18,draw=black,line width=0.4pt]
      (\i*0.32,\j*0.32) rectangle ++(0.26,0.26);}}
\node at (0.45,-0.25) {$(0,0)$};
\foreach \b/\lab in {1/{(0,1)},2/{(0,2)},3/{(1,0)},4/{(1,1)},5/{(1,2)},
                     6/{(2,0)},7/{(2,1)},8/{(2,2)}}{
  \begin{scope}[xshift=\b*1.25cm]
    \foreach \i in {0,1,2}{
      \filldraw[fill=blue!14,draw=black,line width=0.4pt]
        (\i*0.32,0) rectangle ++(0.26,0.9);
      \draw[black!45,line width=0.2pt] (\i*0.32,0.3)--++(0.26,0);
      \draw[black!45,line width=0.2pt] (\i*0.32,0.6)--++(0.26,0);}
    \node at (0.45,-0.25) {$\lab$};
  \end{scope}}
\begin{scope}[yshift=-1.45cm]
\node[anchor=east] at (-0.25,0.45) {\small $\bH_3^{2,1}$};
\foreach \b in {0,...,8}{
  \begin{scope}[xshift=\b*1.25cm]
    \foreach \i in {0,1,2}{
      \filldraw[fill=green!16,draw=black,line width=0.4pt]
        (\i*0.32,0) rectangle ++(0.26,0.9);
      \draw[black!45,line width=0.2pt] (\i*0.32,0.3)--++(0.26,0);
      \draw[black!45,line width=0.2pt] (\i*0.32,0.6)--++(0.26,0);}
  \end{scope}}
\end{scope}
\end{tikzpicture}
\caption{The kernel blocks of $H_3^{2,1}$, with their labels, and of
$\bH_3^{2,1}$.  A single cell is a coordinate over $\Z_3$, and a column of
three cells comes from one coordinate over $\Z_9$.}
\label{fig:blocks}
\end{figure}

\begin{example}\label{ex:blocks}
Let $p=3$ and $(t_1,t_2)=(2,1)$, and let $H=H_3^{2,1}$ be the code of Example
\ref{ex:matrices}.  Then, $t=4$, and the vector $c$ is empty.  Hence, the
coordinates of $H$ are indexed by the vectors $a\in\Z_3^2$ and by the pairs
$(x,h)$ with $x\in X$ and $h\in\Z_3$, and the labels are the vectors
$\xi\in\Z_3^2$.  The set $X=X_2$ is formed by the $18$ vectors $(1,x_2)$ and
$(2,x_2)$ with $x_2\in\Z_9$, and by the $6$ vectors $(x_1,1)$ and $(x_1,2)$
with $x_1\in\{0,3,6\}$.  These are the columns of the second part of
$A_3^{2,1}$ without their first entry.  Since $c$ is empty, we write $B_\xi$
for the kernel block with label $\xi$.  By Theorem \ref{thm:blocks}, $H$ has
the $9$ kernel blocks $B_\xi$ with $\xi\in\Z_3^2$, and each of them has $9$
coordinates.  For instance, $B_{(0,0)}$ is formed by the $9$ coordinates of the
first part, $B_{(0,1)}$ is formed by the coordinates $((x_1,1),h)$ with
$x_1\in\{0,3,6\}$ and $h\in\Z_3$, and $B_{(2,1)}$ is formed by the coordinates
$((2,x_2),h)$ with $x_2\in\{1,4,7\}$ and $h\in\Z_3$.  The first row of Figure
\ref{fig:blocks} shows all these blocks.

Consider the block $B_{(2,1)}$.  Here, $i(\xi)=1$ and $\tilde\xi=(2,1)$, and the
vectors $x\in X$ with $\overline{x}=(2,1)$ are the vectors $(2,1+3y_2)$ with
$y_2\in\Z_3$.  Let $\uu=\varepsilon\vv_0+\lambda_1\vv_1+\lambda_2\vv_2$, and
put $\omega=2\lambda_1+\lambda_2\in\Z_9$, with $3$-ary expansion
$[\omega_0,\omega_1]_3$.  By Lemmas \ref{lem:values} and \ref{lem:shift}, the
codeword $\Phi(\uu)$ takes the value
$$
\varepsilon+\phi\bigl(\omega+3\overline{\lambda_2}\,y_2\bigr)_h
=(\varepsilon+\omega_1)+\overline{\lambda_2}\,y_2
+h\bigl(2\overline{\lambda_1}+\overline{\lambda_2}\bigr)
$$
at the coordinate $((2,1+3y_2),h)$.  With $v=(2h,y_2+h)$, this value is
$(\varepsilon+\omega_1)+\langle\overline{\lambda},v\rangle$, and $v$ runs over
$\Z_3^2$ as $(y_2,h)$ runs over $\Z_3^2$.  Hence, $H|_{B_{(2,1)}}$ is formed
by the evaluations of all the affine functions on $\Z_3^2$.  By Theorem
\ref{thm:derived}, the same holds for every kernel block.  Therefore, the
derived code of $H$ is the linear code $\bH_3^{3}$ of length $9$ and
dimension $3$.

The $\Z_9$-linear GH code $\bH_3^{2,1}$ has the same length $3^4$, by Theorem
\ref{thm:GHZps}, and the same kernel dimension $3$, by Theorem
\ref{thm:kerZps}.  By Theorem \ref{thm:derivedZps}, it also has $9$ kernel
blocks of size $9$, and its derived code is $\bH_3^{3}$.  The columns of
$\bA_3^{2,1}$ are the vectors $(1,z,3b)^T$ with $z\in\Z_9$ and $b\in\Z_3$.  Let
$\mathbf r_1$, $\mathbf r_2$ and $\mathbf r_3$ be the rows of $\bA_3^{2,1}$.
Every entry of $3\mathbf r_1$, $3\mathbf r_2$ and $\mathbf r_3$ belongs to
$3\Z_9$.  Hence, by Lemma \ref{lem:shift}, adding one of these vectors to a
codeword of $\bcH_3^{2,1}$ adds its image under $\Phi_2$ to the corresponding
codeword of $\bH_3^{2,1}$.  Therefore, the vectors $\Phi_2(3\mathbf r_1)$,
$\Phi_2(3\mathbf r_2)$ and $\Phi_2(\mathbf r_3)$ belong to the kernel of
$\bH_3^{2,1}$.  At the coordinates which come from the column $(1,z,3b)^T$,
they take the values $1$, $\overline{z}$ and $b$, respectively.  Thus, they are
linearly independent, and they form a basis of the kernel.  By Lemma
\ref{lem:one}, it follows that two coordinates of $\bH_3^{2,1}$ lie in the same
kernel block if and only if their columns have the same values of $b$ and
$\overline{z}$.  These blocks are shown in the second row of
Figure \ref{fig:blocks}.  Hence, the kernel blocks do not distinguish
$H_3^{2,1}$ from $\bH_3^{2,1}$.  We come back to this example in Section
\ref{sec:rank}.
\end{example}

We end this section with a consequence of Theorem \ref{thm:derived}, which
will be used in Sections \ref{sec:main} and \ref{sec:binary}.

\begin{proposition}\label{prop:candidates}
Let $H=H_p^{t_1,t_2}$ be nonlinear, and let $p^t$ be its length.  Assume that
$H$ is equivalent to a $\Z_{p^s}$-linear GH code, for some $s\geq1$.  Then,
$H$ is equivalent to a nonlinear reduced code $\bH_p^{u_1,\dots,u_r}$ of
length $p^t$ whose derived code is linear and has length $p^{t_1}$.
\end{proposition}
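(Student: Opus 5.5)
Suppose $H$ is equivalent to a $\Z_{p^s}$-linear GH code $\bH$ of length $p^t$. The plan is to reduce $\bH$ to a nonlinear reduced code via Proposition~\ref{prop:reduction}, and then to transfer the derived-code information from $H$ using the equivalence invariants of this section.

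First, we rule out the linear case. $H$ is nonlinear and, by Lemma~\ref{lem:invariance}, linearity is preserved by equivalence between codes containing $\zero$. Both $H$ and $\bH$ contain $\zero$, since each is the Gray image of an additive code and the Gray map sends $\zero$ to $\zero$. Hence $\bH$ is nonlinear.

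Next, we normalize $\bH$. If $s=1$, then $\bH$ is linear, since $\Phi_1$ is the identity and the code is a subgroup of $\Z_p^n$. So $s\geq2$. By Proposition~\ref{prop:reduction}, $\bH$ is permutation equivalent to a nonlinear reduced code $\bH_p^{u_1,\dots,u_r}$ of length $p^t$. Composing the two equivalences, $H$ is equivalent to this reduced code, which again contains $\zero$. Since equivalence is transitive (compose the monomial maps and adjust the translation), this step is straightforward.

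Finally, we transfer the derived code. By Theorem~\ref{thm:derived}, $H$ is block-homogeneous with $\partial H$ equivalent to the linear code $\bH_p^{t_1+1}$ of length $p^{t_1}$. Corollary~\ref{cor:derived-invariant} then says that the reduced code is block-homogeneous and its derived code is equivalent to $\partial H$; in particular the derived code has length $p^{t_1}$. It remains to check that a code equivalent to a linear code is linear. Both derived codes contain $\zero$, because they are punctures of codes containing $\zero$, so Lemma~\ref{lem:invariance} applies and the derived code of the reduced code is linear.

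The only mildly delicate point is this last step: the invariance lemmas require both codes to contain $\zero$. That is ensured because puncturing preserves $\zero$.
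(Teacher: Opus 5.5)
Your proposal is correct and follows the same route as the paper. It obtains nonlinearity of $\bH$ (hence $s\geq2$) from Lemma~\ref{lem:invariance}, passes to a nonlinear reduced code via Proposition~\ref{prop:reduction}, and transfers the linear derived code $\bH_p^{t_1+1}$ of length $p^{t_1}$ using Theorem~\ref{thm:derived}, Corollary~\ref{cor:derived-invariant} and Lemma~\ref{lem:invariance}; your explicit checks that every code involved contains $\zero$ are a harmless addition that the paper leaves implicit.
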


\begin{proof}
Let $\bH$ be a $\Z_{p^s}$-linear GH code equivalent to $H$.  Since $H$ is
nonlinear, $\bH$ is nonlinear by Lemma \ref{lem:invariance}.  In particular,
$s\geq2$, since the codes with $s=1$ are linear.  By Proposition
\ref{prop:reduction}, the code $\bH$ is permutation equivalent to a nonlinear
reduced code $\bH'=\bH_p^{u_1,\dots,u_r}$ of length $p^t$.  Hence, $H$ is
equivalent to $\bH'$.  By Theorem \ref{thm:derived}, the code $H$ is
block-homogeneous, and $\partial H$ is equivalent to the linear code
$\bH_p^{t_1+1}$ of length $p^{t_1}$.  By Corollary
\ref{cor:derived-invariant}, the code $\bH'$ is also block-homogeneous, and
$\partial\bH'$ is equivalent to $\partial H$.  Therefore, by Lemma
\ref{lem:invariance}, the derived code of $\bH'$ is linear, and its length is
$p^{t_1}$.
\end{proof}

\section{The ranks of \texorpdfstring{$H_p^{t_1,t_2}$ and $\bH_p^{t_1,t_2}$}{the two codes}}\label{sec:rank}

In this section, we determine the ranks of the $\Z_p\Z_{p^2}$-linear GH codes
$H_p^{t_1,t_2}$, with $t_1,t_2\geq1$, and of the $\Z_{p^2}$-linear GH codes
$\bH_p^{t_1,t_2}$, with $t_1\geq1$ and $t_2\geq0$.  For $t_2\geq1$, the codes
$H_p^{t_1,t_2}$ and $\bH_p^{t_1,t_2}$ have the same length $p^t$, with
$t=2t_1+t_2-1$, by \eqref{eq:length} and \eqref{eq:t}.  If moreover $p$ is
odd and $t_1\geq2$, then they also have the same dimension of the kernel and
the same derived code, by Theorems \ref{thm:kerZps}, \ref{thm:kerZpZp2},
\ref{thm:derivedZps} and \ref{thm:derived}.  Hence, another invariant is
needed to distinguish them.

For a finite set $M$, we denote by $\Z_p^M$ the vector space over $\Z_p$
formed by the functions from $M$ to $\Z_p$.  The next lemma computes the
rank of a code from the functions given by its coordinates.

\begin{lemma}\label{lem:evaluation}
Let $M$ be a finite set, and let $E:M\rightarrow\Z_p^N$ be a map.  For
$1\leq j\leq N$, let $f_j\in\Z_p^M$ be the function given by
$f_j(m)=E(m)_j$.  Then, the dimension of $\langle E(M)\rangle$ is equal to the
dimension of the subspace of $\Z_p^M$ spanned by $f_1,\dots,f_N$.
\end{lemma}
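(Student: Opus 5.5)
The plan is to identify both dimensions with the rank of one $|M|\times N$ matrix over $\Z_p$. Let $G$ be this matrix, with rows indexed by $m\in M$ and columns by $j\in\{1,\dots,N\}$, and entry $G_{m,j}=E(m)_j$.

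First, the row of $G$ indexed by $m$ is exactly the vector $E(m)$. So the row space of $G$ is the linear span of $E(M)$, and $\dim\langle E(M)\rangle$ is the row rank of $G$.

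Second, I will identify $\Z_p^M$ with the column vectors indexed by $M$, sending a function $f$ to $(f(m))_{m\in M}$. This identification is a linear isomorphism. Under it, the function $f_j$ becomes the $j$th column of $G$, since $f_j(m)=E(m)_j=G_{m,j}$. Hence the subspace of $\Z_p^M$ spanned by $f_1,\dots,f_N$ is isomorphic to the column space of $G$, and its dimension is the column rank of $G$.

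Finally, row rank equals column rank for any matrix over a field, so the two dimensions coincide. An equivalent route is to consider the linear map $\Z_p^N\to\Z_p^M$ induced by $E$ and its transpose; its image is spanned by the $f_j$, and duality gives the same equality. There is no real obstacle here. The only point needing care is to fix the indexing consistently, so that rows of $G$ correspond to elements of $M$ and columns to coordinates. The statement is an instance of the equality of row and column rank.
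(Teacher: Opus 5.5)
Your proof is correct and follows the same route as the paper. Both form the $|M|\times N$ matrix whose rows are the vectors $E(m)$, identify $\langle E(M)\rangle$ with its row space and the span of $f_1,\dots,f_N$ with its column space via the isomorphism $f\mapsto(f(m))_{m\in M}$, and conclude from the equality of row rank and column rank.
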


\begin{proof}
Let $m_1,\dots,m_n$ be the elements of $M$, and let $G$ be the $n\times N$
matrix whose $i$th row is $E(m_i)$.  The row space of $G$ is
$\langle E(M)\rangle$.  The map which sends a function $f\in\Z_p^M$ to the
vector $(f(m_1),\dots,f(m_n))^T$ is an isomorphism of vector spaces from
$\Z_p^M$ onto $\Z_p^n$.  Since $f_j(m_i)=E(m_i)_j$, it sends $f_j$ to the
$j$th column of $G$.  Hence, it maps the subspace spanned by $f_1,\dots,f_N$
onto the column space of $G$, and these two spaces have the same dimension.
Since the row rank and the column rank of a matrix coincide, the statement
follows.
\end{proof}

We call $f_1,\dots,f_N$ the coordinate functions of $E$.  We now describe the
coordinate functions of the two codes.  For $H_p^{t_1,t_2}$, let $M$ be the
set of triples $(\varepsilon,\lambda,\eta)$ with $\varepsilon\in\Z_p$,
$\lambda\in\Z_{p^2}^{t_1}$ and $\eta\in\Z_p^{t_2-1}$.  Let
$E(\varepsilon,\lambda,\eta)=\Phi(\uu)$, with $\uu$ as in \eqref{eq:u}.  Then,
$E(M)=H_p^{t_1,t_2}$, and the coordinate functions of $E$ are given by Lemma
\ref{lem:values}.  We write $f_{(a,c)}$ and $f_{(x,c,h)}$ for them.

For $\bH_p^{t_1,t_2}$, the columns of $\bA_p^{t_1,t_2}$ are the vectors
$(1,z,pc)^T$ with $z\in\Z_{p^2}^{t_1-1}$ and $c\in\Z_p^{t_2}$, by
\eqref{eq:columns}.  We denote by $\hvv_1=\one,\hvv_2,\dots,\hvv_{t_1}$ the
rows of order $p^2$ of $\bA_p^{t_1,t_2}$, and by $\hww_1,\dots,\hww_{t_2}$ its
rows of order $p$.  We index the coordinates of $\bH_p^{t_1,t_2}$ by the
triples $(z,c,h)$, where $(1,z,pc)^T$ is the corresponding column and
$h\in\Z_p$.  Let $\hat M$ be the set of pairs $(\lambda,\eta)$ with
$\lambda\in\Z_{p^2}^{t_1}$ and $\eta\in\Z_p^{t_2}$.  When $t_1=1$, the vector
$z$ is empty, and when $t_2=0$, the vectors $c$ and $\eta$ are empty.  For
$(\lambda,\eta)\in\hat M$, let
$$
\huu=\sum_{k=1}^{t_1}\lambda_k\hvv_k+\sum_{j=1}^{t_2}\eta_j\hww_j ,
$$
and let $\hat E(\lambda,\eta)=\Phi_2(\huu)$.  Then, $\hat E(\hat M)=
\bH_p^{t_1,t_2}$.  The entry of $\huu$ at the column $(1,z,pc)^T$ is
$\langle\lambda,(1,z)\rangle+p\langle\eta,c\rangle$.  Hence, by Lemma
\ref{lem:shift}, the coordinate function of $\hat E$ at $(z,c,h)$ is
\begin{equation}\label{eq:val-bar}
\hat f_{(z,c,h)}(\lambda,\eta)=\langle\eta,c\rangle
+\phi\bigl(\langle\lambda,(1,z)\rangle\bigr)_h .
\end{equation}

For $x\in\Z_{p^2}^{t_1}$, let $g_x$ be the function from $\Z_{p^2}^{t_1}$ to
$\Z_p$ which maps $\lambda$ to the digit of index $1$ of
$\langle\lambda,x\rangle$.  By \eqref{eq:gray}, $g_x(\lambda)$ is the entry of
$\phi(\langle\lambda,x\rangle)$ at the coordinate $h=0$.  For
$1\leq k\leq t_1$, we also denote by $\overline{\lambda_k}$ the function which
maps $\lambda\in\Z_{p^2}^{t_1}$ to the residue of $\lambda_k$ modulo $p$.  For
$U\subseteq\Z_{p^2}^{t_1}$, let $W_U$ be the space of functions from
$\Z_{p^2}^{t_1}$ to $\Z_p$ spanned by the functions
$\overline{\lambda_1},\dots,\overline{\lambda_{t_1}}$ and $g_x$, with $x\in U$.
Recall that $X=X_{t_1}$, and let
$$
\Omega=\{x\in\Z_{p^2}^{t_1}:\ x_1=1\}.
$$

\begin{lemma}\label{lem:W}
Let $t_1\geq1$.  Then, $\rank(H_p^{t_1,t_2})=t_2+\dim W_X$ for every
$t_2\geq1$, and $\rank(\bH_p^{t_1,t_2})=t_2+\dim W_\Omega$ for every
$t_2\geq0$.
\end{lemma}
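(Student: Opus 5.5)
By Lemma \ref{lem:evaluation}, $\rank(H_p^{t_1,t_2})$ is the dimension of the span $V$ of the coordinate functions $f_{(a,c)}$ and $f_{(x,c,h)}$ in $\Z_p^M$. Similarly, $\rank(\bH_p^{t_1,t_2})$ is the dimension of the span $\hat V$ of the functions $\hat f_{(z,c,h)}$ in $\Z_p^{\hat M}$. The plan is to decompose each span as a direct sum of a part depending only on $(\varepsilon,\eta)$ and a part depending only on $\lambda$, and to identify the $\lambda$-part with $W_X$ or $W_\Omega$.

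\textbf{The code $H_p^{t_1,t_2}$.} Expand the coordinate functions using Lemma \ref{lem:values} and \eqref{eq:gray}. Writing $\phi(\langle\lambda,x\rangle)_h=g_x(\lambda)+h\langle\overline{\lambda},\overline{x}\rangle$, we get
$$
f_{(a,c)}=\varepsilon+\langle\eta,c\rangle+\langle\overline{\lambda},a\rangle,\qquad
f_{(x,c,h)}=\varepsilon+\langle\eta,c\rangle+g_x+h\langle\overline{\lambda},\overline{x}\rangle .
$$
Each of these lies in $L+W_X$, where $L$ is the space spanned by the coordinate functions $\varepsilon,\eta_1,\dots,\eta_{t_2-1}$, viewed as functions on $M$, and $W_X$ is pulled back to $M$ through $\lambda$. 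So $V\subseteq L+W_X$.

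For the reverse inclusion:
\begin{itemize}
\item Taking $a=\zero$ and letting $c$ range over $\zero$ and the unit vectors gives $\varepsilon$ and each $\eta_j$, so $L\subseteq V$.
\item Subtracting $f_{(\zero,c)}$ from $f_{(a,c)}$ yields every $\overline{\lambda_k}$.
\item Taking $h=0$ and subtracting the $L$-part yields each $g_x$ with $x\in X$.
\end{itemize}
Hence $V=L+W_X$.

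The sum is direct. Every function in $W_X$ vanishes at $\lambda=\zero$, because $g_x(\zero)=0$ and $\overline{\lambda_k}(\zero)=0$. A function in $L$ depends only on $(\varepsilon,\eta)$, so if it also lies in $W_X$ it is zero. Moreover $\dim L=t_2$, since $1,\eta_1,\dots,\eta_{t_2-1}$ are affine-independent; here I regard $\varepsilon$ as the variable rather than the constant function. Pullback along the projection $M\to\Z_{p^2}^{t_1}$ is injective, so $\dim W_X$ is unchanged. This gives $\rank(H_p^{t_1,t_2})=t_2+\dim W_X$.

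\textbf{The code $\bH_p^{t_1,t_2}$.} The same argument applies with \eqref{eq:val-bar}. Now
$$
\hat f_{(z,c,h)}=\langle\eta,c\rangle+g_{(1,z)}+h\langle\overline{\lambda},\overline{(1,z)}\rangle .
$$
The vectors $(1,z)$ with $z\in\Z_{p^2}^{t_1-1}$ are exactly the elements of $\Omega$. Here $L$ is spanned by $\eta_1,\dots,\eta_{t_2}$ and has dimension $t_2$; there is no $\varepsilon$, because the row $\hvv_1=\one$ accounts for the constant. Recovering the pieces of $W_\Omega$ proceeds as follows:
\begin{itemize}
\item At $c=\zero$, $h=0$ we get each $g_{(1,z)}$.
\item Differences in $h$ give each $\langle\overline{\lambda},(1,\overline{z})\rangle$. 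These span all $\overline{\lambda_k}$: take $z=\zero$ for $\overline{\lambda_1}$, and $z=e_k$ minus it for the others.
\item Differences in $c$ then give each $\eta_j$.
\end{itemize}
Directness again follows by evaluating at $\lambda=\zero$.

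\textbf{Main obstacle.} The only delicate step is ensuring that the constant function is correctly accounted for. In $H_p^{t_1,t_2}$ it arises through $\varepsilon$ and is counted in $L$. In $\bH_p^{t_1,t_2}$ it must arise inside $W_\Omega$, via $\lambda_1$ acting on $\one$, and it does not appear as a separate constant. Since $W$ consists of functions of $\lambda$ alone, and these are independent from functions of the other variables, there is no double counting. One must also treat the empty cases $t_2=1$ for $H$ and $t_2=0$ for $\bH$, where $L$ is spanned by $\varepsilon$ alone or is zero, respectively.
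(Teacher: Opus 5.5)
Your proposal is correct and follows essentially the same route as the paper. You expand the coordinate functions via $\phi(\langle\lambda,x\rangle)_h=g_x(\lambda)+h\langle\overline{\lambda},\overline{x}\rangle$, recover $\varepsilon$ (for $H_p^{t_1,t_2}$ only), the $\eta_j$, the $\overline{\lambda_k}$ and the $g_x$ as explicit differences of coordinate functions, and get the direct sum because the functions of $W_X$ and $W_\Omega$ depend only on $\lambda$ and vanish at $\lambda=\zero$ (the remark about affine independence of $1,\eta_1,\dots$ is unnecessary: the $(\varepsilon,\eta)$-part has dimension $t_2$ simply because these are distinct coordinate projections).
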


\begin{proof}
Let $x\in\Z_{p^2}^{t_1}$.  The digit of index $0$ of $\langle\lambda,x\rangle$
is $\langle\overline{\lambda},\overline{x}\rangle$.  Hence, by \eqref{eq:gray},
we have $\phi(\langle\lambda,x\rangle)_h=g_x(\lambda)
+h\langle\overline{\lambda},\overline{x}\rangle$.  By Lemma \ref{lem:values},
it follows that every coordinate function of
$H_p^{t_1,t_2}$ is a linear combination of the functions $\varepsilon$,
$\eta_j$, $\overline{\lambda_k}$ and $g_x$, with $1\leq j\leq t_2-1$,
$1\leq k\leq t_1$ and $x\in X$.  Here, $\varepsilon$ and $\eta_j$ denote the
functions which map $(\varepsilon,\lambda,\eta)$ to these elements of $\Z_p$,
and the functions $\overline{\lambda_k}$ and $g_x$ are regarded as functions of
$\lambda$.  Conversely, we have $\varepsilon=f_{(\zero,\zero)}$, and
$$
\eta_j=f_{(\zero,e_j)}-f_{(\zero,\zero)},\quad
\overline{\lambda_k}=f_{(e_k,\zero)}-f_{(\zero,\zero)},\quad
g_x=f_{(x,\zero,0)}-f_{(\zero,\zero)}.
$$
The functions of $W_X$ only depend on $\lambda$, and they vanish when
$\lambda=\zero$.  The functions $\varepsilon$ and $\eta_j$, with
$1\leq j\leq t_2-1$, are linearly independent, and they do not depend on
$\lambda$.  Therefore, the space spanned by the coordinate functions of
$H_p^{t_1,t_2}$ is the direct sum of $W_X$ and of the space spanned by
$\varepsilon$ and the functions $\eta_j$, which has dimension $t_2$.  By Lemma
\ref{lem:evaluation}, this proves the first equality.

Similarly, by \eqref{eq:val-bar}, we have
$$
\hat f_{(z,c,h)}=\langle\eta,c\rangle+g_{(1,z)}
+h\bigl\langle\overline{\lambda},(1,\overline{z})\bigr\rangle ,
$$
where all the terms are regarded as functions of $(\lambda,\eta)$.
Conversely, $\eta_j=\hat f_{(\zero,e_j,0)}-\hat f_{(\zero,\zero,0)}$ for
$1\leq j\leq t_2$, and $g_{(1,z)}=\hat f_{(z,\zero,0)}$.  Moreover,
$\langle\overline{\lambda},(1,\overline{z})\rangle=\hat f_{(z,\zero,1)}
-\hat f_{(z,\zero,0)}$.  For $z=\zero$ and for the vectors $z$ of the
standard basis of $\Z_{p^2}^{t_1-1}$, these functions are
$\overline{\lambda_1}$ and $\overline{\lambda_1}+\overline{\lambda_k}$ with
$2\leq k\leq t_1$.  Hence, all the functions $\overline{\lambda_k}$ belong to
the space spanned by the coordinate functions of $\bH_p^{t_1,t_2}$.  As
before, this space is the direct sum of $W_\Omega$ and of the space spanned
by $\eta_1,\dots,\eta_{t_2}$, which has dimension $t_2$.  The second equality
follows.
\end{proof}

For a nonzero $\xi\in\Z_p^{t_1}$, we write $g_\xi=g_{\tilde\xi}$.  For
$U\subseteq\Z_{p^2}^{t_1}$, we put $\overline{U}=\{\overline{x}:\ x\in U\}$.

\begin{lemma}\label{lem:Wxi}
Let $U\in\{X,\Omega\}$.  Then, the space $W_U$ is spanned by the functions
$\overline{\lambda_k}$, with $1\leq k\leq t_1$, and $g_\xi$, with
$\xi\in\overline{U}$.  Moreover, $\overline{X}=\Z_p^{t_1}\setminus\{\zero\}$,
and $\overline{\Omega}$ is the set of vectors $\xi\in\Z_p^{t_1}$ with
$\xi_1=1$.
\end{lemma}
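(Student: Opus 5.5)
The last two claims are immediate from the definitions, so I would dispose of them first. For $\overline{X}$, every $x\in X$ has $\overline{x}\neq\zero$. Conversely, Lemma \ref{lem:X} gives, for each nonzero $\xi$, the vector $\tilde\xi\in X$ with $\overline{\tilde\xi}=\xi$ (take $y=\zero$). For $\overline{\Omega}$, if $x_1=1$ then $\overline{x}_1=1$. Conversely, if $\xi_1=1$ then $\tilde\xi$ has first entry $1\in\Z_{p^2}$, so $\tilde\xi\in\Omega$. The same remark shows that $\tilde\xi\in U$ for every $\xi\in\overline{U}$, in both cases. Hence every $g_\xi$ with $\xi\in\overline{U}$ belongs to $W_U$, which gives one inclusion.

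For the reverse inclusion, I must show that $g_x\in W$ for every $x\in U$, where $W$ is the span of the $\overline{\lambda_k}$ and of the $g_\xi$. The key computation is how $g_x$ changes when $x$ is altered by a multiple of $p$. Write $x=\tilde\xi+py$ with $\xi=\overline{x}$ and $y\in\Z_p^{t_1}$. Then
$$
\langle\lambda,x\rangle=\langle\lambda,\tilde\xi\rangle+p\langle\overline{\lambda},y\rangle .
$$
Adding $p\cdot b$ to an element of $\Z_{p^2}$ adds $b$ to its digit of index $1$ and leaves the digit of index $0$ unchanged, which is the content of Lemma \ref{lem:shift} at $h=0$. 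Therefore
$$
g_x(\lambda)=g_\xi(\lambda)+\langle\overline{\lambda},y\rangle
=g_\xi(\lambda)+\sum_{k=1}^{t_1}y_k\,\overline{\lambda_k}(\lambda).
$$
So $g_x$ is $g_\xi$ plus a linear combination of the functions $\overline{\lambda_k}$, and $\xi=\overline{x}\in\overline{U}$. This puts $g_x$ in $W$, so $W_U\subseteq W$, and the two spaces coincide.

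I do not expect a real obstacle. The only point needing care is the digit identity: the carry from adding $p\langle\overline{\lambda},y\rangle$ lands only in digit $1$, because that term is a multiple of $p$ in $\Z_{p^2}$ and so does not affect digit $0$. This is exactly what Lemma \ref{lem:shift} states.
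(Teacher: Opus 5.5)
Your proof is correct and follows the paper's argument. Both rest on the identity $g_{x+py}=g_x+\sum_{k}y_k\overline{\lambda_k}$, which comes from Lemma~\ref{lem:shift}, together with the observation that $\tilde\xi\in U$ for every $\xi\in\overline{U}$; you simply spell out the two inclusions more explicitly than the paper does.
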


\begin{proof}
Let $x\in\Z_{p^2}^{t_1}$ and $y\in\Z_p^{t_1}$.  Since
$\langle\lambda,x+py\rangle=\langle\lambda,x\rangle+p\langle\overline{\lambda},
y\rangle$, Lemma \ref{lem:shift} gives
$$
g_{x+py}=g_x+\sum_{k=1}^{t_1}y_k\overline{\lambda_k}.
$$
By Lemma \ref{lem:X}, every vector of $X$ is of the form $\tilde\xi+py$ with
$\xi\neq\zero$, and $\tilde\xi\in X$ for every $\xi\neq\zero$.  Hence,
$\overline{X}=\Z_p^{t_1}\setminus\{\zero\}$, and the statement holds for
$U=X$.  Similarly, every vector $x\in\Omega$ is of the form $\tilde\xi+py$
with $\xi=\overline{x}$, and then $\xi_1=1$.  Conversely, $\tilde\xi\in\Omega$
for every $\xi$ with $\xi_1=1$.  Hence, $\overline{\Omega}$ is the set of
vectors $\xi$ with $\xi_1=1$, and the statement holds for $U=\Omega$.
\end{proof}

In order to compute $g_\xi$, we need the following functions.  For an
integer $w$, let $\theta(w)=(w^p-w)/p$, which is an integer by Fermat's
little theorem.  For $x\in\Z_p$, let $q(x)=\overline{\theta(\tilde x)}$,
where $\tilde x\in\{0,\dots,p-1\}$ is regarded as an integer.

Every function from $\Z_p^n$ to $\Z_p$ is given by a unique polynomial over
$\Z_p$ in which every variable has degree at most $p-1$
\cite{LidlNiederreiter}.  We call such polynomials and their monomials reduced.
For a vector $\alpha=(\alpha_1,\dots,\alpha_n)$ of nonnegative integers, we
write $y^\alpha=y_1^{\alpha_1}\cdots y_n^{\alpha_n}$,
$|\alpha|=\alpha_1+\cdots+\alpha_n$ and $\alpha!=\alpha_1!\cdots\alpha_n!$.
Thus, the monomial $y^\alpha$ is reduced if and only if $\alpha_k\leq p-1$ for
all $k$.  Since $q(0)=0$, there are unique $q_1,\dots,q_{p-1}\in\Z_p$ such that
\begin{equation}\label{eq:q}
q(x)=\sum_{d=1}^{p-1}q_dx^d
\end{equation}
for every $x\in\Z_p$.

The numerator and the denominator of a rational number are always taken in
lowest terms.  A rational number $r$ is called $p$-integral if its denominator
is not divisible by $p$.  Equivalently, $r=a/b$ for some $a,b\in\Z$ with
$p\nmid b$.  In that case, we regard $r$ as the element
$\overline{a}\,\overline{b}^{\,-1}$ of $\Z_p$.  This element does not depend on
the choice of $a$ and $b$, since $a/b=a'/b'$ implies
$\overline{a}\,\overline{b'}=\overline{a'}\,\overline{b}$.  For instance, if
$\alpha_k\leq p-1$ for all $k$, then $p$ does not divide $\alpha!$, and
$1/\alpha!$ is regarded as an element of $\Z_p$.

For $y\in\Z_p^n$, let $S(y)$ be the residue modulo $p$ of
$$
\frac1p\Bigl(\sum_{k=1}^{n}Y_k^{\,p}-\Bigl(\sum_{k=1}^{n}Y_k\Bigr)^p\Bigr),
$$
where $Y_1,\dots,Y_n$ are integers with $\overline{Y_k}=y_k$ for all $k$.  The
next lemma shows that this number is an integer, and that $S(y)$ does not
depend on the choice of the integers $Y_k$.

\begin{lemma}\label{lem:S}
Let $y\in\Z_p^n$.  Then, $S(y)$ is well defined, and
$$
S(y)=\sum_{\alpha}\frac{y^\alpha}{\alpha!},
$$
where $\alpha$ runs over the vectors of $n$ nonnegative integers with
$|\alpha|=p$ and $\alpha_k\leq p-1$ for all $k$.
\end{lemma}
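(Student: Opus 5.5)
We expand $\bigl(\sum_k Y_k\bigr)^p$ by the multinomial theorem and identify which terms survive after dividing by $p$ and reducing modulo $p$. This gives the integrality, the independence of the lifts $Y_k$, and the explicit formula all at once.

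First, we have
$$
\Bigl(\sum_{k=1}^{n}Y_k\Bigr)^p=\sum_{|\alpha|=p}\frac{p!}{\alpha!}\,Y^\alpha ,
$$
where $\alpha$ runs over all vectors of $n$ nonnegative integers with $|\alpha|=p$ and $Y^\alpha=Y_1^{\alpha_1}\cdots Y_n^{\alpha_n}$. The vectors $\alpha=pe_k$ contribute exactly $\sum_k Y_k^p$. Hence
$$
\frac1p\Bigl(\sum_{k}Y_k^{\,p}-\Bigl(\sum_{k}Y_k\Bigr)^p\Bigr)
=-\sum_{\alpha}\frac{(p-1)!}{\alpha!}\,Y^\alpha ,
$$
with $\alpha$ now running over the vectors with $|\alpha|=p$ and $\alpha_k\leq p-1$ for all $k$. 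For such $\alpha$, the multinomial coefficient $p!/\alpha!$ is an integer divisible by $p$, since $p$ divides the numerator but not $\alpha!$. Therefore each term $(p-1)!/\alpha!=(p!/\alpha!)/p$ is an integer, and so is the whole expression.

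Next, we reduce modulo $p$. The integer $(p-1)!/\alpha!$ reduces to $\overline{(p-1)!}\cdot(\alpha!)^{-1}$ in $\Z_p$. This is legitimate because $\alpha!$ is prime to $p$, and it is consistent with the convention fixed earlier for $p$-integral rationals. By Wilson's theorem, $\overline{(p-1)!}=-1$, so the minus sign cancels and the residue is $\sum_\alpha \overline{Y}^\alpha/\alpha!=\sum_\alpha y^\alpha/\alpha!$. This expression depends only on the residues $y_k=\overline{Y_k}$, so $S(y)$ is well defined, and the formula follows.

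The only delicate point is the bookkeeping of $p$-integral rationals: the division by $p$ must be carried out in the integers before reducing, as above, and not in $\Z_p$. Everything else is a routine application of the multinomial theorem and Wilson's theorem.
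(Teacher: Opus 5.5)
Your proof is correct and follows essentially the same route as the paper's. Both expand by the multinomial theorem and separate the terms $\alpha=pe_k$. Both get integrality of $(p-1)!/\alpha!$ from $p\nmid\alpha!$, get independence of the lifts because the result is an integer polynomial in the $Y_k$, and use Wilson's theorem to cancel the sign.
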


\begin{proof}
Let $Y_1,\dots,Y_n$ be integers with $\overline{Y_k}=y_k$ for all $k$, and put
$Y=(Y_1,\dots,Y_n)$.  By the multinomial theorem,
$(\sum_kY_k)^p$ is the sum of the terms $(p!/\alpha!)\,Y^{\alpha}$ over the
vectors $\alpha$ with $|\alpha|=p$.  If $\alpha$ has only one nonzero entry,
then this term is $Y_k^{\,p}$ for some $k$.  Otherwise, all the entries of
$\alpha$ are at most $p-1$, and $p!/\alpha!=p\cdot(p-1)!/\alpha!$.  Here,
$(p-1)!/\alpha!$ is an integer, since $p!/\alpha!$ is an integer and $p$ does
not divide $\alpha!$.  Hence,
$$
\frac1p\Bigl(\sum_{k=1}^{n}Y_k^{\,p}-\Bigl(\sum_{k=1}^{n}Y_k\Bigr)^p\Bigr)
=-\sum_{\alpha}\frac{(p-1)!}{\alpha!}\,Y^{\alpha},
$$
where $\alpha$ runs as in the statement.  In particular, the left-hand side
is an integer, and it is given by a polynomial in $Y_1,\dots,Y_n$ with integer
coefficients.  Thus, its residue modulo $p$ does not depend on the choice of
the integers $Y_k$.  By Wilson's theorem, $(p-1)!\equiv-1\pmod p$.  Since
$\alpha!$ is not divisible by $p$, the statement follows.
\end{proof}

The monomials $y^\alpha$ in Lemma \ref{lem:S} are reduced.  The digit of
index $1$ of a sum of products is given by the following lemma.

\begin{lemma}\label{lem:digit}
Let $a,b\in\Z_p^n$.  Then, the digit of index $1$ of
$\langle\tilde a,\tilde b\rangle\in\Z_{p^2}$ is
$$
q\bigl(\langle a,b\rangle\bigr)-\sum_{k=1}^{n}\bigl(a_kq(b_k)+b_kq(a_k)\bigr)
+S(a_1b_1,\dots,a_nb_n).
$$
\end{lemma}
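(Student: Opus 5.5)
We reduce the digit computation to an identity among integers by using the Frobenius-type quantity $\theta$. For an integer $w$, the digit of index $1$ of $w$ regarded in $\Z_{p^2}$ is the residue of $(w-\tilde{\overline w})/p$ modulo $p$. We compute it via $\theta$: since $w\equiv\tilde{\overline w}\pmod p$, we have $w^p\equiv\tilde{\overline w}^{\,p}\pmod{p^2}$. Hence
$$
\frac{w-\tilde{\overline w}}{p}\equiv\theta(\tilde{\overline w})-\theta(w)\pmod p .
$$
The first term is $q(\overline w)$ by definition. So the digit of index $1$ of $w$ equals $q(\overline w)-\overline{\theta(w)}$.

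We apply this with the integer $w=\sum_k\tilde a_k\tilde b_k$, whose residue is $\langle a,b\rangle$. This gives the first term $q(\langle a,b\rangle)$, and it remains to show that
$$
-\overline{\theta(w)}=-\sum_k\bigl(a_kq(b_k)+b_kq(a_k)\bigr)+S(a_1b_1,\dots,a_nb_n).
$$
Put $Y_k=\tilde a_k\tilde b_k$, so $\overline{Y_k}=a_kb_k$. By definition of $S$ (Lemma \ref{lem:S} guarantees well-definedness),
$$
\frac1p\Bigl(\sum_kY_k^p-w^p\Bigr)\equiv S(a_1b_1,\dots,a_nb_n)\pmod p .
$$
Writing $w^p=\sum_kY_k^p-p\,S'$ with $\overline{S'}=S(\dots)$, we obtain
$$
\theta(w)=\frac{w^p-w}{p}=\sum_k\frac{Y_k^p-Y_k}{p}-S'=\sum_k\theta(\tilde a_k\tilde b_k)-S' .
$$
So it suffices to show that $\overline{\theta(\tilde a\tilde b)}=a\,q(b)+b\,q(a)$ for single elements $a,b\in\Z_p$.

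For this we use the identity
$$
\theta(AB)=\frac{A^pB^p-AB}{p}=A^p\,\theta(B)+B\,\theta(A).
$$
It holds because $A^pB^p-AB=A^p(B^p-B)+B(A^p-A)$. Taking residues with $A=\tilde a$ and $B=\tilde b$, and using $A^p\equiv a\pmod p$, we get $\overline{\theta(\tilde a\tilde b)}=a\,q(b)+b\,q(a)$. Combining these pieces gives the stated formula.

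The only delicate point is the first step: checking that the digit of index $1$ of an integer $w$ equals $q(\overline w)-\overline{\theta(w)}$. This relies on $w\equiv\tilde{\overline w}\pmod p$ implying $w^p\equiv\tilde{\overline w}^{\,p}\pmod{p^2}$, which holds for all primes $p$ including $p=2$. Everything else is bookkeeping of residues, with the caveat that $\theta(w)$ must be computed on the integer $w$ itself and not on its class in $\Z_{p^2}$.
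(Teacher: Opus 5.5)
Your proof is correct and takes the same route as the paper. You write the digit as $q(\langle a,b\rangle)-\overline{\theta(w)}$ using $w^p\equiv w_0^p\pmod{p^2}$, split $\theta(w)$ into $\sum_k\theta(Y_k)$ minus the term whose residue is $S$, and finish with the product rule for $\theta$; the only cosmetic difference is that you use $\theta(AB)=A^p\theta(B)+B\theta(A)$ where the paper uses the symmetric form $\theta(A)B^p+A\theta(B)$.
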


\begin{proof}
For $1\leq k\leq n$, let $Y_k=\tilde a_k\tilde b_k$, as an integer.  Let
$w=\sum_kY_k$, and write $w=w_0+pw_1$ with integers $w_0\in\{0,\dots,p-1\}$ and
$w_1$.  The element $\langle\tilde a,\tilde b\rangle$ of $\Z_{p^2}$ is the
residue of $w$ modulo $p^2$, and its digit of index $1$ is $\overline{w_1}$.
Since $w^p\equiv w_0^p\pmod{p^2}$, we have
$\theta(w)\equiv\theta(w_0)-w_1\pmod p$.  Moreover,
$\overline{w_0}=\overline{w}=\langle a,b\rangle$, and hence
$w_0=\widetilde{\langle a,b\rangle}$ and
$\overline{\theta(w_0)}=q(\langle a,b\rangle)$.  Therefore, the digit is
$q(\langle a,b\rangle)-\overline{\theta(w)}$.  By the definition of $\theta$,
we have
$$
\theta(w)=\sum_{k=1}^{n}\theta(Y_k)
-\frac1p\Bigl(\sum_{k=1}^{n}Y_k^p-w^p\Bigr).
$$
The residue of $(\sum_kY_k^p-w^p)/p$ is $S(a_1b_1,\dots,a_nb_n)$, since
$\overline{Y_k}=a_kb_k$ for all $k$.  Finally,
$\theta(Y_k)=\theta(\tilde a_k)\tilde b_k^{\,p}+\tilde a_k\theta(\tilde b_k)$,
and $\tilde b_k^{\,p}\equiv\tilde b_k\pmod p$.  Since
$\overline{\theta(\tilde a_k)}=q(a_k)$ and
$\overline{\theta(\tilde b_k)}=q(b_k)$, the statement follows.
\end{proof}

The following set will describe the rank of $H_p^{t_1,t_2}$.  Let
\begin{equation}\label{eq:Dp}
D_p=\{0,1,p\}\cup\{d:\ 2\leq d\leq p-1 \text{ and } q_d\neq0\}.
\end{equation}

Every $\lambda\in\Z_{p^2}^{t_1}$ can be written in a unique way as
$\lambda=\widetilde{\lambda'}+p\lambda''$ with
$\lambda',\lambda''\in\Z_p^{t_1}$.  Thus, $\lambda'_k$ and $\lambda''_k$ are
the digits of index $0$ and $1$ of $\lambda_k$, and
$\overline{\lambda_k}=\lambda'_k$.  This gives a bijection between
$\Z_{p^2}^{t_1}$ and $\Z_p^{t_1}\times\Z_p^{t_1}$.  Hence, we regard the
functions of $W_X$ and $W_\Omega$ as functions of $(\lambda',\lambda'')$, that
is, as reduced polynomials in the $2t_1$ variables
$\lambda'_1,\dots,\lambda'_{t_1},\lambda''_1,\dots,\lambda''_{t_1}$.  Let $L$
be the space spanned by $\lambda'_1,\dots,\lambda'_{t_1}$.  Let $\alpha$ be a
vector of $t_1$ nonnegative integers with $2\leq|\alpha|\leq p$ and
$\alpha_k\leq p-1$ for all $k$, and let $\xi\in\Z_p^{t_1}$.  We put
$$
\tau_\alpha(\xi)=\begin{cases}
q_d\,\dfrac{d!}{\alpha!}\,\xi^\alpha & \text{if } |\alpha|=d\leq p-1
\text{ and } \alpha\neq de_k \text{ for all } k,\\[2mm]
q_d\,(\xi_k^d-\xi_k) & \text{if } \alpha=de_k \text{ with } d\leq p-1,
\\[2mm]
\xi^\alpha/\alpha! & \text{if } |\alpha|=p.
\end{cases}
$$

\begin{lemma}\label{lem:gxi}
Let $\xi\in\Z_p^{t_1}$ be nonzero.  Then,
$$
g_\xi\equiv\sum_{\alpha}\tau_\alpha(\xi)\,{\lambda'}^{\alpha}
+\sum_{k=1}^{t_1}\xi_k\lambda''_k\pmod{L},
$$
where $\alpha$ runs over the vectors of $t_1$ nonnegative integers with
$2\leq|\alpha|\leq p$ and $\alpha_k\leq p-1$ for all $k$.
\end{lemma}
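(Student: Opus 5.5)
The plan is to compute $g_\xi(\lambda)$ as the digit of index $1$ of $\langle\lambda,\tilde\xi\rangle$ and reduce everything to Lemma \ref{lem:digit}.

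\emph{Step 1: split off the $\lambda''$ part.} Write $\lambda=\widetilde{\lambda'}+p\lambda''$. Then
$$
\langle\lambda,\tilde\xi\rangle=\langle\widetilde{\lambda'},\tilde\xi\rangle+p\langle\lambda'',\xi\rangle .
$$
By Lemma \ref{lem:shift}, the digit of index $1$ is the digit of index $1$ of $\langle\widetilde{\lambda'},\tilde\xi\rangle$ plus $\sum_k\xi_k\lambda''_k$. This produces the last sum in the statement.

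\emph{Step 2: apply Lemma \ref{lem:digit}.} Take $a=\lambda'$ and $b=\xi$. The digit of index $1$ of $\langle\widetilde{\lambda'},\tilde\xi\rangle$ equals
$$
q(\langle\lambda',\xi\rangle)-\sum_k\bigl(\lambda'_kq(\xi_k)+\xi_kq(\lambda'_k)\bigr)+S(\lambda'_1\xi_1,\dots,\lambda'_{t_1}\xi_{t_1}).
$$
Modulo $L$, the terms $\lambda'_kq(\xi_k)$ vanish, since $q(\xi_k)$ is a constant. It then remains to expand the other three terms as reduced polynomials in $\lambda'$.

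\emph{Step 3: expand the three terms.} By Lemma \ref{lem:S}, $S(\lambda'_1\xi_1,\dots)=\sum_{|\alpha|=p}\xi^\alpha{\lambda'}^\alpha/\alpha!$ with all $\alpha_k\leq p-1$. These monomials are already reduced, so they give exactly the case $|\alpha|=p$ of $\tau_\alpha$.

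Next, by \eqref{eq:q}, $q(\langle\lambda',\xi\rangle)=\sum_{d=1}^{p-1}q_d(\sum_k\xi_k\lambda'_k)^d$. The multinomial theorem expands the $d$-th power as $\sum_{|\alpha|=d}(d!/\alpha!)\xi^\alpha{\lambda'}^\alpha$, and here $d\leq p-1$, so every $\alpha_k\leq p-1$ and $d!/\alpha!$ makes sense in $\Z_p$. The $d=1$ term lies in $L$, so it is discarded.

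Finally, $\sum_k\xi_kq(\lambda'_k)=\sum_k\sum_d q_d\xi_k{\lambda'_k}^d$. For $d=1$ this term is in $L$. For $d\geq2$ it is $q_d\xi_k{\lambda'_k}^{d}$, attached to $\alpha=de_k$.

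\emph{Step 4: collect coefficients.} For $\alpha=de_k$ with $2\leq d\leq p-1$, the total coefficient is $q_d\xi_k^d-q_d\xi_k=q_d(\xi_k^d-\xi_k)$. This matches the second case of $\tau_\alpha$. For the other $\alpha$ with $2\leq|\alpha|\leq p-1$ only the multinomial term contributes, which gives $q_d\,(d!/\alpha!)\,\xi^\alpha$.

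The main point requiring care is that all monomials obtained are reduced, so that equality of functions agrees with equality of reduced polynomials. This holds because $|\alpha|\leq p$ and all $\alpha_k\leq p-1$ throughout. The only exception would be a monomial ${\lambda'_k}^p$, but Lemma \ref{lem:S} excludes it and the $q$ terms stop at degree $p-1$. One should also check that the sign conventions in Lemma \ref{lem:digit} and Lemma \ref{lem:S} match the plus sign of the $|\alpha|=p$ term: $S$ enters with a $+$ sign, and Lemma \ref{lem:S} gives $+\sum y^\alpha/\alpha!$, so they do.
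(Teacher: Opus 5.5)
Your proposal is correct and follows essentially the same route as the paper. You split off $p\langle\lambda'',\xi\rangle$ with Lemma~\ref{lem:shift}, apply Lemma~\ref{lem:digit}, and expand with \eqref{eq:q}, Lemma~\ref{lem:S} and the multinomial theorem. The remaining differences are cosmetic: your choice $a=\lambda'$, $b=\xi$ instead of the paper's $a=\xi$, $b=\lambda'$ does not matter because that formula is symmetric, and where the paper notes that the two $d=1$ terms cancel exactly, you discard each of them as an element of $L$.
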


\begin{proof}
Since $\langle\lambda,\tilde\xi\rangle
=\langle\widetilde{\lambda'},\tilde\xi\rangle+p\langle\lambda'',\xi\rangle$,
Lemma \ref{lem:shift} shows that $g_\xi(\lambda)$ is the sum of the digit of
index $1$ of $\langle\widetilde{\lambda'},\tilde\xi\rangle\in\Z_{p^2}$ and of
$\langle\lambda'',\xi\rangle$.  We apply Lemma \ref{lem:digit} with $a=\xi$ and
$b=\lambda'$, and we use \eqref{eq:q} and Lemma \ref{lem:S}.  We obtain
$$
g_\xi=\sum_{d=2}^{p-1}q_d\Bigl(\langle\lambda',\xi\rangle^d
-\sum_{k=1}^{t_1}\xi_k{\lambda'_k}^{d}\Bigr)
+\sum_{|\alpha|=p}\frac{\xi^\alpha}{\alpha!}\,{\lambda'}^{\alpha}
+\langle\lambda'',\xi\rangle-\sum_{k=1}^{t_1}q(\xi_k)\lambda'_k ,
$$
where the second sum runs over the vectors $\alpha$ with $|\alpha|=p$ and
$\alpha_k\leq p-1$ for all $k$.  The term with $d=1$ does not appear, since
$\langle\lambda',\xi\rangle-\sum_k\xi_k\lambda'_k=0$.  The last sum belongs to
$L$.  Let $2\leq d\leq p-1$.  By the multinomial theorem,
$$
\langle\lambda',\xi\rangle^d=\sum_{|\alpha|=d}\frac{d!}{\alpha!}\,
\xi^\alpha{\lambda'}^{\alpha},
$$
and the monomials ${\lambda'}^{\alpha}$ with $|\alpha|=d$ are reduced.
Moreover, the coefficient of ${\lambda'_k}^{d}$ in
$\langle\lambda',\xi\rangle^d-\sum_j\xi_j{\lambda'_j}^{d}$ is $\xi_k^d-\xi_k$.
Hence, the statement follows from the definition of $\tau_\alpha$.
\end{proof}

\begin{lemma}\label{lem:F}
Let $\cF$ be the space of functions from $\Z_p^{t_1}$ to $\Z_p$ spanned by the
functions $\tau_\alpha$, with $\alpha$ as in Lemma \ref{lem:gxi}, and by the
functions $\xi\mapsto\xi_k$, with $1\leq k\leq t_1$.  Then, the reduced
monomials $\xi^\alpha$ with $|\alpha|\in D_p\setminus\{0\}$ form a basis of
$\cF$.
\end{lemma}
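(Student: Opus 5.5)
We prove the two inclusions separately. Write $\cF'$ for the span of the reduced monomials $\xi^\alpha$ with $|\alpha|\in D_p\setminus\{0\}$. Since distinct reduced monomials are linearly independent as functions on $\Z_p^{t_1}$ (uniqueness of reduced polynomials), it is enough to show $\cF\subseteq\cF'$ and $\cF'\subseteq\cF$.

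For $\cF\subseteq\cF'$, we look at each generator in turn.

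\begin{itemize}
\item The functions $\xi\mapsto\xi_k$ are monomials of degree $1\in D_p$.
\item Let $|\alpha|=p$. Then $\tau_\alpha$ is a nonzero multiple of $\xi^\alpha$, which is reduced of degree $p\in D_p$.
\item Let $|\alpha|=d$ with $2\le d\le p-1$. If $q_d=0$, then $\tau_\alpha=0$. Otherwise $d\in D_p$, and $\tau_\alpha$ is either a multiple of $\xi^\alpha$ or equal to $q_d(\xi_k^d-\xi_k)$. In both cases it lies in $\cF'$, because $d,1\in D_p$ and $\xi_k^d$ is reduced since $d\le p-1$.
\end{itemize}

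For $\cF'\subseteq\cF$, we produce each monomial.

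\begin{itemize}
\item Degree $1$: the monomials $\xi_k$ are generators of $\cF$.
\item Degree $p$: every reduced monomial of degree $p$ has all $\alpha_k\le p-1$, so it occurs as some $\alpha$ in Lemma \ref{lem:gxi}. Then $\xi^\alpha=\alpha!\,\tau_\alpha$, and $\alpha!$ is invertible in $\Z_p$.
\item Degree $d$ with $2\le d\le p-1$ and $q_d\ne0$: if $\alpha\ne de_k$, then $\xi^\alpha=\frac{\alpha!}{q_d\,d!}\tau_\alpha$, which is valid because $d!$ is a unit. If $\alpha=de_k$, then $\xi_k^d=q_d^{-1}\tau_{de_k}+\xi_k\in\cF$.
\end{itemize}

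This gives both inclusions. Linear independence of the reduced monomials then makes them a basis of $\cF$.

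One degree needs a separate check, namely $d=p$ when $t_1=1$. In that case the set of $\alpha$ with $|\alpha|=p$ and $\alpha_k\le p-1$ is empty, so no reduced monomial of degree $p$ exists. The degree $p$ then contributes nothing on either side, and the argument is consistent.

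The main, though minor, obstacle is the bookkeeping: checking that every $\alpha$ in Lemma \ref{lem:gxi} gives a reduced monomial, and that the coefficients $d!/\alpha!$, $1/\alpha!$ and $q_d$ are nonzero in $\Z_p$ exactly where we divide by them. Both follow from $d\le p$, $\alpha_k\le p-1$, and the definition of $D_p$.
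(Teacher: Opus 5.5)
Your proof is correct and follows essentially the same route as the paper. Degree by degree, you check that the $\tau_\alpha$ (nonzero multiples of $\xi^\alpha$, or $q_d(\xi_k^d-\xi_k)$ when $\alpha=de_k$), together with the $\xi_k$, span exactly the reduced monomials whose degrees lie in $D_p\setminus\{0\}$, and then you use their linear independence; writing this as the two inclusions $\cF\subseteq\cF'$ and $\cF'\subseteq\cF$ is only a difference in presentation.
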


\begin{proof}
Let $2\leq d\leq p-1$.  If $q_d=0$, then $\tau_\alpha=0$ for every $\alpha$
with $|\alpha|=d$.  Assume that $q_d\neq0$.  For $|\alpha|=d$ and
$\alpha\neq de_k$, the function $\tau_\alpha$ is a nonzero multiple of
$\xi^\alpha$, since $d!/\alpha!$ is not divisible by $p$.  For $\alpha=de_k$,
we have $\tau_\alpha=q_d(\xi_k^d-\xi_k)$.  Hence, the functions $\tau_\alpha$
with $|\alpha|=d$, together with $\xi_1,\dots,\xi_{t_1}$, span the same space
as the monomials of degree $d$ and $\xi_1,\dots,\xi_{t_1}$.  These monomials
are reduced, since $d\leq p-1$.  Finally, the functions $\tau_\alpha$ with
$|\alpha|=p$ are nonzero multiples of the reduced monomials of degree $p$.
By \eqref{eq:Dp}, it follows that $\cF$ is spanned by the reduced monomials
$\xi^\alpha$ with $|\alpha|\in D_p\setminus\{0\}$.  These monomials are
linearly independent, and the statement follows.
\end{proof}

\begin{theorem}\label{thm:rank}
Let $p$ be a prime, and let $t_1\geq1$ and $t_2\geq0$.  Then,
$$
\rank\bigl(\bH_p^{t_1,t_2}\bigr)=t_2+1+\binom{t_1+p-1}{p}.
$$
If $t_2\geq1$, then
$$
\rank\bigl(H_p^{t_1,t_2}\bigr)=t_2-1+\sum_{d\in D_p}\binom{t_1+d-1}{d},
$$
where $D_p$ is the set given by \eqref{eq:Dp}.
\end{theorem}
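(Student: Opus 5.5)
By Lemma~\ref{lem:W}, the rank of $H_p^{t_1,t_2}$ is $t_2+\dim W_X$ and the rank of $\bH_p^{t_1,t_2}$ is $t_2+\dim W_\Omega$, so the plan is to compute $\dim W_U$ for $U\in\{X,\Omega\}$. By Lemma~\ref{lem:Wxi}, $W_U=L+\langle g_\xi:\ \xi\in\overline U\rangle$. We regard everything as reduced polynomials in $\lambda',\lambda''$. By Lemma~\ref{lem:gxi}, modulo $L$ each $g_\xi$ is a combination of the monomials ${\lambda'}^\alpha$ (with $2\le|\alpha|\le p$, $\alpha_k\le p-1$) and $\lambda''_1,\dots,\lambda''_{t_1}$. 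The coefficient vector of $g_\xi$ is $\bigl(\tau_\alpha(\xi),\ \xi_k\bigr)$.

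Since these monomials together with $\lambda'_1,\dots,\lambda'_{t_1}$ are distinct reduced monomials, they are linearly independent. Hence
$$
\dim W_U=t_1+\dim\langle(\tau_\alpha(\xi),\xi_1,\dots,\xi_{t_1}):\ \xi\in\overline U\rangle .
$$
This is the row rank of a matrix with rows indexed by $\xi\in\overline U$. Switching to column rank, exactly as in Lemma~\ref{lem:evaluation}, it equals the dimension of the space $\cF|_{\overline U}$. Here $\cF$ is the space of Lemma~\ref{lem:F}, and $\cF|_{\overline U}$ consists of its functions restricted to $\overline U$.

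\emph{Case $U=X$.} Here $\overline X=\Z_p^{t_1}\setminus\{\zero\}$. Every basis monomial of $\cF$ from Lemma~\ref{lem:F} has positive degree, so it vanishes at $\zero$. Therefore restriction to $\overline X$ is injective on $\cF$: a function vanishing on all nonzero points and at $\zero$ is zero, and reduced polynomials represent functions uniquely. So $\dim\cF|_{\overline X}$ is the number of reduced monomials of degree in $D_p\setminus\{0\}$.

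For $d\le p-1$, every monomial of degree $d$ is reduced, giving $\binom{t_1+d-1}{d}$ monomials. For $d=p$, the non-reduced monomials $\xi_k^p$ must be excluded, giving $\binom{t_1+p-1}{p}-t_1$. Adding $t_1$ (the contribution of $L$) and $t_2$, the extra $t_1$ from $L$ cancels the $-t_1$, and the $-1$ together with the $d=0$ term gives exactly the stated formula. I would check this bookkeeping on $p=3$, $t_1=1$, where $q(x)=2x+x^2$.

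\emph{Case $U=\Omega$.} Here $\overline\Omega=\{\xi_1=1\}$, and I would identify it with $\Z_p^{t_1-1}$ via $\xi'=(\xi_2,\dots,\xi_{t_1})$. I claim that $\cF|_{\overline\Omega}$ is the space of all reduced polynomials of degree $\le p$ in $\xi'$, so that the $q_d$ play no role.

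For $\supseteq$: the functions $\xi_1$ restrict to the constant $1$. The degree-$p$ monomials $\xi_1^{p-|\beta|}{\xi'}^\beta$ with $1\le|\beta|\le p-1$, together with the reduced ${\xi'}^\beta$ with $|\beta|=p$, restrict to all reduced monomials of degree $1,\dots,p$ in $\xi'$.

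For $\subseteq$: every generator of $\cF$ has degree $\le p$ and is reduced, and setting $\xi_1=1$ preserves both properties.

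Then $\dim W_\Omega=t_1+1+N$, where $N$ is the number of reduced monomials in $t_1-1$ variables of degree between $1$ and $p$. The remaining step is the identity $t_1+N=\binom{t_1+p-1}{p}$. This holds because monomials of degree $\le p-1$ in $t_1-1$ variables number $\binom{t_1+p-2}{p-1}$, and the reduced ones of degree $p$ number $\binom{t_1+p-2}{p}-(t_1-1)$. Pascal's rule then finishes it.

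The main obstacle I expect is the passage from the spanning set of $W_U$ to a dimension count, including $L$ and the $\lambda''$-part. Specifically, I must carefully justify that the ${\lambda'}^\alpha$, the $\lambda''_k$ and $L$ are independent. I must also justify that the row/column duality produces $\cF|_{\overline U}$ including the linear $\xi_k$ columns, and I must handle the off-by-$t_1$ bookkeeping from the non-reduced monomials $\xi_k^p$.
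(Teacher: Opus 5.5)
Your proposal is correct and follows the paper's proof essentially step for step. The shared steps are: reduce to $\dim W_U$ via Lemma~\ref{lem:W}, split off $L$ using Lemma~\ref{lem:gxi}, pass by row/column duality to the restriction of $\cF$ to $\overline U$, get injectivity on $\overline X$ because the basis monomials have no constant term, and identify the restrictions to $\overline\Omega$ with all reduced polynomials of degree at most $p$ in $\xi'$. The only difference is cosmetic: you finish the $\Omega$ count with Pascal's rule (your $\binom{t_1+p-2}{p-1}$ includes the constant monomial, so the two counts sum to $1+N$), whereas the paper directly counts $\binom{t_1+p-1}{p}-(t_1-1)$ reduced monomials of degree at most $p$ in $\xi'$.
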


\begin{proof}
By Lemma \ref{lem:W}, we have to show that $\dim W_\Omega=1+\binom{t_1+p-1}{p}$
and that $\dim W_X$ is the sum of $\binom{t_1+d-1}{d}$ over
$d\in D_p\setminus\{0\}$.  Let $U\in\{X,\Omega\}$.  By Lemma \ref{lem:Wxi}, the
space $W_U$ is spanned by $L$ and by the functions $g_\xi$ with
$\xi\in\overline{U}$.  By Lemma \ref{lem:gxi}, modulo $L$, every such function
$g_\xi$ is a linear combination of the functions ${\lambda'}^{\alpha}$ and
$\lambda''_k$, with the coefficients $\tau_\alpha(\xi)$ and $\xi_k$.  The
functions ${\lambda'}^{\alpha}$, $\lambda'_k$ and $\lambda''_k$ are pairwise
different reduced monomials in the $2t_1$ variables, and hence they are
linearly independent.  Therefore, $\dim W_U=t_1+r_U$, where $r_U$ is the rank
of the matrix whose rows are indexed by $\xi\in\overline{U}$, and whose row
$\xi$ is formed by the values $\tau_\alpha(\xi)$ and $\xi_1,\dots,\xi_{t_1}$.
As in the proof of Lemma \ref{lem:evaluation}, the rank $r_U$ is also the
dimension of the space spanned by the columns of this matrix, regarded as
functions on $\overline{U}$.  These columns are the restrictions to
$\overline{U}$ of the functions which span the space $\cF$ of Lemma
\ref{lem:F}.  Hence, $r_U$ is the dimension of the space formed by the
restrictions to $\overline{U}$ of the functions of $\cF$.

First, let $U=X$.  By Lemma \ref{lem:F}, the functions of $\cF$ are given by
polynomials without constant term.  By Lemma \ref{lem:Wxi},
$\overline{X}=\Z_p^{t_1}\setminus\{\zero\}$.  Hence, a function of $\cF$ which
vanishes on $\overline{X}$ also vanishes at $\zero$, and it is zero.
Therefore, $r_X=\dim\cF$.  The number of reduced monomials of degree $d$ in
$t_1$ variables is $\binom{t_1+d-1}{d}$ if $d\leq p-1$, and
$\binom{t_1+p-1}{p}-t_1$ if $d=p$.  By Lemma \ref{lem:F}, we obtain
$$
\dim W_X=t_1+\sum_{d\in D_p\setminus\{0,p\}}\binom{t_1+d-1}{d}
+\binom{t_1+p-1}{p}-t_1=\sum_{d\in D_p\setminus\{0\}}\binom{t_1+d-1}{d}.
$$

Now, let $U=\Omega$.  By Lemma \ref{lem:Wxi}, the vectors of
$\overline{\Omega}$ are the vectors $\xi=(1,\xi')$ with
$\xi'=(\xi_2,\dots,\xi_{t_1})\in\Z_p^{t_1-1}$.  The restriction of a monomial
$\xi^\alpha$ to $\overline{\Omega}$ is a monomial in $\xi'$ of degree at most
$|\alpha|$, and it is reduced when $\xi^\alpha$ is reduced.  Conversely, let
${\xi'}^{\delta}$ be a reduced monomial with $1\leq|\delta|\leq p$.  Then,
${\xi'}^{\delta}$ is the restriction of $\xi_1^{p-|\delta|}{\xi'}^{\delta}$,
which is a reduced monomial of degree $p$.  Moreover, the constant function $1$
is the restriction of $\xi_1$.  Since $1,p\in D_p$, Lemma \ref{lem:F} shows
that the restrictions of the functions of $\cF$ to $\overline{\Omega}$ form the
space spanned by the reduced monomials in $\xi'$ of degree at most $p$.  These
monomials are linearly independent.  Their number is
$\binom{t_1+p-1}{p}-(t_1-1)$, since the monomials of degree at most $p$ in
$\xi'$ which are not reduced are $\xi_2^p,\dots,\xi_{t_1}^p$.  Therefore,
$\dim W_\Omega=t_1+\binom{t_1+p-1}{p}-(t_1-1)$, and the proof is complete.
\end{proof}

For $p=2$, there is no $d$ with $2\leq d\leq p-1$, and hence
$D_2=\{0,1,2\}$.  In this case, Theorem \ref{thm:rank} gives
$$
\rank\bigl(H_2^{t_1,t_2}\bigr)=2t_1+t_2+\binom{t_1}{2}\quad\text{and}\quad
\rank\bigl(\bH_2^{t_1,t_2}\bigr)=2t_1+t_2+\binom{t_1-1}{2},
$$
which are the formulas of \cite{PRV06}.  For odd $p$, the set $D_p$ is
described by the Bernoulli numbers $B_m$, which are the rational numbers
given by
$$
\frac{x}{e^x-1}=\sum_{m\geq0}B_m\frac{x^m}{m!}.
$$
We have $B_0=1$, $B_1=-1/2$, and $B_m=0$ for every odd $m\geq3$.  For two
$p$-integral rational numbers $r$ and $r'$, we write $r\equiv r'\pmod{p^k}$ if
$(r-r')/p^k$ is $p$-integral.  For $n\geq0$, let
$\sigma_n=\sum_{x=1}^{p-1}x^n\in\Z$.  We will use the following classical
results.

\begin{lemma}[\cite{IrelandRosen,Washington}]\label{lem:bernoulli}
Let $p$ be an odd prime.  For every even $m\geq2$, the number $pB_m$ is
$p$-integral, and $B_m$ is $p$-integral if and only if $p-1$ does not divide
$m$.  For every $n\geq1$, we have
$$
\sigma_n=\sum_{i=0}^{n}\binom{n}{i}\frac{B_i\,p^{n+1-i}}{n+1-i}.
$$
Moreover, if $m,m'\geq2$ are even, $m\equiv m'\pmod{p-1}$ and $p-1$ does not
divide $m$, then $B_m/m$ and $B_{m'}/m'$ are $p$-integral, and
$B_{m'}/m'\equiv B_m/m\pmod p$.
\end{lemma}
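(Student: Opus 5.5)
These are classical facts, and I would derive all of them from the generating function of the Bernoulli numbers, in three steps.

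\emph{Step 1: the formula for $\sigma_n$.} I would start from the identity of power series
$$
\sum_{x=0}^{p-1}e^{xT}=\frac{e^{pT}-1}{e^T-1}=\frac{e^{pT}-1}{T}\cdot\frac{T}{e^T-1}.
$$
The coefficient of $T^n/n!$ on the left is $\sigma_n$, since the term $x=0$ contributes $0^n=0$ for $n\geq1$. On the right, we have $(e^{pT}-1)/T=\sum_{k\geq0}p^{k+1}T^k/(k+1)!$. Taking the Cauchy product with $\sum_iB_iT^i/i!$ and extracting the coefficient of $T^n/n!$ gives exactly $\sum_i\binom{n}{i}B_ip^{n+1-i}/(n+1-i)$.

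\emph{Step 2: $p$-integrality of $pB_m$ and of $B_m$.} I would argue by strong induction on $n$, isolating the term $i=n$ in the formula of Step 1:
$$
pB_n=\sigma_n-\sum_{i<n}\binom{n}{i}\frac{p^{n+1-i}}{n+1-i}B_i .
$$
Put $k=n-i\geq1$. By induction, $pB_i$ is $p$-integral, so the summand equals $\binom{n}{i}\,(pB_i)\,p^{k-1}/(k+1)$. Since $v_p(k+1)\leq k-1$ for $k\geq1$ and odd $p$, each summand is $p$-integral. Hence $pB_n$ is $p$-integral.

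For the second claim I need $pB_m\equiv\sigma_m\pmod p$. This requires every correction term to have positive valuation, that is, $k-1-v_p(k+1)\geq1$ whenever $B_i$ is not $p$-integral.
\begin{itemize}
\item Odd $i\geq3$ contribute nothing, since $B_i=0$. The terms $i=0,1$ are $p^{m+1}/(m+1)$ and $-\tfrac12 mp^m$; these are divisible by $p$ because $v_p(m+1)\leq m-1$ and $m\geq2$.
\item For even $i$ with $k=1$ the summand is divisible by $p$ as soon as $B_i$ is $p$-integral. For $k\geq3$ one has $k-1-v_p(k+1)\geq1$ directly. The term $k=2$ does not occur for even $i$ and even $m$.
\item The remaining delicate case is $p=3$, where the bounds are tight. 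I would handle it by using the inductive statement ``$B_i$ is $p$-integral unless $(p-1)\mid i$'' together with the parity of $i$.
\end{itemize}
Then $\sigma_m\equiv-1$ or $0\pmod p$, according as $(p-1)\mid m$ or not, because $\Z_p^*$ is cyclic of order $p-1$. It follows that $pB_m\not\equiv0$ exactly when $(p-1)\mid m$, which is the stated criterion. Alternatively, this step is just the von Staudt--Clausen theorem, which I could cite.

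\emph{Step 3: the Kummer congruence (main obstacle).} Here I would prove Voronoi's congruence. Let $a$ be an integer prime to $p$ and let $m\geq2$ be even. Then
$$
(a^m-1)\frac{B_m}{m}\equiv\sum_{x=1}^{p-1}x^{m-1}\Bigl\lfloor\frac{xa}{p}\Bigr\rfloor\pmod p .
$$
To prove it, write $xa=p\lfloor xa/p\rfloor+r_x$, where $x\mapsto r_x$ permutes $\{1,\dots,p-1\}$. Expanding $(xa)^m=(r_x+p\lfloor xa/p\rfloor)^m$ to first order gives
$$
a^m\sigma_m\equiv\sigma_m+mp\sum_xr_x^{m-1}\lfloor xa/p\rfloor\pmod{p^2},
$$
and $r_x^{m-1}\equiv(xa)^{m-1}\pmod p$. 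Next I need $\sigma_m\equiv pB_m\pmod{p^2}$ (for $p-1\nmid m$), which comes from Step 1 by a more careful valuation count of the same kind. Dividing by $pm$ then yields the congruence; the step $m/m\equiv1$ is harmless since $v_p(m)$ cancels.

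Now choose $a$ to be a primitive root modulo $p$. If $p-1\nmid m$, then $a^m\not\equiv1\pmod p$. Moreover, both $a^m$ and $x^{m-1}$ modulo $p$ depend only on $m\bmod(p-1)$, by Fermat. Hence the right-hand side, and therefore $B_m/m\bmod p$, depends only on $m\bmod(p-1)$. This gives the $p$-integrality of $B_m/m$ and the Kummer congruence $B_{m'}/m'\equiv B_m/m\pmod p$.

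I expect the mod-$p^2$ bookkeeping in $\sigma_m\equiv pB_m$, especially for $p=3$ and for $p\mid m$, to be the fiddliest part. If it resists, I would simply invoke \cite{Washington}.
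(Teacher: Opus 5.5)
The paper does not prove this lemma; it cites von Staudt--Clausen and Kummer's congruence from \cite{IrelandRosen,Washington}. Your self-contained route (generating function, induction on $p$-adic valuations, Voronoi's congruence with a primitive root) is a standard one. Step 1 is correct and Step 2 is correct in substance. Step 3, however, has a real gap when $p\mid m$, a case the statement allows (e.g.\ $p=5$, $m=2$, $m'=10$).

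Put $e=v_p(m)$, the $p$-adic valuation. If $e\geq1$, your computation modulo $p^2$ gives nothing: the term $mp\sum_x r_x^{m-1}\lfloor xa/p\rfloor$ is $\equiv0\pmod{p^2}$, so you only learn $(a^m-1)\sigma_m\equiv0\pmod{p^2}$. Dividing by $pm$, which has valuation $e+1\geq2$, then yields neither the $p$-integrality of $B_m/m$ nor its residue. The remark that ``$v_p(m)$ cancels'' becomes true only if you carry everything modulo $p^{e+2}$, and this does work:
\begin{itemize}
\item In $(r_x+p\lfloor xa/p\rfloor)^m$, the terms with $j\geq2$ have valuation at least $e-v_p(j)+j\geq e+2$, because $v_p\binom{m}{j}\geq e-v_p(j)$.
\item Since $r_x\equiv xa\pmod p$, you have $mp\,r_x^{m-1}\equiv mp\,(xa)^{m-1}\pmod{p^{e+2}}$.
\item The congruence $\sigma_m\equiv pB_m\pmod{p^{e+2}}$ follows from Step 1. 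For even $i\geq2$ and $k=m-i$, the term has valuation at least $e+k-v_p(k)-v_p(k+1)\geq e+2$ when $p\geq5$. The terms $p^{m+1}/(m+1)$ and $-p^m/2$ (for $i=0,1$) also have valuation at least $e+2$.
\end{itemize}
For $p=3$ the Kummer statement is vacuous, since $2$ divides every even $m$. Dividing by $mp$ then gives $(a^m-1)B_m/m\equiv a^{m-1}\sum_{x=1}^{p-1}x^{m-1}\lfloor xa/p\rfloor\pmod p$, with $B_m/m$ $p$-integral, and your primitive-root argument finishes. Note the factor $a^{m-1}$ here, which you dropped; it is harmless, since it too depends only on $m$ modulo $p-1$. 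For the paper's only use of the lemma, in Proposition~\ref{prop:coeff}, one has $m\leq p-3$ and $m'=m+p-1\equiv m-1\pmod p$. So $p\nmid mm'$ there, and your mod-$p^2$ version would already suffice.

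In Step 2 there is a slip that makes the argument look harder than it is. The summand is $\binom{n}{i}(pB_i)\,p^{k}/(k+1)$, not $\binom{n}{i}(pB_i)\,p^{k-1}/(k+1)$, and the $i=1$ term of $\sigma_m$ is $-p^m/2$, not $-\tfrac12 mp^m$. With the correct exponent, every correction term has valuation at least $k-v_p(k+1)\geq1$, using only the $p$-integrality of $pB_i$. Hence $pB_m\equiv\sigma_m\pmod p$ holds at once for every odd $p$, including $p=3$, and your case analysis is unnecessary. That is fortunate, because as written it would not close. For $i$ and $m$ both even, $k=m-i$ is even, so it is $k=2$ that occurs and $k=1$ that does not. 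Moreover, for $p=3$ every $B_i$ with $i\geq2$ even fails to be $3$-integral, so the parity argument you sketch for the ``delicate'' case $p=3$ could not have helped.
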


The first statement of Lemma \ref{lem:bernoulli} is the theorem of von Staudt
and Clausen, and the last one is Kummer's congruence.

\begin{proposition}\label{prop:coeff}
Let $p$ be an odd prime, and let $q_1,\dots,q_{p-1}$ be as in \eqref{eq:q}.
Then, $q_{p-1}=(p-1)/2$, and $q_d=0$ for every even $d$ with
$2\leq d\leq p-3$.  Moreover, $q_d=-B_{p-d}/d$ for every odd $d$ with
$3\leq d\leq p-2$.
\end{proposition}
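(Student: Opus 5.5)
The plan is to compute the coefficients $q_d$ by discrete Fourier analysis on $\Z_p$: since $q(x)=\sum_{d=1}^{p-1}q_dx^d$ for all $x\in\Z_p$, the coefficients are recovered from power sums of $q$. For $1\leq d\leq p-1$ and $0\leq e\leq p-1$, we have $\sum_{x\in\Z_p}x^{d+e}=-1$ if $p-1\mid d+e$ and $d+e>0$, and $0$ otherwise. Taking $e=p-1-d$ gives
$$
q_d=-\sum_{x\in\Z_p}q(x)\,x^{p-1-d}
$$
for $1\leq d\leq p-1$, since only the term with $d'=d$ survives (for $d=p-1$, the index $e=0$ pairs with $d'=p-1$ only, because $d'\geq1$). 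Thus
$$
q_d\equiv-\frac1p\sum_{x=1}^{p-1}\bigl(x^{2p-1-d}-x^{p-d}\bigr)
=-\frac{\sigma_{2p-1-d}-\sigma_{p-d}}{p}\pmod p .
$$
More precisely, I would lift to integers: $\sum_x\theta(x)x^{p-1-d}=(\sigma_{2p-1-d}-\sigma_{p-d})/p$, and reduce modulo $p$.

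The next step is to evaluate $\sigma_n$ modulo $p^2$ using the formula of Lemma \ref{lem:bernoulli}: $\sigma_n=pB_n+\frac{n}{2}p^2B_{n-1}+\dots$. By von Staudt--Clausen ($pB_m$ is $p$-integral), the terms with $i\leq n-2$ are divisible by $p^2$, except that some care is needed when $B_i$ has $p$ in its denominator; those terms carry $p^{n+1-i}$ with $n+1-i\geq3$, so they are still $\equiv0\pmod{p^2}$. The edge case $n+1-i=p$ in the denominator has to be watched, but for $n\leq 2p-2$ it is harmless. Hence
$$
\sigma_n\equiv pB_n+\tfrac{n}{2}p^2B_{n-1}\pmod{p^2}
$$
modulo $p$-integral multiples of $p^2$.

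The cases then split by parity. For $d$ odd, both $n=2p-1-d$ and $n=p-d$ are even, so $B_{n-1}=0$ (as $n-1\geq3$ is odd), or $n-1=1$ when $d=p-2$, which I would handle separately. This gives $q_d\equiv-(B_{2p-1-d}-B_{p-d})$. Writing $B_m=m\cdot(B_m/m)$ and applying Kummer's congruence with $m=p-d$ and $m'=2p-1-d\equiv m\pmod{p-1}$, where $p-1\nmid m$ because $2\leq p-d\leq p-3$ for $3\leq d\leq p-2$, we get $B_{m'}-B_m\equiv(m'-m)B_m/m=(p-1)B_m/m\equiv-B_m/m$. Hence $q_d=B_{p-d}/(p-d)=-B_{p-d}/d$ modulo $p$.

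For $d$ even with $2\leq d\leq p-3$, both indices are odd and at least $3$, so $B_n=0$. The correction term becomes $\frac n2 pB_{n-1}$ with $n-1$ even and $p-1\nmid n-1$ (since $n-1\in\{p-1-d,\,2p-2-d\}$ and $2\leq d\leq p-3$), so $B_{n-1}$ is $p$-integral. The contribution modulo $p$ is then $-\frac12\bigl((2p-1-d)B_{2p-2-d}-(p-d)B_{p-1-d}\bigr)$, which by Kummer (with $m=p-1-d$) equals $-\frac12\bigl((-1-d)(-1-d)-(-d)(-1-d)\bigr)B_m/m$. Expanding, this is $-\frac12(1+d)B_m/m$. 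That is not obviously zero, so I must recheck the expansion; most likely the $i=n-1$ terms combine as $n p^2B_{n-1}/2$ together with additional $i=n-2$ contributions involving $B_{n-2}$, which vanish here.

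The main obstacle is this careful bookkeeping of $\sigma_n$ modulo $p^2$. Rather than the formula above, I would first try to verify the even case directly: the reflection $x\mapsto p-x$ gives $\theta(p-x)\equiv-\theta(x)-x^{p-1}+1\cdots\pmod p$, which makes $q(-x)+q(x)$ an explicit simple polynomial. This shows that the even part of $q$ is just $\frac{p-1}{2}(x^{p-1}-\dots)$, so $q_d=0$ for even $d\leq p-3$. The case $d=p-1$ also follows from the first display: $q_{p-1}=-\sum_x q(x)=-(\sigma_p-\sigma_1)/p$, and $\sigma_p\equiv\frac p2p^2B_{p-1}+\dots$, which involves $pB_{p-1}\equiv-1$. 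Alternatively, the reflection directly yields $q_{p-1}=(p-1)/2$.
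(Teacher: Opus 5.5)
Your reduction to $q_d=-(\sigma_{2p-1-d}-\sigma_{p-d})/p$ follows the paper. So does your treatment of odd $d$ with $3\le d\le p-2$: you show $\sigma_n\equiv pB_n\pmod{p^2}$ for $n\in\{p-d,2p-1-d\}$ and then apply Kummer's congruence with $m'-m=p-1\equiv-1$. The gap is that, as written, you prove neither $q_d=0$ for even $d$ nor $q_{p-1}=(p-1)/2$.

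In the Bernoulli computation for even $d$ you dropped a factor of $p$. The term with $i=n-1$ contributes $\frac n2p^2B_{n-1}$ to $\sigma_n$, hence $\frac n2pB_{n-1}$ to $(\sigma_{2p-1-d}-\sigma_{p-d})/p$. This is $\equiv0\pmod p$ because $B_{n-1}$ is $p$-integral. Your expression $-\frac12\bigl((2p-1-d)B_{2p-2-d}-(p-d)B_{p-1-d}\bigr)$ is missing exactly this factor, which is where the spurious nonzero value comes from. Once it is restored, and the terms with $i\le n-2$ are checked as in the odd case, you get $\sigma_n\equiv0\pmod{p^2}$ for both odd indices $n\in\{p-d,2p-1-d\}$, hence $q_d=0$.

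Likewise, in $\sigma_p$ the term $\frac p2p^2B_{p-1}=\frac{p^2}{2}(pB_{p-1})$ is already $\equiv0\pmod{p^2}$, so $pB_{p-1}\equiv-1$ plays no role. From $\sigma_p\equiv0\pmod{p^2}$ and $\sigma_1=p(p-1)/2$ you get $q_{p-1}=\sigma_1/p=(p-1)/2$.

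Your reflection fallback is a good idea, but the identity you write, $\theta(p-x)\equiv-\theta(x)-x^{p-1}+1\cdots$, is not correct as it stands. For $p\nmid x$ the terms $-x^{p-1}+1$ cancel modulo $p$, and $\theta(p-x)\equiv-\theta(x)$ would make $q$ an odd function and force $q_{p-1}=0$. The correct statement is $\theta(p-x)\equiv-\theta(x)-1\pmod p$ for $1\le x\le p-1$, which follows from $(p-x)^p\equiv-x^p\pmod{p^2}$. Thus $q(x)+q(-x)=-x^{p-1}$ for every $x\in\Z_p$. Since $q(x)+q(-x)=2\sum_{d\ \text{even}}q_dx^d$ is a reduced polynomial, uniqueness gives $2q_d=0$ for even $d\le p-3$ and $2q_{p-1}=-1$ at once.

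The paper instead applies the reflection to power sums, using $x^m+(p-x)^m\equiv mpx^{m-1}\pmod{p^2}$ for odd $m$ to get $\sigma_m\equiv0\pmod{p^2}$. Your corrected version works directly on $q$ and needs neither the orthogonality formula nor Bernoulli numbers for these two cases. The Bernoulli machinery and Kummer's congruence are then needed only for odd $d$.
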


\begin{proof}
Since the group $\Z_p^*$ is cyclic of order $p-1$, the sum $\sum_{x\in\Z_p}x^m$
is equal to $-1$ if $m$ is a positive multiple of $p-1$, and to $0$ otherwise,
where $0^0=1$.  Let $1\leq d\leq p-1$.  For $1\leq j\leq p-1$, the exponent
$j+p-1-d$ is a positive multiple of $p-1$ if and only if $j=d$.  Hence, if we
multiply \eqref{eq:q} by $x^{p-1-d}$ and sum over $x\in\Z_p$, we obtain
$q_d=-\sum_{x\in\Z_p}q(x)x^{p-1-d}$.  We have
$q(\overline{x})=\overline{\theta(x)}$ for every integer $x$ with
$0\leq x\leq p-1$, and $\theta(0)=0$.  Moreover,
$\sum_{x=1}^{p-1}\theta(x)x^{p-1-d}=(\sigma_{2p-1-d}-\sigma_{p-d})/p$, and this
number is an integer.  Hence, in $\Z_p$,
\begin{equation}\label{eq:cd}
q_d=-\frac{\sigma_{2p-1-d}-\sigma_{p-d}}{p}.
\end{equation}

Let $m$ be odd.  For $1\leq x\leq p-1$, we have
$x^m+(p-x)^m\equiv mpx^{m-1}\pmod{p^2}$.  Hence,
$\sigma_m\equiv mp\sum_{x=1}^{(p-1)/2}x^{m-1}\pmod{p^2}$.  If $m-1$ is not a
multiple of $p-1$, then $2\sum_{x=1}^{(p-1)/2}x^{m-1}\equiv\sigma_{m-1}\equiv0
\pmod p$, and hence $\sigma_m\equiv0\pmod{p^2}$.  If $d$ is even and
$2\leq d\leq p-3$, then $p-d$ and $2p-1-d$ are odd, and neither $p-1-d$ nor
$2p-2-d$ is a multiple of $p-1$.  Hence, $q_d=0$, by \eqref{eq:cd}.  If
$d=p-1$, then $\sigma_1=p(p-1)/2$ and
$\sigma_p\equiv p^2\sum_{x=1}^{(p-1)/2}x^{p-1}\equiv0\pmod{p^2}$.  Hence,
$q_{p-1}=(p-1)/2$.

Finally, let $d$ be odd with $3\leq d\leq p-2$.  Then, $p\geq5$, and
$m=p-d$ is even with $2\leq m\leq p-3$.  Let $m'=m+p-1\leq2p-4$.  Neither $m$
nor $m'$ is a multiple of $p-1$.  We first show that
$\sigma_n\equiv pB_n\pmod{p^2}$ for $n\in\{m,m'\}$.  By Lemma
\ref{lem:bernoulli}, $B_n$ is $p$-integral, and the term with $i=n$ in the
formula for $\sigma_n$ is $pB_n$.  Let $0\leq i<n$.  If $i\neq p-1$ and
$n+1-i\neq p$, then $B_i$ is $p$-integral, and $p$ does not divide $n+1-i$,
since $2\leq n+1-i\leq 2p-3$.  Hence, the term with index $i$ is $p^2$ times a
$p$-integral number.  If $n+1-i=p$, then $n=m'$ and $i=m$, and this term is
$\binom{m'}{m}B_m\,p^{p-1}$.  If $i=p-1$, then $n=m'$, and this term is
$\binom{m'}{p-1}(pB_{p-1})\,p^{m}/(m+1)$, where $m\geq2$ and $p$ does not
divide $m+1$.  In both cases, this term is again $p^2$ times a $p$-integral
number.  Therefore, $\sigma_m\equiv pB_m$ and
$\sigma_{m'}\equiv pB_{m'}\pmod{p^2}$.  By Lemma \ref{lem:bernoulli}, we also
have $B_{m'}/m'\equiv B_m/m\pmod p$.  Since $m'\equiv m-1\pmod p$, it follows
that $B_{m'}\equiv(m-1)B_m/m\pmod p$.  Hence, \eqref{eq:cd} gives
$q_d=B_m-B_{m'}=B_m/m=-B_{p-d}/d$ in $\Z_p$, since $m\equiv-d\pmod p$.
\end{proof}

Since $B_m=0$ for every odd $m\geq3$, and since $-B_1/(p-1)=(p-1)/2$ in $\Z_p$,
Proposition \ref{prop:coeff} can also be written as $q_d=-B_{p-d}/d$ for every
$d$ with $2\leq d\leq p-1$.

An odd prime $p$ is called regular if it does not divide the numerator of $B_m$
for any even $m$ with $2\leq m\leq p-3$, and irregular otherwise
\cite{Washington}.  The number of even $m$ with $2\leq m\leq p-3$ such that $p$
divides the numerator of $B_m$ is called the index of irregularity of $p$, and
we denote it by $i_p$.  Since $d$ is invertible in $\Z_p$ for
$1\leq d\leq p-1$, Proposition \ref{prop:coeff} and \eqref{eq:Dp} give the
following result.

\begin{corollary}\label{cor:Dp}
Let $p$ be an odd prime.  Then, $D_p$ is obtained from $\{0,1,\dots,p\}$ by
removing the even numbers $d$ with $2\leq d\leq p-3$, and the numbers $p-m$
such that $m$ is even, $2\leq m\leq p-3$, and $p$ divides the numerator of
$B_m$.  In particular, if $p$ is regular, then $D_p$ is formed by $0$, $p-1$
and the odd numbers $d$ with $1\leq d\leq p$.
\end{corollary}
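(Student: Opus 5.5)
The plan is to read off Corollary~\ref{cor:Dp} from the definition \eqref{eq:Dp} of $D_p$ together with Proposition~\ref{prop:coeff}. By \eqref{eq:Dp}, $D_p$ contains $0,1,p$. A number $d$ with $2\leq d\leq p-1$ belongs to $D_p$ if and only if $q_d\neq0$. So it suffices to decide, for each $d$ in this range, whether $q_d$ vanishes. I would split into three cases according to $d$.

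\emph{Case $d$ even with $2\leq d\leq p-3$.} Proposition~\ref{prop:coeff} gives $q_d=0$, so these $d$ are removed.

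\emph{Case $d=p-1$.} Proposition~\ref{prop:coeff} gives $q_{p-1}=(p-1)/2$. This is nonzero in $\Z_p$ because $p$ is odd and $1\leq (p-1)/2<p$, so $p-1\in D_p$.

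\emph{Case $d$ odd with $3\leq d\leq p-2$.} Put $m=p-d$, which is even with $2\leq m\leq p-3$. Proposition~\ref{prop:coeff} gives $q_d=-B_m/d$ in $\Z_p$. Since $d$ is invertible in $\Z_p$, we have $q_d=0$ exactly when $B_m\equiv0$ in $\Z_p$. Because $p-1\nmid m$, Lemma~\ref{lem:bernoulli} says $B_m$ is $p$-integral, so its image in $\Z_p$ is $\overline{a}\,\overline{b}^{\,-1}$ with $a/b$ in lowest terms and $p\nmid b$. This image is zero if and only if $p$ divides the numerator $a$. Hence the removed numbers in this case are exactly $p-m$ with $m$ even, $2\leq m\leq p-3$, and $p$ dividing the numerator of $B_m$.

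As $m$ runs over the even numbers in $[2,p-3]$, the number $p-m$ runs over exactly the odd numbers in $[3,p-2]$. So the three cases cover every $d$ with $2\leq d\leq p-1$, and together with $0,1,p$ they give the stated description of $D_p$.

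For the regular case, no $m$ of the last type exists. Then $D_p$ is obtained by removing only the even numbers $2\leq d\leq p-3$. What remains is $0$, $p-1$, and all odd $d$ with $1\leq d\leq p$, the latter including $1$ and $p$.

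There is no real obstacle here. The only point needing care is translating ``$B_m=0$ in $\Z_p$'' into ``$p$ divides the numerator of $B_m$'', which uses the $p$-integrality from Lemma~\ref{lem:bernoulli}.
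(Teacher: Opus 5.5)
Your proposal is correct and follows the same route as the paper, which reads the corollary off from \eqref{eq:Dp} and Proposition~\ref{prop:coeff} using that $d$ is invertible in $\Z_p$ for $1\leq d\leq p-1$. Your case split and your explicit use of the $p$-integrality of $B_m$ from Lemma~\ref{lem:bernoulli} spell out details that the paper leaves implicit.
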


For regular primes, Theorem \ref{thm:rank} and Corollary \ref{cor:Dp} give a
closed formula for the rank of $H_p^{t_1,t_2}$.

\begin{corollary}\label{cor:regular}
Let $p$ be a regular odd prime, and let $t_1,t_2\geq1$.  Then,
$$
\rank\bigl(H_p^{t_1,t_2}\bigr)=t_2-1+\binom{t_1+p}{p}
-\sum_{j=1}^{(p-3)/2}\binom{t_1+2j-1}{2j}.
$$
\end{corollary}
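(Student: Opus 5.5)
We will obtain this directly from Theorem \ref{thm:rank} by evaluating the sum $\sum_{d\in D_p}\binom{t_1+d-1}{d}$ when $D_p$ has the explicit form given by Corollary \ref{cor:Dp}. For a regular odd prime $p$, Corollary \ref{cor:Dp} says that $D_p$ is obtained from $\{0,1,\dots,p\}$ by removing exactly the even numbers $d=2j$ with $1\leq j\leq (p-3)/2$. No further Bernoulli-type removals occur, because regularity means that $p$ divides the numerator of no $B_m$ with $m$ even and $2\leq m\leq p-3$.

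The key identity is the hockey-stick formula
$$
\sum_{d=0}^{p}\binom{t_1+d-1}{d}=\binom{t_1+p}{p},
$$
which holds for $t_1\geq1$. It counts monomials of degree at most $p$ in $t_1$ variables. Alternatively, it follows by induction on $p$ from Pascal's rule $\binom{t_1+p-1}{p-1}+\binom{t_1+p-1}{p}=\binom{t_1+p}{p}$.

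Subtracting the removed terms $d=2j$, $1\leq j\leq(p-3)/2$, gives
$$
\sum_{d\in D_p}\binom{t_1+d-1}{d}=\binom{t_1+p}{p}-\sum_{j=1}^{(p-3)/2}\binom{t_1+2j-1}{2j}.
$$
Substituting this into the formula for $\rank(H_p^{t_1,t_2})$ in Theorem \ref{thm:rank} yields the statement.

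Only a small bookkeeping check remains, which is the one place to be careful. The elements $0$, $1$, $p-1$ and $p$ must never be among the removed $d=2j$. Indeed, $2j\leq p-3$ excludes $p-1$ and $p$, and $j\geq1$ excludes $0$ and $1$. The removed values are also pairwise distinct, so no term is subtracted twice. For $p=3$ the subtracted sum is empty, consistent with $D_3=\{0,1,2,3\}$.
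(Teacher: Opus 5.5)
Your proposal is correct and follows the paper's own proof. The paper likewise combines Theorem \ref{thm:rank}, the regular-prime description of $D_p$ from Corollary \ref{cor:Dp}, and the identity $\sum_{d=0}^{p}\binom{t_1+d-1}{d}=\binom{t_1+p}{p}$; your extra bookkeeping check just spells out what the paper leaves implicit.
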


\begin{proof}
We have $\sum_{d=0}^{p}\binom{t_1+d-1}{d}=\binom{t_1+p}{p}$.  Hence, the
statement follows from Theorem \ref{thm:rank} and Corollary \ref{cor:Dp}.
\end{proof}

For $t_1=1$ and every odd prime $p$, the rank of $H_p^{1,t_2}$ only depends on
the number of elements of $D_p$, which gives the following closed formula.

\begin{corollary}\label{cor:t1}
Let $p$ be an odd prime, and let $t_2\geq1$.  Then,
$$
\rank\bigl(H_p^{1,t_2}\bigr)=t_2+\frac{p+3}{2}-i_p .
$$
\end{corollary}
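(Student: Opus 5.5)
We specialize Theorem \ref{thm:rank} to $t_1=1$.  In that case $\binom{t_1+d-1}{d}=\binom{d}{d}=1$ for every $d\geq0$, so the formula becomes
$$
\rank\bigl(H_p^{1,t_2}\bigr)=t_2-1+|D_p| .
$$
The whole task is therefore to count the elements of $D_p$.  This count comes from Corollary \ref{cor:Dp}.

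By Corollary \ref{cor:Dp}, $D_p$ is obtained from the $p+1$ integers $\{0,1,\dots,p\}$ by removing two families.

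The first family consists of the even numbers $d$ with $2\leq d\leq p-3$.  There are $(p-3)/2$ of them.

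The second family consists of the numbers $p-m$, where $m$ is even, $2\leq m\leq p-3$, and $p$ divides the numerator of $B_m$.  By the definition of the index of irregularity there are exactly $i_p$ such $m$, and the map $m\mapsto p-m$ is injective.  For even $m$ in this range, $p-m$ is odd and satisfies $3\leq p-m\leq p-2$.  So these removed numbers are odd, while the first family is even, and the two families are disjoint.

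Hence
$$
|D_p|=(p+1)-\frac{p-3}{2}-i_p=\frac{p+5}{2}-i_p ,
$$
and substituting gives
$$
\rank\bigl(H_p^{1,t_2}\bigr)=t_2-1+\frac{p+5}{2}-i_p=t_2+\frac{p+3}{2}-i_p .
$$

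Two edge checks are needed.  For $p=3$ the range $2\leq m\leq p-3$ is empty, so $i_3=0$ and $D_3=\{0,1,2,3\}$; the formula then gives $t_2+3$, which matches $t_2-1+4$.  We should also confirm that $p-1$ and $p$ are never removed: $p-1$ is even but exceeds $p-3$, and the values $p-m$ are at most $p-2$.  This is consistent with $q_{p-1}=(p-1)/2\neq0$ from Proposition \ref{prop:coeff}.

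There is no real obstacle here.  The only points requiring care are the disjointness and injectivity of the two removed families, which rest on the parity argument above.
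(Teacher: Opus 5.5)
Your proof is correct and follows the paper's argument essentially verbatim: specialize Theorem \ref{thm:rank} to $t_1=1$ to get $t_2-1+|D_p|$, then use Corollary \ref{cor:Dp}, noting that the removed numbers $p-m$ are odd and pairwise distinct, to obtain $|D_p|=(p+5)/2-i_p$. Your extra checks (the $p=3$ case, and that $p-1$ and $p$ are never removed) are sound and only make explicit what the paper leaves implicit.
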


\begin{proof}
For $t_1=1$, every binomial coefficient in Theorem \ref{thm:rank} is equal to
$1$.  Hence, $\rank(H_p^{1,t_2})=t_2-1+|D_p|$.  The numbers $p-m$ of
Corollary \ref{cor:Dp} are odd and pairwise different.  Hence,
$|D_p|=(p+1)-(p-3)/2-i_p=(p+5)/2-i_p$, and the statement follows.
\end{proof}

The following consequence of Theorem \ref{thm:rank} is the key ingredient
of the proof of the main result.

\begin{corollary}\label{cor:rank}
Let $p$ be an odd prime, and let $t_1,t_2\geq1$.  Then,
$$
\rank\bigl(\bH_p^{t_1,t_2}\bigr)<\rank\bigl(H_p^{t_1,t_2}\bigr).
$$
\end{corollary}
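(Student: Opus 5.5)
By Theorem \ref{thm:rank}, the statement reduces to a comparison of binomial sums. The plan is to compute the difference of the two rank formulas and show it is positive. Using the two formulas of Theorem \ref{thm:rank},
$$
\rank\bigl(H_p^{t_1,t_2}\bigr)-\rank\bigl(\bH_p^{t_1,t_2}\bigr)
=t_2-1+\sum_{d\in D_p}\binom{t_1+d-1}{d}-t_2-1-\binom{t_1+p-1}{p}.
$$
Since $0,p\in D_p$ by \eqref{eq:Dp}, the terms with $d=p$ cancel. The term with $d=0$ equals $1$, so the difference becomes
$$
-1+\sum_{d\in D_p\setminus\{0,p\}}\binom{t_1+d-1}{d}.
$$

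The key step is to exhibit enough elements of $D_p\setminus\{0,p\}$. We always have $1\in D_p$, which contributes $\binom{t_1}{1}=t_1\geq1$, and this exactly compensates the $-1$. To obtain strict inequality we need one more element. Here we use Proposition \ref{prop:coeff}: for odd $p$ we have $q_{p-1}=(p-1)/2$, which is nonzero in $\Z_p$ since $p\nmid(p-1)/2$.

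\begin{itemize}
\item If $p\geq3$, then $p-1\geq2$, so $p-1$ lies in the range $2\leq d\leq p-1$ with $q_{p-1}\neq0$. Hence $p-1\in D_p$, and it is distinct from $1$ and $p$.
\item This element contributes $\binom{t_1+p-2}{p-1}\geq1$.
\end{itemize}

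Therefore the difference is at least $-1+t_1+1=t_1\geq1>0$, which proves the statement.

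The only place requiring care is confirming that $p-1\neq1$ and $q_{p-1}\neq0$, so that this element genuinely lies in $D_p\setminus\{0,1,p\}$. For $p=3$ we have $p-1=2$ and $q_2=1\neq0$, and the condition $2\leq d\leq p-1$ holds, so this case is consistent with the general argument. No other obstacle is expected, since everything else is direct substitution into Theorem \ref{thm:rank}.
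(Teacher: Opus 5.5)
Your proposal is correct and is essentially the paper's proof: subtract the two formulas of Theorem~\ref{thm:rank}, cancel the $d=p$ term, and use $0,1,p-1\in D_p$ to get a positive difference. Membership of $p-1$ comes from $q_{p-1}=(p-1)/2\neq0$ in Proposition~\ref{prop:coeff}. The resulting bound $-1+t_1+\binom{t_1+p-2}{p-1}\geq t_1>0$ is the same as the paper's.
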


\begin{proof}
By Theorem \ref{thm:rank}, $\rank(H_p^{t_1,t_2})-\rank(\bH_p^{t_1,t_2})$ is
the sum of $\binom{t_1+d-1}{d}$ over $d\in D_p\setminus\{p\}$, minus $2$.  By
\eqref{eq:Dp} and Proposition \ref{prop:coeff}, the set $D_p$ contains $0$,
$1$ and $p-1$.  Hence, this difference is at least
$1+t_1+\binom{t_1+p-2}{p-1}-2$, which is positive.
\end{proof}

\begin{example}\label{ex:rank}
We continue Example \ref{ex:blocks}, with $p=3$ and $(t_1,t_2)=(2,1)$.  We have
$\theta(0)=\theta(1)=0$ and $\theta(2)=(8-2)/3=2$.  Thus, $q(0)=q(1)=0$ and
$q(2)=2$.  Hence, $q(x)=2x+x^2$, and $D_3=\{0,1,2,3\}$.  By Lemma \ref{lem:F},
the space $\cF$ has the basis formed by the seven reduced monomials $\xi_1$,
$\xi_2$, $\xi_1^2$, $\xi_1\xi_2$, $\xi_2^2$, $\xi_1^2\xi_2$ and $\xi_1\xi_2^2$.
Hence, by the proof of Theorem \ref{thm:rank}, we have $\dim W_X=2+7=9$.  On
the set $\overline{\Omega}$ of the vectors $(1,\xi_2)$, the restrictions of
these monomials span the space with basis $1$, $\xi_2$ and $\xi_2^2$.  Hence,
$\dim W_\Omega=2+3=5$.  By Lemma \ref{lem:W}, the ranks of $H_3^{2,1}$ and
$\bH_3^{2,1}$ are $1+9=10$ and $1+5=6$, respectively.  This agrees with Theorem
\ref{thm:rank} and with the values of \cite{ZpZp2Classification}.

For $p=5$, we have $q(x)=3x^3+2x^4$ and $D_5=\{0,1,3,4,5\}$.  This agrees with
Proposition \ref{prop:coeff}, since $q_2=0$, $q_4=2=(5-1)/2$ and $q_3=3=-B_2/3$
in $\Z_5$, where $B_2=1/6$.  The smallest irregular prime is $37$.  It divides
the numerator of $B_{32}$, and of no other $B_m$ with $m$ even and
$2\leq m\leq34$ \cite{Washington}.  Hence, $i_{37}=1$, the set $D_{37}$ does
not contain $37-32=5$, and $\rank(H_{37}^{1,t_2})=t_2+19$ by Corollary
\ref{cor:t1}.
\end{example}

Table \ref{tab:invariants} collects the invariants of $H_p^{t_1,t_2}$ and
$\bH_p^{t_1,t_2}$ considered so far, for odd $p$ and $t_1\geq2$.  They are
given by \eqref{eq:length} and \eqref{eq:t}, by Theorems \ref{thm:kerZps},
\ref{thm:kerZpZp2}, \ref{thm:derivedZps}, \ref{thm:blocks}, \ref{thm:derived}
and \ref{thm:rank}, and by Corollary \ref{cor:profile} and the discussion
after Theorem \ref{thm:derivedZps}.  All these invariants coincide except the
rank, and the ranks are different by Corollary \ref{cor:rank}.

\begin{table}[!ht]
\small
\renewcommand{\arraystretch}{1.3}
\centering
\caption{The invariants of $H_p^{t_1,t_2}$ and $\bH_p^{t_1,t_2}$, for odd $p$
and $t_1\geq2$.}
\label{tab:invariants}
\vspace{1.5mm}
\begin{tabular}{|l|c|c|}
\hline
& $H_p^{t_1,t_2}$ & $\bH_p^{t_1,t_2}$\\
\hline\hline
length & $p^{2t_1+t_2-1}$ & $p^{2t_1+t_2-1}$\\ \hline
dimension of the kernel & $t_1+t_2$ & $t_1+t_2$\\ \hline
number of kernel blocks & $p^{t_1+t_2-1}$ & $p^{t_1+t_2-1}$\\ \hline
size of the kernel blocks & $p^{t_1}$ & $p^{t_1}$\\ \hline
derived code & $\bH_p^{t_1+1}$ & $\bH_p^{t_1+1}$\\ \hline
kernel-block rank profile & $t_1+1$, repeated $p^{t_1+t_2-1}$ times
& $t_1+1$, repeated $p^{t_1+t_2-1}$ times\\ \hline
rank & $t_2-1+\sum_{d\in D_p}\binom{t_1+d-1}{d}$
& $t_2+1+\binom{t_1+p-1}{p}$\\ \hline
\end{tabular}
\end{table}

Table \ref{tab:ranks} gives the ranks of $H_p^{t_1,t_2}$ and
$\bH_p^{t_1,t_2}$ for $p\in\{3,5\}$, $t_1\geq2$ and $t\leq8$, by Theorem
\ref{thm:rank}.  The difference between the two ranks grows quickly with
$t_1$.

\begin{table}[!ht]
\small
\renewcommand{\arraystretch}{1.15}
\centering
\caption{The ranks of $H_p^{t_1,t_2}$ and $\bH_p^{t_1,t_2}$ for
$p\in\{3,5\}$.}
\label{tab:ranks}
\vspace{1.5mm}
\begin{tabular}{|c|c||c|c||c|c|}
\hline
\multirow{2}{*}{$t$} & \multirow{2}{*}{$(t_1,t_2)$}
& \multicolumn{2}{c||}{$p=3$} & \multicolumn{2}{c|}{$p=5$}\\
\cline{3-6}
& & $H_3^{t_1,t_2}$ & $\bH_3^{t_1,t_2}$ & $H_5^{t_1,t_2}$ & $\bH_5^{t_1,t_2}$\\
\hline\hline
$4$ & $(2,1)$ & $10$ & $6$ & $18$ & $8$\\ \hline
$5$ & $(2,2)$ & $11$ & $7$ & $19$ & $9$\\ \hline
\multirow{2}{*}{$6$} & $(2,3)$ & $12$ & $8$ & $20$ & $10$\\ \cline{2-6}
& $(3,1)$ & $20$ & $12$ & $50$ & $23$\\ \hline
\multirow{2}{*}{$7$} & $(2,4)$ & $13$ & $9$ & $21$ & $11$\\ \cline{2-6}
& $(3,2)$ & $21$ & $13$ & $51$ & $24$\\ \hline
\multirow{3}{*}{$8$} & $(2,5)$ & $14$ & $10$ & $22$ & $12$\\ \cline{2-6}
& $(3,3)$ & $22$ & $14$ & $52$ & $25$\\ \cline{2-6}
& $(4,1)$ & $35$ & $22$ & $116$ & $58$\\ \hline
\end{tabular}
\end{table}

\section{The main result for odd primes}\label{sec:main}

In this section, $p$ is odd.  We prove that the codes $H_p^{t_1,t_2}$ are not
equivalent to any $\Z_{p^s}$-linear GH code, and we count the equivalence
classes of the union of both families.  The derived code reduces the
comparison to the pairs of codes $H_p^{t_1,t_2}$ and $\bH_p^{t_1,t_2}$, and the
rank completes the proof.

\begin{theorem}\label{thm:main}
Let $p$ be an odd prime, and let $t_1,t_2\geq1$.  Then, the
$\Z_p\Z_{p^2}$-linear GH code $H_p^{t_1,t_2}$ is not equivalent to any
$\Z_{p^s}$-linear GH code of the same length, for any $s\geq1$.
\end{theorem}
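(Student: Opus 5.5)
We argue by contradiction.  Assume that $H=H_p^{t_1,t_2}$ is equivalent to some $\Z_{p^s}$-linear GH code of length $p^t$, where $t=2t_1+t_2-1$.  Since $p$ is odd, $H$ is nonlinear by Theorem \ref{thm:linZpZp2}, and hence Proposition \ref{prop:candidates} applies.  It gives a nonlinear reduced code $\bH'=\bH_p^{u_1,\dots,u_r}$ of length $p^t$ which is equivalent to $H$.  The derived code of $\bH'$ is linear, and its length is $p^{t_1}$.  The plan is to show that these conditions force $\bH'=\bH_p^{t_1,t_2}$.  We then contradict this with the ranks.

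First we pin down the parameters of $\bH'$.  By Theorem \ref{thm:derivedZps}, for odd $p$ the derived code of $\bH'$ is $\bH_p^{u_1,\dots,u_{r-2},u_{r-1}+1}$.  As noted after that theorem, this code is linear only when $r=2$.  Hence $r=2$, and $\bH'=\bH_p^{u_1,u_2}$ with $u_1\geq2$.  Its derived code is $\bH_p^{u_1+1}$, of length $p^{u_1}$, so $u_1=t_1$.  In particular $t_1\geq2$, and if $t_1=1$ we already have a contradiction.  The lengths agree, so \eqref{eq:length} and \eqref{eq:t} give $2u_1+u_2-1=t=2t_1+t_2-1$, hence $u_2=t_2$.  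Alternatively, the kernel dimensions from Theorems \ref{thm:kerZps} and \ref{thm:kerZpZp2} give $u_1+u_2=t_1+t_2$ directly.  Thus $\bH'=\bH_p^{t_1,t_2}$.

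Finally, $H$ and $\bH_p^{t_1,t_2}$ both contain $\zero$, so Lemma \ref{lem:invariance} says they have the same rank if they are equivalent.  This contradicts Corollary \ref{cor:rank}, which gives $\rank(\bH_p^{t_1,t_2})<\rank(H_p^{t_1,t_2})$.  The statement covers every $s\geq1$: for $s=1$ the code is linear and $H$ is not, and the case $s\geq2$ is absorbed by Proposition \ref{prop:candidates}.

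The hard part, the rank computation, is already done in Theorem \ref{thm:rank}.  The remaining point needing care is the parameter identification.  We must check that $u_1\geq2$ and $r=2$ are compatible with linearity of the derived code as stated; this is precisely the dichotomy recorded after Theorem \ref{thm:derivedZps}.
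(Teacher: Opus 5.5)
Your proposal is correct and follows the paper's own proof essentially step for step. You use Proposition \ref{prop:candidates} and Theorem \ref{thm:derivedZps} to force $r=2$, $u_1=t_1$ and $u_2=t_2$. You then derive the contradiction from Lemma \ref{lem:invariance} and Corollary \ref{cor:rank}. Your alternative derivation of $u_2=t_2$ from the kernel dimensions is a harmless extra check.
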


\begin{proof}
Let $H=H_p^{t_1,t_2}$, and let $p^t$ be its length, where $t=2t_1+t_2-1$.
By Theorem \ref{thm:linZpZp2}, the code $H$ is nonlinear.  Assume that $H$ is
equivalent to a $\Z_{p^s}$-linear GH code of length $p^t$.  By Proposition
\ref{prop:candidates}, the code $H$ is equivalent to a nonlinear reduced code
$\bH_p^{u_1,\dots,u_r}$ of length $p^t$ whose derived code is linear and has
length $p^{t_1}$.  By Theorem \ref{thm:derivedZps} and Lemma
\ref{lem:invariance}, it follows that the code
$\bH_p^{u_1,\dots,u_{r-2},u_{r-1}+1}$ is linear, and that its length is
$p^{t_1}$.  By the discussion after Theorem \ref{thm:derivedZps}, we obtain
$r=2$.  Then, the code $\bH_p^{u_1+1}$ has length $p^{u_1}$, and hence
$u_1=t_1$.  Since $\bH_p^{u_1,u_2}$ has length $p^t$, \eqref{eq:length} gives
$2u_1+u_2=t+1=2t_1+t_2$, and hence $u_2=t_2$.

Therefore, $t_1=u_1\geq2$, and $H_p^{t_1,t_2}$ is equivalent to
$\bH_p^{t_1,t_2}$.  By Lemma \ref{lem:invariance}, these two codes have the
same rank.  This contradicts Corollary \ref{cor:rank}.
\end{proof}

We now count the equivalence classes.

\begin{corollary}\label{cor:count}
Let $p$ be an odd prime, and let $t\geq2$.  Then, the codes $H_p^{t_1,t_2}$
of length $p^t$, together with the $\Z_{p^s}$-linear GH codes of length $p^t$
with $s\geq1$, form exactly
$$
\cP(t+1)-\cP(t)+\left\lfloor\frac{t}{2}\right\rfloor
$$
equivalence classes.  A complete system of representatives is formed by the
codes $H_p^{t_1,t+1-2t_1}$ with $1\leq t_1\leq\lfloor t/2\rfloor$, the
linear code $\bH_p^{t+1}$, and the reduced codes $\bH_p^{u_1,\dots,u_r}$ of
length $p^t$.
\end{corollary}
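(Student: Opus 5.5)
The plan is to combine three facts: the classification of the $\Z_{p^s}$-linear GH codes in Theorem \ref{thm:classZps}, the classification of the codes $H_p^{t_1,t_2}$ in Corollary \ref{cor:classZpZp2}, and the separation between the two families given by Theorem \ref{thm:main}. The union of the two families is then a disjoint union of equivalence classes, and the count is the sum of the two counts.

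First, I would invoke Corollary \ref{cor:classZpZp2}. It gives that the codes $H_p^{t_1,t_2}$ of length $p^t$ are exactly the $\lfloor t/2\rfloor$ codes $H_p^{t_1,t+1-2t_1}$ with $1\leq t_1\leq\lfloor t/2\rfloor$, and that they are pairwise nonequivalent. This holds for every $t\geq2$.

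Second, I would invoke Theorem \ref{thm:classZps}. It states that $\bH_p^{t+1}$ together with the nonlinear reduced codes of length $p^t$ is a complete system of representatives, with $\cP(t+1)-\cP(t)$ classes, for odd $p$. Since $p$ is odd, Theorem \ref{thm:linZps} shows that every reduced code is nonlinear, so the phrase ``nonlinear reduced codes'' is the same as ``reduced codes'', which matches the statement.

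The main obstacle is the case $t=2$, which Theorem \ref{thm:classZps} does not cover. For $t=2$, equation \eqref{eq:length} gives $\sum(r-i+1)u_i=3$ with $u_1\geq2$ for a reduced code, so $r\cdot u_1\leq3$, which is impossible for $r\geq2$. Hence there are no reduced codes of length $p^2$. Proposition \ref{prop:reduction} then shows that every $\Z_{p^s}$-linear GH code of length $p^2$ with $s\geq2$ is linear and permutation equivalent to $\bH_p^{3}$, and the codes with $s=1$ are $\bH_p^3$ itself. So there is exactly one class, and $\cP(3)-\cP(2)=3-2=1$, as required.

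Finally, Theorem \ref{thm:main} shows that no $H_p^{t_1,t_2}$ is equivalent to any $\Z_{p^s}$-linear GH code of the same length. The two systems of representatives therefore combine into a single system, and the total number of classes is $\cP(t+1)-\cP(t)+\lfloor t/2\rfloor$.
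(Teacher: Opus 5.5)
Your proposal is correct and follows the paper's proof essentially step by step. It combines Corollary \ref{cor:classZpZp2} with Theorem \ref{thm:classZps}, treats $t=2$ separately via $t+1\geq ru_1\geq4$ and Proposition \ref{prop:reduction}, and obtains disjointness from Theorem \ref{thm:main}. Your extra remark that every reduced code is nonlinear for odd $p$, by Theorem \ref{thm:linZps}, is also correct, and it makes explicit why the representatives can simply be called the reduced codes.
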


\begin{proof}
By Corollary \ref{cor:classZpZp2}, the codes $H_p^{t_1,t_2}$ of length $p^t$
form $\lfloor t/2\rfloor$ equivalence classes, represented by the codes
$H_p^{t_1,t+1-2t_1}$ with $1\leq t_1\leq\lfloor t/2\rfloor$.  If $t\geq3$,
then, by Theorem \ref{thm:classZps}, the $\Z_{p^s}$-linear GH codes of length
$p^t$ form $\cP(t+1)-\cP(t)$ equivalence classes, represented by $\bH_p^{t+1}$
and the reduced codes of length $p^t$.  If $t=2$, then there is no reduced code
of length $p^2$, since every reduced code $\bH_p^{u_1,\dots,u_r}$ of length
$p^t$ satisfies $t+1\geq ru_1\geq4$ by \eqref{eq:length}.  Hence, by
Proposition \ref{prop:reduction}, every $\Z_{p^s}$-linear GH code of length
$p^2$ with $s\geq2$ is linear and permutation equivalent to $\bH_p^{3}$.  Thus,
the $\Z_{p^s}$-linear GH codes of length $p^2$ form one equivalence class, and
$\cP(3)-\cP(2)=3-2=1$.  In both cases, Theorem \ref{thm:main} shows that no
code $H_p^{t_1,t_2}$ is equivalent to a $\Z_{p^s}$-linear GH code.  Therefore,
the two families of classes are disjoint, and the statement follows.
\end{proof}

For $3\leq t\leq14$, these numbers are given in Table \ref{tab:count} of
Section \ref{sec:binary}, together with the corresponding numbers for $p=2$.

\begin{example}\label{ex:t6}
Let $p$ be odd, and let $t=6$.  The codes $H_p^{t_1,t_2}$ of length $p^6$ are
$H_p^{1,5}$, $H_p^{2,3}$ and $H_p^{3,1}$, and their kernels have dimension
$6$, $5$ and $4$, respectively.  By Theorem \ref{thm:classZps}, the
$\Z_{p^s}$-linear GH codes of length $p^6$ are represented by the linear code
$\bH_p^{7}$ and by the reduced codes $\bH_p^{2,3}$, $\bH_p^{3,1}$ and
$\bH_p^{2,0,1}$.  By Theorem \ref{thm:kerZps}, the kernels of these reduced
codes have dimension $5$, $4$ and $3$, respectively.  The linear code
$\bH_p^{7}$ has kernel dimension $7$.  Hence, by Lemma \ref{lem:invariance},
the only possible equivalences are between $H_p^{2,3}$ and $\bH_p^{2,3}$, and
between $H_p^{3,1}$ and $\bH_p^{3,1}$.  By Theorem \ref{thm:derived}, the
derived codes of $H_p^{2,3}$ and $H_p^{3,1}$ are the linear codes $\bH_p^{3}$
and $\bH_p^{4}$, respectively.  By Theorem \ref{thm:derivedZps}, these are
also the derived codes of $\bH_p^{2,3}$ and $\bH_p^{3,1}$, respectively.
Hence, the kernel blocks do not separate these pairs, and we use Corollary
\ref{cor:rank}.  For instance, for $p=3$, the ranks are $12$ and $8$ for the
first pair, and $20$ and $12$ for the second one, by Table \ref{tab:ranks}.
Therefore, there are $3+4=7$ equivalence classes, in accordance with Table
\ref{tab:count}.  For an arbitrary odd prime $p$, the results of
\cite{ZpZp2Classification} cover the codes $H_p^{1,5}$ and $H_p^{2,3}$, but
not the code $H_p^{3,1}$.
\end{example}

\section{The binary case}\label{sec:binary}

In this section, $p=2$.  By Theorem \ref{thm:linZpZp2}, a code
$H_2^{t_1,t_2}$ is nonlinear if and only if $t_1\geq2$.  Unlike the case of
odd primes, many of these codes are equivalent to $\Z_4$-linear Hadamard
codes.  The following result is a consequence of \cite{KV2015} and of the
classification of the $\Z_2\Z_4$-linear Hadamard codes with $\alpha_1\neq0$
given in \cite{PRV06}.  We include its short proof, since these codes are
described in a different way in \cite{KV2015}.

\begin{theorem}[\cite{KV2015,PRV06}]\label{thm:KV}
Let $t_1\geq1$ and $t_2\geq2$.  Then, $H_2^{t_1,t_2}$ is equivalent to the
$\Z_4$-linear Hadamard code $\bH_2^{t_1+1,t_2-2}$.
\end{theorem}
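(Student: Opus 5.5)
The plan is to reduce the statement to the known classification results rather than to build an explicit monomial map. We rely on \cite{KV2015}: every $\Z_4$-linear Hadamard code is equivalent to a $\Z_2\Z_4$-linear Hadamard code with $\alpha_1\neq0$. By \cite{PRV06} and Corollary \ref{cor:classZpZp2}, the $\Z_2\Z_4$-linear Hadamard codes with $\alpha_1\neq0$ of length $2^t$ are, up to equivalence, exactly the $\lfloor t/2\rfloor$ codes $H_2^{t_1',t+1-2t_1'}$. They are pairwise separated by the dimension of the kernel, which is $t+1-t_1'$ when $t_1'\geq2$ and $t+1$ when $t_1'=1$ (the linear case). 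So it suffices to match lengths and kernel dimensions.

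\textbf{Matching length and kernel dimension.} Let $\bH=\bH_2^{t_1+1,t_2-2}$ with $t_2\geq2$.
\begin{enumerate}
\item By \eqref{eq:length}, its length is $2^t$ with $t+1=2(t_1+1)+t_2-2=2t_1+t_2$. This agrees with the length of $H_2^{t_1,t_2}$ given by \eqref{eq:t}.
\item Its kernel dimension, by Theorem \ref{thm:linZps} and Theorem \ref{thm:kerZps} with $p=2$, is as follows. If $t_1+1\geq3$, that is $t_1\geq2$, then $\bH$ is a nonlinear reduced code with $\kernel(\bH)=(t_1+1)+(t_2-2)+1=t_1+t_2$. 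If $t_1=1$, then $\bH=\bH_2^{2,t_2-2}$ is linear, with kernel dimension $t+1$.
\item Compare with $H_2^{t_1,t_2}$. By Theorem \ref{thm:kerZpZp2} and Theorem \ref{thm:linZpZp2}, its kernel dimension is $t_1+t_2=t+1-t_1$ when $t_1\geq2$. When $t_1=1$ it is linear, with kernel dimension $t+1$. In both cases the two values coincide.
\end{enumerate}

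\textbf{Conclusion.} By \cite{KV2015}, $\bH$ is equivalent to some $H_2^{t_1',t+1-2t_1'}$. Equivalent codes containing $\zero$ have equal kernel dimension by Lemma \ref{lem:invariance}, and these kernel dimensions are pairwise distinct among the codes $H_2^{t_1',t+1-2t_1'}$. Hence $t_1'=t_1$, and $\bH$ is equivalent to $H_2^{t_1,t_2}$. In the linear case $t_1=1$, equivalence also follows directly: both codes are linear Hadamard codes of length $2^t$ and dimension $t+1$, and both are equivalent to $\bH_2^{t+1}$ (by Proposition \ref{prop:reduction} for $\bH$).

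\textbf{Main obstacle.} The delicate step is invoking \cite{KV2015} correctly, because its codes are described differently. We must check that its statement covers every $\Z_4$-linear Hadamard code of every type, including $\bH_2^{t_1+1,t_2-2}$. We must also check that its notion of equivalence is at most as strong as ours, which holds since permutation plus translation is a special case. Once this is secured, the kernel-dimension bookkeeping above is routine.
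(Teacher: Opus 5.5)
Your proposal is correct and follows the paper's proof essentially step for step: you match the lengths and kernel dimensions (splitting into $t_1\geq2$ and $t_1=1$), invoke \cite{KV2015} together with the classification of \cite{PRV06} and Corollary \ref{cor:classZpZp2}, and conclude from the pairwise distinct kernel dimensions of the codes $H_2^{t'_1,t+1-2t'_1}$. Two minor remarks: the classification of \cite{PRV06} requires $t\geq3$, which holds automatically since $t=2t_1+t_2-1\geq3$. Your separate argument for $t_1=1$ is not needed, since the kernel-dimension argument already covers that case, and as written it uses the standard but uncited fact that every linear binary Hadamard code is equivalent to $\bH_2^{t+1}$.
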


\begin{proof}
By \eqref{eq:length} and \eqref{eq:t}, the codes $\bH_2^{t_1+1,t_2-2}$ and
$H_2^{t_1,t_2}$ have the same length $2^t$, with $t=2t_1+t_2-1\geq3$.  If
$t_1\geq2$, then both codes are nonlinear, and their kernels have dimension
$t_1+t_2$, by Theorems \ref{thm:linZps}, \ref{thm:kerZps}, \ref{thm:linZpZp2}
and \ref{thm:kerZpZp2}.  If $t_1=1$, then both codes are linear, by Theorems
\ref{thm:linZps} and \ref{thm:linZpZp2}, and hence their kernels have dimension
$t+1$.  By \cite{KV2015}, the code $\bH_2^{t_1+1,t_2-2}$ is equivalent to a
$\Z_2\Z_4$-linear Hadamard code $C$ with $\alpha_1\neq0$.  There are exactly
$\lfloor t/2\rfloor$ such codes of length $2^t$, up to equivalence
\cite{PRV06}.  Hence, by Corollary \ref{cor:classZpZp2}, the code $C$ is
equivalent to a code $H_2^{t'_1,t'_2}$ of length $2^t$.  By Lemma
\ref{lem:invariance}, the kernel of $H_2^{t'_1,t'_2}$ has the same dimension as
the kernel of $\bH_2^{t_1+1,t_2-2}$, and hence as the kernel of
$H_2^{t_1,t_2}$.  By the discussion before Corollary \ref{cor:classZpZp2}, the
kernels of the codes $H_2^{t'_1,t'_2}$ of length $2^t$ have pairwise different
dimensions.  Hence, $(t'_1,t'_2)=(t_1,t_2)$, and the statement follows.
\end{proof}

The codes $H_2^{t_1,1}$, of length $2^{2t_1}$, are not covered by Theorem
\ref{thm:KV}.  In order to study them, we also need the rank of the
$\Z_8$-linear Hadamard codes.

\begin{theorem}[\cite{fernandez2019mathbb}]\label{thm:rank-z8}
Let $u_1\geq1$ and $u_2,u_3\geq0$.  Then,
\begin{equation}\label{eq:rank-z8}
\rank\bigl(\bH_2^{u_1,u_2,u_3}\bigr)=\frac{u_1^4-2u_1^3+35u_1^2+14u_1}{24}
+\frac{u_2}{2}\bigl(u_1^2+u_1+u_2+1\bigr)+u_3+1 .
\end{equation}
\end{theorem}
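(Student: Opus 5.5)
We would prove \eqref{eq:rank-z8} with the method used for Theorem \ref{thm:rank}. Each coordinate of $\bH_2^{u_1,u_2,u_3}$ is viewed as a function of the coefficients that define the codewords, and Lemma \ref{lem:evaluation} gives the rank as the dimension of the span of these coordinate functions.

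\emph{Step 1: coordinate functions.} By \eqref{eq:columns}, the columns of $\bA_2^{u_1,u_2,u_3}$ are the vectors $(1,z,2c,4e)^T$ with $z\in\Z_8^{u_1-1}$, $c\in\Z_4^{u_2}$ and $e\in\Z_2^{u_3}$. A codeword is $\Phi_3(\huu)$, where $\huu$ is determined by $\lambda\in\Z_8^{u_1}$, $\mu\in\Z_4^{u_2}$ and $\eta\in\Z_2^{u_3}$. Its entry at the column $(1,z,2c,4e)^T$ is
$$
\omega=\langle\lambda,(1,z)\rangle+2\langle\mu,c\rangle+4\langle\eta,e\rangle\in\Z_8 .
$$
By \eqref{eq:gray}, the coordinate $h=(h_0,h_1)$ takes the value $\omega_2+h_0\omega_0+h_1\omega_1$. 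As in the proof of Lemma \ref{lem:W}, the coordinate functions span the sum of three spaces. The first is spanned by the functions $\eta_j$, which contributes $u_3$ and accounts for the term $u_3$. The second is spanned by the digit-$0$ and digit-$1$ functions $\omega_0,\omega_1$, taken over all columns. The third is spanned by the digit-$2$ functions $\omega_2$. The summands are separated by the substitutions $\lambda=\zero$, $\mu=\zero$, as in Lemma \ref{lem:W}.

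\emph{Step 2: polynomial expressions.} Write $\lambda_k=\lambda'_k+2\lambda''_k+4\lambda'''_k$ and $\mu_j=\mu'_j+2\mu''_j$ in binary digits. The digits of an integer sum of bits are the elementary symmetric functions of those bits: the digit of weight $2^i$ of a sum of bits $b_1,\dots,b_m$ is $e_{2^i}(b_1,\dots,b_m)$ modulo $2$. Splitting $\omega$ into its bit contributions, we get the following binary analogue of Lemmas \ref{lem:digit} and \ref{lem:gxi}, with $\theta$ replaced by these carries.
\begin{itemize}
\item $\omega_1$ is $\langle\lambda'',(1,\overline z)\rangle+\langle\mu',\overline c\rangle$ plus the degree-$2$ terms $\lambda'_k\lambda'_l$ with both indices in the support of $(1,\overline z)$, together with the digit-$1$ contributions of the $z_k$.
\item $\omega_2$ is a reduced polynomial of degree at most $4$ in $\lambda'$, of degree at most $2$ in $\lambda'$ jointly with $\lambda''$ and $\mu'$, and linear in $\lambda'''$, $\mu''$, $\eta$. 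Its coefficients are polynomial functions of the bits of $(z,c,e)$.
\end{itemize}

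\emph{Step 3: counting.} As in the proof of Theorem \ref{thm:rank}, we pass to the transposed problem. For each reduced monomial in the bits of $(\lambda,\mu)$, its coefficient is regarded as a function of the column index restricted to the set with first coordinate $1$, in the same way as $\overline{\Omega}$. The rank is then the number of monomials in $(\lambda,\mu)$ whose coefficient functions are linearly independent on this set. The quartic polynomial in $u_1$, with $u_1^4/24$ as leading term, should come from the symmetric functions of degree up to $4$ in the $\lambda'$ bits, with the lower-order terms from $\lambda''$ and $\lambda'''$. The term $\tfrac{u_2}{2}(u_1^2+u_1+u_2+1)$ should come from the monomials $\lambda'_k\mu'_j$, $\lambda'_k\lambda'_l\mu'_j$, $\mu'_j\mu'_{j'}$, $\mu'_j$ and $\mu''_j$.

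The main obstacle is this last step. Restricting to the first coordinate equal to $1$ identifies monomials; for example, the coefficient of $\lambda'_1$ acts as the constant. These collisions must be tracked exactly. I would first check the formula on small cases, such as $u_1=1$ giving rank $3+u_3$ plus the $u_2$ term, and $(u_1,u_2,u_3)=(2,0,0)$. Then I would set up an explicit basis of monomials indexed by subsets of $\{1,\dots,u_1\}$ of size at most $4$, plus mixed terms, verify the independence of their coefficient functions by a triangularity argument on leading monomials in the bits of $z$, and finally sum the counts.
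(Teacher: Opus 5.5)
\paragraph{There is a genuine gap: Step~3 is missing.}
The paper does not prove this statement. It quotes it from \cite{fernandez2019mathbb}, and Section~\ref{sec:conclusions} lists extending the digit-$1$ polynomial method to higher digits as further research. Your plan is that extension, and it can be completed. However, your proposal stops where the proof starts: Step~3, which carries the entire content of \eqref{eq:rank-z8}, is only announced.

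The framework you set up for Step~3 would also not work as stated. There is no analogue of Lemma~\ref{lem:Wxi} here. Write $z_k=z'_k+2z''_k+4z'''_k$ and $c_j=c'_j+2c''_j$. Then the coefficients depend on $z''_k$, $z'''_k$ and $c''_j$, not only on residues. Moreover, one monomial occurs in several digits with different coefficients: for $k\geq2$, $\lambda'_k$ has coefficients $z'_k$, $z''_k$, $z'''_k$ in $\omega_0,\omega_1,\omega_2$, and $\mu'_j$ has $0$, $c'_j$, $c''_j$. So the rank is the rank of the coefficient triples $(\gamma_0,\gamma_1,\gamma_2)$, viewed as functions on three copies of the set of columns. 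It is not a count of monomials with independent coefficient functions on a set like $\overline{\Omega}$.

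For the same reason, the digit-$0/1$ and digit-$2$ spans in Step~1 overlap ($\lambda'_k$ with $k\geq2$ lies in both). Only the $\eta$-summand splits off as a direct summand.

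Your accounting of the $u_2$-term is also off. The monomial $\lambda'_1\mu'_j$ does not occur, while $\lambda''_k\mu'_j$ does. The pairs $\mu''_j,\lambda''_1\mu'_j$ and $\lambda''_k\mu'_j,\lambda'_1\lambda'_k\mu'_j$ have equal coefficients. Read over all $k$, your list gives $\frac{u_2}{2}(u_1^2+u_1+u_2+3)$ instead of the target.

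\paragraph{How the count goes.}
Put $z'_1=1$, $z''_1=z'''_1=0$ and $m=u_1-1$. Let $A$, $B$, $C$ be the bits of weight $1$, $2$, $4$ in $\omega$, and let $E_j$ be the elementary symmetric function of degree $j$. Applying your carry rule twice gives $\omega_2=E_4(A)+E_2(A)E_1(B)+E_2(B)+E_1(C)$.

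Take indices $k,l\in\{2,\dots,u_1\}$, with $k<l$ in products. The only nonzero digit-$0/1$ parts are:
\begin{itemize}
\item $(1,0)$ for $\lambda'_1$ and $(z'_k,z''_k)$ for $\lambda'_k$;
\item $(0,1)$ for $\lambda''_1$;
\item $(0,z'_k)$ for both $\lambda''_k$ and $\lambda'_1\lambda'_k$;
\item $(0,z'_kz'_l)$ for $\lambda'_k\lambda'_l$;
\item $(0,c'_j)$ for $\mu'_j$.
\end{itemize}
Apart from the one coincidence, these are independent, and they contribute $2+2m+\binom m2+u_2$.

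The rest of the rank is the dimension of the span of the digit-$2$ parts of the combinations with vanishing digit-$0/1$ part. These combinations are spanned by the remaining monomials and by $\lambda''_k+\lambda'_1\lambda'_k$. That span has a basis formed by:
\begin{itemize}
\item $\prod_{k\in S}z'_k$ with $|S|\leq4$;
\item $z''_k$ and $z'_lz''_k$ (all $k,l$, including $k=l$);
\item the $\binom m3$ sums $z'_kz'_lz''_r+z'_kz'_rz''_l+z'_lz'_rz''_k$;
\item $c'_j$, $z'_kc'_j$, $z''_kc'_j$, $z'_kz'_lc'_j$, $c'_jc'_{j'}$, and the coordinates $e_i$ of $e$.
\end{itemize}
The total is
\[
2+3m+m^2+\binom m2+\binom m3+\sum_{j=0}^{4}\binom mj+u_2\Bigl(2+2m+\binom m2\Bigr)+\binom{u_2}{2}+u_3 ,
\]
which equals \eqref{eq:rank-z8}. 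This explicit computation, with its independence checks, is what your proposal still has to supply.
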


The following theorem is the main result of this section.

\begin{theorem}\label{thm:binary}
Let $t_1\geq2$ and $t_2\geq1$.  Then, the $\Z_2\Z_4$-linear Hadamard code
$H_2^{t_1,t_2}$ is equivalent to a $\Z_{2^s}$-linear Hadamard code of the
same length, for some $s\geq1$, if and only if $t_2\geq2$.
\end{theorem}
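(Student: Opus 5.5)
The \emph{if} part follows at once from Theorem \ref{thm:KV}. If $t_2\geq2$, then $H_2^{t_1,t_2}$ is equivalent to the $\Z_4$-linear Hadamard code $\bH_2^{t_1+1,t_2-2}$. By \eqref{eq:length} and \eqref{eq:t}, this code has the same length $2^t$.

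For the \emph{only if} part, I take $t_2=1$, so that $t=2t_1$, and argue by contradiction. Assume $H=H_2^{t_1,1}$ is equivalent to a $\Z_{2^s}$-linear Hadamard code. The code $H$ is nonlinear by Theorem \ref{thm:linZpZp2}, since $t_1\geq2$. By Proposition \ref{prop:candidates}, $H$ is then equivalent to a nonlinear reduced code $\bH_2^{u_1,\dots,u_r}$ of length $2^{2t_1}$ whose derived code is linear of length $2^{t_1}$. By Theorem \ref{thm:derivedZps}, this derived code is $\bH_2^{u_1,\dots,u_{r-1}}$. By the discussion after that theorem, it is linear only if $r=2$, or $r=3$ and $u_1=2$. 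I will handle the two cases by matching lengths through \eqref{eq:length}.

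\emph{Case $r=2$.} The derived code is $\bH_2^{u_1}$, of length $2^{u_1-1}$, so $u_1=t_1+1$. The length condition $2u_1+u_2=t+1=2t_1+1$ then gives $u_2=-1$, which is impossible.

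\emph{Case $r=3$, $u_1=2$.} The derived code is $\bH_2^{2,u_2}$, of length $2^{3+u_2}$, so $u_2=t_1-3$. This forces $t_1\geq3$; if $t_1=2$, the case is already excluded. The length condition $3u_1+2u_2+u_3=2t_1+1$ then gives $u_3=1$. So the only surviving candidate is the $\Z_8$-linear code $\bH_2^{2,t_1-3,1}$.

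To exclude this candidate I compare ranks, using Lemma \ref{lem:invariance}. By Theorem \ref{thm:rank} with $p=2$,
$$
\rank(H_2^{t_1,1})=2t_1+1+\binom{t_1}{2}=\frac{t_1^2+3t_1+2}{2}.
$$
By Theorem \ref{thm:rank-z8} with $(u_1,u_2,u_3)=(2,t_1-3,1)$, the first term of \eqref{eq:rank-z8} is $168/24=7$. The middle term is $(t_1-3)(t_1+4)/2$, and the last terms give $u_3+1=2$. Hence
$$
\rank(\bH_2^{2,t_1-3,1})=9+\frac{(t_1-3)(t_1+4)}{2}=\frac{t_1^2+t_1+6}{2}.
$$
The difference between the two ranks is $t_1-2\geq1$, so the codes are not equivalent, a contradiction.

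The main care point is applying Theorem \ref{thm:derivedZps} correctly in the binary case. There the derived code drops the last parameter rather than merging it into $u_{r-1}$. One must also check that the linearity criterion from Theorem \ref{thm:linZps} leaves exactly these two cases, so that no other reduced code can have a linear derived code of length $2^{t_1}$.
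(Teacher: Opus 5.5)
Your proof is correct and follows the paper's argument: Proposition~\ref{prop:candidates} and Theorem~\ref{thm:derivedZps} reduce to the cases $r=2$ or $(r,u_1)=(3,2)$, the first is excluded by the length count and the second by the rank difference $t_1-2\geq1$. The only difference is cosmetic: you fix $t_2=1$ from the outset, so the case $r=2$ gives $u_2=-1$, whereas the paper keeps $t_2$ general and derives $u_2=t_2-2\geq0$.
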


\begin{proof}
If $t_2\geq2$, then $H_2^{t_1,t_2}$ is equivalent to $\bH_2^{t_1+1,t_2-2}$, by
Theorem \ref{thm:KV}.  Conversely, let $H=H_2^{t_1,t_2}$, and let $2^t$ be
its length, where $t=2t_1+t_2-1$.  By Theorem \ref{thm:linZpZp2}, the code $H$
is nonlinear, since $t_1\geq2$.  Assume that $H$ is equivalent to a
$\Z_{2^s}$-linear Hadamard code of length $2^t$.  By Proposition
\ref{prop:candidates}, the code $H$ is equivalent to a nonlinear reduced code
$\bH_2^{u_1,\dots,u_r}$ of length $2^t$ whose derived code is linear and has
length $2^{t_1}$.  By Theorem \ref{thm:derivedZps} and Lemma
\ref{lem:invariance}, it follows that the code $\bH_2^{u_1,\dots,u_{r-1}}$ is
linear, and that its length is $2^{t_1}$.  By the discussion after Theorem
\ref{thm:derivedZps}, we have $r=2$, or $r=3$ and $u_1=2$.

Assume that $r=3$ and $u_1=2$.  By \eqref{eq:length}, the length of
$\bH_2^{2,u_2}$ is $2^{u_2+3}$.  Hence, $u_2=t_1-3$.  Moreover,
\eqref{eq:length} for $\bH_2^{2,u_2,u_3}$ gives $u_3=t_2$.  Then, by
\eqref{eq:rank-z8} and Theorem \ref{thm:rank}, we have
$$
\rank\bigl(\bH_2^{2,t_1-3,t_2}\bigr)=8+t_2+\frac{(t_1-3)(t_1+4)}{2}
=\rank\bigl(H_2^{t_1,t_2}\bigr)-(t_1-2).
$$
Since $t_1=u_2+3\geq3$, these two ranks are different, which contradicts
Lemma \ref{lem:invariance}.  Therefore, $r=2$.  By \eqref{eq:length}, the
code $\bH_2^{u_1}$ has length $2^{u_1-1}$, and hence $u_1=t_1+1$.  Since
$\bH_2^{u_1,u_2}$ has length $2^t$, \eqref{eq:length} gives
$2u_1+u_2=t+1=2t_1+t_2$, and hence $u_2=t_2-2$.  Since $u_2\geq0$, we obtain
$t_2\geq2$.
\end{proof}

Table \ref{tab:invariants-bin} compares the two codes $H_2^{t_1,t_2}$ and
$\bH_2^{2,t_1-3,t_2}$ which appear in the proof of Theorem \ref{thm:binary}.
They share all the invariants given by the kernel blocks.  The derived code of
$\bH_2^{2,t_1-3,t_2}$ is $\bH_2^{2,t_1-3}$, which is linear and has length
$2^{t_1}$.  Hence, it is equivalent to $\bH_2^{t_1+1}$, by Proposition
\ref{prop:reduction}.

\begin{table}[!ht]
\small
\renewcommand{\arraystretch}{1.3}
\centering
\caption{The invariants of $H_2^{t_1,t_2}$ and $\bH_2^{2,t_1-3,t_2}$, for
$t_1\geq3$.}
\label{tab:invariants-bin}
\vspace{1.5mm}
\begin{tabular}{|l|c|c|}
\hline
& $H_2^{t_1,t_2}$ & $\bH_2^{2,t_1-3,t_2}$\\
\hline\hline
length & $2^{2t_1+t_2-1}$ & $2^{2t_1+t_2-1}$\\ \hline
dimension of the kernel & $t_1+t_2$ & $t_1+t_2$\\ \hline
number of kernel blocks & $2^{t_1+t_2-1}$ & $2^{t_1+t_2-1}$\\ \hline
size of the kernel blocks & $2^{t_1}$ & $2^{t_1}$\\ \hline
derived code & $\bH_2^{t_1+1}$ & $\bH_2^{t_1+1}$\\ \hline
kernel-block rank profile & $t_1+1$, repeated $2^{t_1+t_2-1}$ times
& $t_1+1$, repeated $2^{t_1+t_2-1}$ times\\ \hline
rank & $2t_1+t_2+\binom{t_1}{2}$ & $2t_1+t_2+\binom{t_1}{2}-(t_1-2)$\\
\hline
\end{tabular}
\end{table}

For $t_2=1$, Theorem \ref{thm:binary} answers the question of
\cite{EquivZ2s} for every even $t\geq4$, as follows.

\begin{corollary}\label{cor:EquivZ2s}
Let $t\geq4$ be even.  Then, the $\Z_2\Z_4$-linear Hadamard code
$H_2^{t/2,1}$ is not equivalent to any $\Z_{2^s}$-linear Hadamard code of
length $2^t$, for any $s\geq1$.
\end{corollary}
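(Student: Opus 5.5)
The plan is to deduce the corollary directly from Theorem \ref{thm:binary} by specializing to $t_2=1$. Since $t\geq4$ is even, we put $t_1=t/2$, so that $t_1\geq2$. We take $t_2=1$. By \eqref{eq:t}, the code $H_2^{t_1,1}$ has length $2^{2t_1+1-1}=2^t$. It is therefore a code of the family considered in Theorem \ref{thm:binary}, of length $2^t$, with $t_1\geq2$ and $t_2\geq1$, so all hypotheses of that theorem hold.

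The theorem says that $H_2^{t_1,t_2}$ is equivalent to a $\Z_{2^s}$-linear Hadamard code of the same length, for some $s\geq1$, if and only if $t_2\geq2$. Here $t_2=1$, so the condition fails. Hence $H_2^{t/2,1}$ is not equivalent to any $\Z_{2^s}$-linear Hadamard code of length $2^t$, for any $s\geq1$. This is exactly the statement.

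The only point needing care is that no length restriction is lost. Equivalence preserves length, so a code of a different length could never be equivalent to $H_2^{t/2,1}$, and the phrase ``of the same length'' in Theorem \ref{thm:binary} costs nothing. I would also remark that $H_2^{t/2,1}$ is nonlinear by Theorem \ref{thm:linZpZp2}, since $t_1\geq2$, which is why the linear code $\bH_2^{t+1}$ is excluded as well; this is already built into the proof of Theorem \ref{thm:binary}. There is no real obstacle here, since all the work is done in Theorem \ref{thm:binary}.
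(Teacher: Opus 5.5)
Your proposal is correct and matches the paper's proof, which likewise specializes Theorem~\ref{thm:binary} to $t_1=t/2\geq2$ and $t_2=1$. Your additional remarks on length and nonlinearity are accurate, but, as you note, they are already handled inside the proof of Theorem~\ref{thm:binary}.
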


\begin{proof}
Since $t/2\geq2$, the statement follows from Theorem \ref{thm:binary} with
$t_1=t/2$ and $t_2=1$.
\end{proof}

From the point of view of the $\Z_{2^s}$-linear Hadamard codes, we obtain the
following characterization.

\begin{corollary}\label{cor:Z4}
A nonlinear $\Z_{2^s}$-linear Hadamard code is equivalent to a code
$H_2^{t_1,t_2}$ if and only if it is equivalent to a $\Z_4$-linear Hadamard
code.
\end{corollary}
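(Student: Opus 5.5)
Both implications will be obtained by combining Theorem \ref{thm:binary} with Theorem \ref{thm:KV} and Proposition \ref{prop:reduction}. Let $\bH$ be a nonlinear $\Z_{2^s}$-linear Hadamard code of length $2^t$. Since $\bH$ is nonlinear, $s\geq2$, and Proposition \ref{prop:reduction} lets us replace $\bH$ by a nonlinear reduced code $\bH_2^{u_1,\dots,u_r}$ of the same length. Both conditions in the statement are invariant under equivalence, so it suffices to treat this reduced code.

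For the ``if'' direction, assume $\bH$ is equivalent to a $\Z_4$-linear Hadamard code $\bH_2^{u_1,u_2}$. By Lemma \ref{lem:invariance}, the latter is nonlinear, so $u_1\geq3$ by Theorem \ref{thm:linZps}. Put $t_1=u_1-1\geq2$ and $t_2=u_2+2\geq2$. Then Theorem \ref{thm:KV} gives that $H_2^{t_1,t_2}$ is equivalent to $\bH_2^{t_1+1,t_2-2}=\bH_2^{u_1,u_2}$, and hence to $\bH$. If the $\Z_4$-linear code arises only as some $\Z_4$-linear code rather than one of the listed codes $\bH_2^{u_1,u_2}$, we first apply Proposition \ref{prop:reduction} with $s=2$. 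This yields a reduced code with $r=2$, since that proposition produces codes $\bH_2^{\dots}$ over the same ring by the construction in \cite{FullClass}. I would double-check this point; if it is unclear, I would invoke \cite{KernelZ2s,Krotov:2007} so that the $\Z_4$-linear code may be assumed to be some $\bH_2^{u_1,u_2}$ with $u_1\geq3$.

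For the ``only if'' direction, assume $\bH$ is equivalent to some $H_2^{t_1,t_2}$. By Lemma \ref{lem:invariance}, $H_2^{t_1,t_2}$ is nonlinear, so $t_1\geq2$ by Theorem \ref{thm:linZpZp2}. Since $H_2^{t_1,t_2}$ is equivalent to a $\Z_{2^s}$-linear Hadamard code of the same length, Theorem \ref{thm:binary} forces $t_2\geq2$. Then Theorem \ref{thm:KV} gives that $H_2^{t_1,t_2}$, and hence $\bH$, is equivalent to the $\Z_4$-linear Hadamard code $\bH_2^{t_1+1,t_2-2}$.

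The only delicate step is interpreting ``$\Z_4$-linear Hadamard code'' in the first direction as one of the codes $\bH_2^{u_1,u_2}$. By the convention fixed in Section \ref{sec:intro}, the $\Z_{p^s}$-linear GH codes are exactly the codes $\bH_p^{t_1,\dots,t_s}$, which settles this. Nonlinearity then gives $u_1\geq3$, which makes $t_1=u_1-1\geq2$ and so places the code within the range of Theorem \ref{thm:KV}.
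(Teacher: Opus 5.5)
Your proof is correct and matches the paper's argument. In one direction, nonlinearity gives $t_1\geq2$, Theorem~\ref{thm:binary} forces $t_2\geq2$, and Theorem~\ref{thm:KV} yields $\bH_2^{t_1+1,t_2-2}$. In the other, nonlinearity gives $u_1\geq3$, and Theorem~\ref{thm:KV} with $(t_1,t_2)=(u_1-1,u_2+2)$ finishes.

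The preliminary replacement of $\bH$ by a reduced code is unnecessary. Your aside that Proposition~\ref{prop:reduction} stays over the same ring is not true in general, since the reduction can lower the number of parameters. This is harmless, because, as you note, the convention of Section~\ref{sec:intro} already makes every $\Z_4$-linear Hadamard code some $\bH_2^{u_1,u_2}$.
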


\begin{proof}
Let $\bH$ be a nonlinear $\Z_{2^s}$-linear Hadamard code.  Assume that $\bH$ is
equivalent to a code $H_2^{t_1,t_2}$.  By Lemma \ref{lem:invariance}, this code
is nonlinear, and hence $t_1\geq2$, by Theorem \ref{thm:linZpZp2}.  By Theorem
\ref{thm:binary}, we have $t_2\geq2$.  Hence, by Theorem \ref{thm:KV}, the code
$\bH$ is equivalent to the $\Z_4$-linear Hadamard code $\bH_2^{t_1+1,t_2-2}$.
Conversely, assume that $\bH$ is equivalent to a $\Z_4$-linear Hadamard code
$\bH_2^{u_1,u_2}$.  By Lemma \ref{lem:invariance}, this code is nonlinear, and
hence $u_1\geq3$, by Theorem \ref{thm:linZps}.  By Theorem \ref{thm:KV}, with
$t_1=u_1-1$ and $t_2=u_2+2$, the code $\bH_2^{u_1,u_2}$ is equivalent to
$H_2^{u_1-1,u_2+2}$.
\end{proof}

We now count the equivalence classes for $p=2$.

\begin{corollary}\label{cor:count-bin}
Let $t\geq3$.  Then, the codes $H_2^{t_1,t_2}$ of length $2^t$, together with
the $\Z_{2^s}$-linear Hadamard codes of length $2^t$ with $s\geq1$, form
exactly $\cP(t+1)-\cP(t)$ equivalence classes if $t$ is even, and
$\cP(t+1)-\cP(t)-1$ equivalence classes if $t$ is odd.  A complete system of
representatives is formed by the linear code $\bH_2^{t+1}$, the nonlinear
reduced codes of length $2^t$, and, if $t$ is even, the code $H_2^{t/2,1}$.
\end{corollary}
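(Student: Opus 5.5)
We start from the classification of the $\Z_{2^s}$-linear Hadamard codes. By Theorem \ref{thm:classZps}, for $t\geq3$, the linear code $\bH_2^{t+1}$ and the nonlinear reduced codes of length $2^t$ form a complete system of representatives for the $\Z_{2^s}$-linear Hadamard codes of length $2^t$, and there are $\cP(t+1)-\cP(t)-1$ such classes. It then remains to find which codes $H_2^{t_1,t_2}$ of length $2^t$ fall outside these classes. By Corollary \ref{cor:classZpZp2}, these codes are exactly the codes $H_2^{t_1,t+1-2t_1}$ with $1\leq t_1\leq\lfloor t/2\rfloor$, and they are pairwise nonequivalent.

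We split these codes into three types.

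\textbf{First type: $t_1=1$.} The code $H_2^{1,t-1}$ is linear by Theorem \ref{thm:linZpZp2}. Being a Hadamard code of length $2^t$ with $2^{t+1}$ codewords, it is a linear code of dimension $t+1$. Its minimum distance is $2^{t-1}$, so it is the first order Reed--Muller code. Hence it is equivalent to $\bH_2^{t+1}$. For a self-contained argument I would instead use Theorem \ref{thm:KV}: since $t\geq3$, we have $t_2=t-1\geq2$, so $H_2^{1,t-1}$ is equivalent to $\bH_2^{2,t-3}$, which is linear by Theorem \ref{thm:linZps}. In either case this code adds no new class.

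\textbf{Second type: $t_1\geq2$ and $t_2=t+1-2t_1\geq2$.} By Theorem \ref{thm:KV}, $H_2^{t_1,t_2}$ is equivalent to the $\Z_4$-linear code $\bH_2^{t_1+1,t_2-2}$. So it lies in a class already counted.

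\textbf{Third type: $t_2=1$.} This requires $t+1-2t_1=1$, that is, $t=2t_1$. It occurs only when $t$ is even, and then $t_1=t/2\geq2$ since $t\geq4$. By Corollary \ref{cor:EquivZ2s} (equivalently Theorem \ref{thm:binary}), $H_2^{t/2,1}$ is not equivalent to any $\Z_{2^s}$-linear Hadamard code, so it contributes exactly one new class. When $t$ is odd, $t_2=t+1-2t_1$ is even and positive, hence $t_2\geq2$, and no code of the third type exists.

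Combining the three cases, the union has $\cP(t+1)-\cP(t)-1$ classes, plus one more class when $t$ is even. This gives the two stated counts, and the listed representatives form a complete system. The only delicate step is the first type, where the linear code must be matched with $\bH_2^{t+1}$; the route through Theorem \ref{thm:KV} and then Proposition \ref{prop:reduction} handles it cleanly.
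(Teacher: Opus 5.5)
Your proof is correct and follows essentially the paper's argument. You start from the classification of Theorem \ref{thm:classZps}, absorb every $H_2^{t_1,t+1-2t_1}$ with $t+1-2t_1\geq2$ into an existing class via Theorem \ref{thm:KV}, and add the single extra class of $H_2^{t/2,1}$ for even $t$ via Corollary \ref{cor:EquivZ2s}. Your separate case $t_1=1$ is not needed and is not delicate. Since $t\geq3$ gives $t_2=t-1\geq2$, Theorem \ref{thm:KV} already places that code in the class of a $\Z_4$-linear Hadamard code, which Theorem \ref{thm:classZps} covers. You therefore do not need Proposition \ref{prop:reduction} or the uniqueness of the first order Reed--Muller code.
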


\begin{proof}
By Theorem \ref{thm:classZps}, the $\Z_{2^s}$-linear Hadamard codes of length
$2^t$ form $\cP(t+1)-\cP(t)-1$ equivalence classes, represented by
$\bH_2^{t+1}$ and the nonlinear reduced codes of length $2^t$.  By Corollary
\ref{cor:classZpZp2}, the codes $H_2^{t_1,t_2}$ of length $2^t$ are the codes
$H_2^{t_1,t+1-2t_1}$ with $1\leq t_1\leq\lfloor t/2\rfloor$.  If
$t+1-2t_1\geq2$, then $H_2^{t_1,t+1-2t_1}$ is equivalent to a
$\Z_4$-linear Hadamard code, by Theorem \ref{thm:KV}.  Hence, it lies in one
of the previous classes.  Otherwise, we have $t+1-2t_1=1$.  Then, $t$ is even,
and $t_1=t/2\geq2$.  By Corollary \ref{cor:EquivZ2s}, the code $H_2^{t/2,1}$
is not equivalent to any $\Z_{2^s}$-linear Hadamard code.  Therefore, it
gives one more class when $t$ is even, and the statement follows.
\end{proof}

Table \ref{tab:count} gives the numbers of equivalence classes of Corollaries
\ref{cor:count} and \ref{cor:count-bin} for $3\leq t\leq14$.  For odd $p$, the
row $\lfloor t/2\rfloor$ gives the number of classes of the codes
$H_p^{t_1,t_2}$, and the row $\cP(t+1)-\cP(t)$ gives the number of classes of
the $\Z_{p^s}$-linear GH codes.  These numbers do not depend on the odd prime
$p$.

\begin{table}[!ht]
\small
\setlength{\tabcolsep}{4.5pt}
\renewcommand{\arraystretch}{1.15}
\centering
\caption{The number of equivalence classes of Corollaries \ref{cor:count} and
\ref{cor:count-bin}.}
\label{tab:count}
\vspace{1.5mm}
\begin{tabular}{|c||c|c|c|c|c|c|c|c|c|c|c|c|}
\hline
$t$ & $3$ & $4$ & $5$ & $6$ & $7$ & $8$ & $9$ & $10$ & $11$ & $12$ & $13$
& $14$\\
\hline\hline
$\lfloor t/2\rfloor$ & 1 & 2 & 2 & 3 & 3 & 4 & 4 & 5 & 5 & 6 & 6 & 7\\
\hline
$\cP(t+1)-\cP(t)$ & 2 & 2 & 4 & 4 & 7 & 8 & 12 & 14 & 21 & 24 & 34 & 41\\
\hline\hline
total, $p$ odd & 3 & 4 & 6 & 7 & 10 & 12 & 16 & 19 & 26 & 30 & 40 & 48\\
\hline
total, $p=2$ & 1 & 2 & 3 & 4 & 6 & 8 & 11 & 14 & 20 & 24 & 33 & 41\\
\hline
\end{tabular}
\end{table}

\begin{example}\label{ex:binary}
Let $t=6$.  The codes $H_2^{t_1,t_2}$ of length $2^6$ are $H_2^{1,5}$,
$H_2^{2,3}$ and $H_2^{3,1}$.  By Theorem \ref{thm:KV}, the first two codes are
equivalent to the $\Z_4$-linear Hadamard codes $\bH_2^{2,3}$ and
$\bH_2^{3,1}$, respectively.  By Theorem \ref{thm:classZps}, the
$\Z_{2^s}$-linear Hadamard codes of length $2^6$ are represented by the linear
code $\bH_2^{7}$ and by the nonlinear reduced codes $\bH_2^{3,1}$ and
$\bH_2^{2,0,1}$.  The code $H_2^{3,1}$ has kernel dimension $4$.  Among the
nonlinear reduced codes, only $\bH_2^{2,0,1}$ has the same kernel dimension,
since $\bH_2^{3,1}$ has kernel dimension $5$.  By Theorems \ref{thm:blocks}
and \ref{thm:derived}, the code $H_2^{3,1}$ has $8$ kernel blocks of size
$8$, and its derived code is $\bH_2^{4}$.  By Theorem \ref{thm:derivedZps} and
Proposition \ref{prop:reduction}, the same holds for $\bH_2^{2,0,1}$.
However, the ranks of $H_2^{3,1}$ and $\bH_2^{2,0,1}$ are $10$ and $9$,
respectively, by Theorems \ref{thm:rank} and \ref{thm:rank-z8}.  Hence,
$H_2^{3,1}$ is not equivalent to any $\Z_{2^s}$-linear Hadamard code, and
there are $3+1=4$ equivalence classes.  This agrees with Corollary
\ref{cor:count-bin}, since $t$ is even and $\cP(7)-\cP(6)=15-11=4$.
\end{example}

Figure \ref{fig:classes} shows the equivalence classes of the codes
$H_p^{t_1,t_2}$ and of the $\Z_{p^s}$-linear GH codes of length $p^8$.  For odd
$p$, the two families have no common class, by Theorem \ref{thm:main}.  For
$p=2$, the linear class and the classes of the nonlinear $\Z_4$-linear Hadamard
codes are common to both families, by Theorem \ref{thm:KV} and Proposition
\ref{prop:reduction}.  The class of $H_2^{4,1}$ contains no $\Z_{2^s}$-linear
Hadamard code, by Corollary \ref{cor:EquivZ2s}.  Finally, by Theorem
\ref{thm:classZps} and Corollary \ref{cor:Z4}, the nonlinear reduced codes with
at least three parameters are not equivalent to any code $H_2^{t_1,t_2}$.

\begin{figure}[!ht]
\centering
\begin{tikzpicture}[font=\small]
\fill[blue!7] (0,0) ellipse (1.7cm and 1.9cm);
\fill[red!7] (5.6,0) ellipse (2.9cm and 1.9cm);
\draw (0,0) ellipse (1.7cm and 1.9cm);
\draw (5.6,0) ellipse (2.9cm and 1.9cm);
\node at (0,2.3) {$\Z_p\Z_{p^2}$-linear};
\node at (5.6,2.3) {$\Z_{p^s}$-linear};
\node[align=center] at (0,0)
  {$H_p^{1,7}$\\[5pt] $H_p^{2,5}$\\[5pt] $H_p^{3,3}$\\[5pt] $H_p^{4,1}$};
\node[align=center] at (4.6,0)
  {$\bH_p^{9}$\\[5pt] $\bH_p^{2,5}$\\[5pt] $\bH_p^{3,3}$\\[5pt] $\bH_p^{4,1}$};
\node[align=center] at (6.7,0)
  {$\bH_p^{2,0,3}$\\[5pt] $\bH_p^{2,1,1}$\\[5pt] $\bH_p^{3,0,0}$\\[5pt] $\bH_p^{2,0,0,1}$};
\node at (3.4,-2.5) {(a) $p$ odd};
\begin{scope}[xshift=1.2cm,yshift=-5.7cm]
\fill[blue!7] (0,0) ellipse (2.8cm and 1.9cm);
\fill[red!7] (3.6,0) ellipse (3.6cm and 1.9cm);
\begin{scope}
\clip (0,0) ellipse (2.8cm and 1.9cm);
\fill[violet!14] (3.6,0) ellipse (3.6cm and 1.9cm);
\end{scope}
\draw (0,0) ellipse (2.8cm and 1.9cm);
\draw (3.6,0) ellipse (3.6cm and 1.9cm);
\node at (-0.7,2.3) {$\Z_2\Z_4$-linear};
\node at (4.5,2.3) {$\Z_{2^s}$-linear};
\node at (-1.5,0) {$H_2^{4,1}$};
\node[align=center] at (1.4,0)
  {$H_2^{1,7}\sim\bH_2^{9}$\\[5pt] $H_2^{2,5}\sim\bH_2^{3,3}$\\[5pt]
   $H_2^{3,3}\sim\bH_2^{4,1}$};
\node[align=center] at (4.5,0) {$\bH_2^{2,0,3}$\\[5pt] $\bH_2^{2,1,1}$};
\node[align=center] at (6.1,0) {$\bH_2^{3,0,0}$\\[5pt] $\bH_2^{2,0,0,1}$};
\node at (2.2,-2.5) {(b) $p=2$};
\end{scope}
\end{tikzpicture}
\caption{The equivalence classes of the codes $H_p^{t_1,t_2}$ (left) and of
the $\Z_{p^s}$-linear GH codes (right) of length $p^8$.  Codes joined by
$\sim$ are equivalent.}
\label{fig:classes}
\end{figure}

\FloatBarrier

\section{Conclusions and further research}\label{sec:conclusions}

We have compared the $\Z_p\Z_{p^2}$-linear GH codes $H_p^{t_1,t_2}$ with the
$\Z_{p^s}$-linear GH codes, for every prime $p$.  For odd $p$, the codes
$H_p^{t_1,t_2}$ are not equivalent to any $\Z_{p^s}$-linear GH code.  This
answers, for every length, the question raised in
\cite{ZpZp2Construction,ZpZp2Classification}.  For $p=2$, a nonlinear code
$H_2^{t_1,t_2}$ is equivalent to a $\Z_{2^s}$-linear Hadamard code if and only
if $t_2\geq2$.  In particular, for every even $t\geq4$, the code $H_2^{t/2,1}$
is not equivalent to any $\Z_{2^s}$-linear Hadamard code, which answers the
question raised in \cite{EquivZ2s}.  As a consequence, we have obtained the
number of equivalence classes of the union of both families, for every prime
$p$.

In both cases, the proof combines two invariants.  First, the derived code
determines, up to equivalence, the $\Z_{p^s}$-linear GH codes which could be
equivalent to a nonlinear code $H_p^{t_1,t_2}$.  For odd $p$, the only such
code is $\bH_p^{t_1,t_2}$ if $t_1\geq2$, and there is none if $t_1=1$.  For
$p=2$, there are at most two such codes, namely $\bH_2^{t_1+1,t_2-2}$, when
$t_2\geq2$, and $\bH_2^{2,t_1-3,t_2}$, when $t_1\geq3$.  Second, the rank
separates these codes, except for the equivalent codes of Theorem \ref{thm:KV}.
For this purpose, we have determined the ranks of the codes $H_p^{t_1,t_2}$ and
of all the $\Z_{p^2}$-linear GH codes $\bH_p^{t_1,t_2}$, for every prime $p$.
For odd $p$, the rank of $H_p^{t_1,t_2}$ depends on the Bernoulli numbers
modulo $p$, while the rank of $\bH_p^{t_1,t_2}$ does not.

We end with two directions for further research.  First, for $p=2$, every
Hadamard code which is the Gray map image of a $\Z_{2^s}$-additive code is
equivalent to a code $\bH_2^{t_1,\dots,t_s}$ \cite{KernelZ2s,Krotov:2007}.
Moreover, every $\Z_2\Z_4$-linear Hadamard code of length $2^t$, with $t\geq3$
and $\alpha_1\neq0$, is equivalent to a code $H_2^{t_1,t_2}$, by \cite{PRV06}
and Corollary \ref{cor:classZpZp2}.  For odd $p$, it is not known whether every
GH code which is the Gray map image of a $\Z_{p^s}$-additive code is equivalent
to a code $\bH_p^{t_1,\dots,t_s}$.  It is also not known whether every
nonlinear GH code which is the Gray map image of a $\Z_p\Z_{p^2}$-additive code
with $\alpha_1\neq0$ is equivalent to a code $H_p^{t_1,t_2}$.  Positive answers
would show that, for odd $p$, no such nonlinear code is equivalent to the Gray
map image of a $\Z_{p^s}$-additive code.  Second, the ranks of the
$\Z_{p^s}$-linear GH codes with $s\geq3$ are not known for odd $p$.  Our method
describes the digit of index $1$ of the codewords by polynomial functions over
$\Z_p$.  It would be interesting to extend it to the digits of higher index,
and to see whether the Bernoulli numbers modulo $p$ appear again.

{\small
\bibliographystyle{abbrv}
\bibliography{ref}
}

\end{document}